\newcommand\figwidth{0.9}
\newtheorem{theorem}{Theorem}
\newtheorem{lemma}{Lemma}
\newtheorem{definition}{Definition}
\newtheorem{corollary}{Corollary}
\newtheorem{remark}{Remark}
\newenvironment{subtheorem}[1]{%
  \def\subtheoremcounter{#1}%
  \refstepcounter{#1}%
  \protected@edef\theparentnumber{\csname the#1\endcsname}%
  \setcounter{parentnumber}{\value{#1}}%
  \setcounter{#1}{0}%
  \expandafter\def\csname the#1\endcsname{\theparentnumber\alph{#1}}%
  \expandafter\def\csname theH#1\endcsname{thm.\theparentnumber\alph{#1}}%
  \unskip\ignorespaces
}{%
  \setcounter{\subtheoremcounter}{\value{parentnumber}}%
  \ignorespacesafterend
}
\newcounter{parentnumber}
\newtheorem{conjecture}{Conjecture}
\title{\LARGE \bf
Multipliers for forced Lurye systems  with slope-restricted nonlinearities}
\author{William P. Heath,~\IEEEmembership{Member,~IEEE,} Sayar Das and Joaquin Carrasco
\thanks{William P. Heath is with the School of Computer Science and Engineering, Bangor University, UK, 
       {\tt\small w.heath@bangor.ac.uk}
       
       Sayer Das and Joaquin Carrasco are with the Control Systems and Robotics Group, Department of Electrical and Electronic Engineering, University of Manchester, UK, {\tt\small sayar.das@postgrad.manchester.ac.uk,}
        {\tt\small joaquin.carrascogomez@manchester.ac.uk}}%
        }
\begin{document}

\maketitle
\thispagestyle{empty}
\pagestyle{empty}

\begin{abstract}

Dynamic multipliers can be used to guarantee the stability of Lurye systems with slope-restricted nonlinearities, but give no guarantee that the closed-loop system has finite incremental gain. We show that multipliers guarantee the closed-loop power gain to be bounded and quantifiable. Power may be measured about an appropriate steady state bias term, provided the multiplier does not require the nonlinearity to be odd. Hence dynamic multipliers can be used to guarantee such  Lurye systems have low sensitivity to noise, provided other exogenous signals have constant steady state. For periodic excitation,  the closed-loop response can apparently have a subharmonic or chaotic response. We revisit a class of multipliers that can guarantee a unique, attractive and period-preserving solution.  We show the multipliers can be derived using classical tools and reconsider assumptions required for their application.  Their phase limitations are inherited from those of discrete-time multipliers. The multipliers cannot be used at all frequencies unless the circle criterion can also be applied; this is consistent with known results about dynamic multipliers and incremental stability.

\end{abstract}

\begin{IEEEkeywords}
Periodic systems, absolute stability, Lurye (or Lur'e) systems, multiplier theory, frequency domain, chaos
\end{IEEEkeywords}


\section{Introduction}
The O'Shea-Zames-Falb (OZF) multipliers preserve the positivity of bounded monotone nonlinearities and hence can be used to guarantee the absolute stability (with bounded input-output $\mathcal{L}_2$ gain) of Lurye systems with slope-restricted nonlinearities \cite{Turner12,Veenman14,Bertolin25}. They were proposed by O'Shea \cite{OShea67} and formalised by Zames and Falb \cite{Zames68}. Modern tools can search for suitable multipliers and give an upper bound on the $\mathcal{L}_2$ gain \cite{Megretski04,Kao04,Turner12,Veenman16}. They remain the widest known class of multiplier for slope-restricted nonlinearities \cite{Carrasco:13}. 
Their discrete-time counterparts were formalised by Willems and Brockett \cite{Willems68}. 
Efficient searches for discrete-time multipliers are proposed in \cite{Wang:TAC}.

It has been argued in the literature, both specifically with respect to Lurye systems \cite{Kulkarni:2002a,Waitman17} and more generally \cite{Zames66a,Fromion96,Angeli02,Sepulchre22,Chaffey23}, that emphasis should be given to finite incremental gain. In particular for forced systems, if exogenous signals are not in $\mathcal{L}_2$ then desirable properties one might infer for linear systems do not necessarily carry over to nonlinear systems. Zames \cite{Zames66a} argues
 that a definition of closed-loop stability should require both continuity and boundedness, inter alia so that outputs are not ``critically sensitive to small changes in inputs --- changes such as those caused by noise''; Theorem 3 in \cite{Zames66a} gives conditions on the incremental positivity of the loop elements that are sufficient to achieve this.
Similarly it was known at the time \cite{Brockett66a} that, for Lurye systems with slope-restricted nonlinearites, the circle criterion could be used to guarantee ``the existence of unique steady-state oscillations (= absence of ``jump phenomena'' and subharmonics) in forced nonlinear feedback systems''. It was subsequently established \cite{Kulkarni:2002a} that dynamic multipliers do not, in general, preserve the incremental positivity of nonlinearities.
As noted in \cite{Kulkarni:2002a} there is some irony that the definition of stability used by Zames and Falb in \cite{Zames68} does {\em not} require  closed-loop continuity.

There remain open questions, for Lurye systems with slope-restricted nonlinearities,  about what behaviour is and is not guaranteed by the OZF multipliers when the exogenous signal is a power signal outside $\mathcal{L}_2$. In this paper we consider three classes of exogenous signals:
\begin{itemize}
    \item small power signals (for example noise);
    \item signals with constant steady state (for example Heaviside step signals);
    \item periodic signals (for example sine waves).
\end{itemize}

Certainly lack of finite incremental gain can lead to undesirable effects.
An example  of  such  a Lurye system where finite-gain stability is guaranteed but small changes in input can lead to significant changes in output is given in \cite{Fromion04}. It is shown in \cite{Fromion04} that this example has at least two attractive limit cycles when the excitation is periodic. We show further that this example with a different nonlinearity may lead to a subharmonic response. We discuss an additional discrete-time example in this paper which may have a chaotic response to periodic excitation.

However, we argue that finite incremental gain is not necessary to ensure insensitivity to noise signals. In particular the existence of a suitable OZF multiplier can be used to guarantee that such a Lurye system is insensitive to noise for a wide class of exogenous signal. This is timely in that OZF multipliers have recently been proposed for the design of control systems with saturation \cite{Veenman14,Bertolin22}. In addition, we show that, provided we do not exploit any oddness of the nonlinearity, the existence of a suitable OZF multiplier guarantees a unique steady-state input-output map.

We observe more generally that finite-gain stability ensures small power noise input leads to small power output when all other exogenous signals are in $\mathcal{L}_2$. 
 We define a notion of finite-gain offset stability 
 and observe similarly that finite-gain offset stability ensures small power noise input leads to small power output when all other exogenous signals are bias signals. Such properties can be guaranteed for Lurye systems with slope-restricted nonlinearities when there is a suitable OZF multiplier. 
 




With respect to periodic signals, Altshuller \cite{Altshuller:11,Altshuller:13} defines a subclass of OZF multipliers 
that can be used to guarantee  such  a Lurye system, subject to a non-zero exogenous signal with period $T$, has a unique solution also with period $T$ that is a global attractor. We will denote members of this subclass as the Altshuller multipliers.

We revisit the approach of Altshuller \cite{Altshuller:11,Altshuller:13}. His analysis requires the assumption that a periodic solution (not necessarily an attractor) exists \cite{Rasvan11}. We show that such an assumption is justified when the Lurye system has a state-space representation (either finite dimensional or delay-differential).
Altshuller shows that his multipliers preserve the positivity of periodic nonlinearities via the restrictive delay-IQC (integral quadratic constraint) approach. We show that result can be obtained straightforwardly via classical analysis or the IQC approach of \cite{Megretski97}; furthermore we show that no OZF multiplier outside Altshuller's subclass shares this property. We explore the relation between the Altshuller multipliers and both the continuous-time OZF multipliers and their discrete-time counterparts.  In particular we show that the Altshuller multipliers inherit the phase properties of the discrete-time OZF multipliers.  Finally, for a given Lurye system  with a slope-restricted nonlinearity, we show that there only exist suitable Altshuller multiplers for all periods  when the circle criterion can also be used to establish incremental stability.

While our development is for continuous-time single-input single-output systems, results can be straighforwardly generalised to both discrete-time and multivariable systems. One of the examples we discuss is discrete-time. Preliminary results for noise and bias signals were presented in \cite{Heath:24}.



The incremental stability properties of Lurye systems have been extensively studied in the literature. In \cite{MIRANDAVILLATORO2018}, contraction and p-dominance properties of Lurye systems are employed to analyse the existence and stability of attracting orbits.
Incremental versions of the integral quadratic constraint (IQC) framework are proposed in \cite{Jonsson03} and more recently \cite{Su2025}. In \cite{Jonsson03} incremental IQCs are used to establish existence, uniqueness and attractiveness of periodic orbits; then standard IQCs can be used to analyse robust performance. In \cite{Su2025} incremental properties of Lurye systems are analysed using a class of dynamic multipliers. However, as noted previously, the OZF class of multipliers does not preserve the incremental positivity of slope-restricted nonlinearities \cite{Kulkarni2002}. One possible remedy is to restrict the class of systems under consideration; for example, \cite{Drummond2025} investigates the incremental stability of Lurye interconnections between externally positive systems and the class of incremental gain systems. 

\section{Preliminaries}

 \subsection{Signals and systems}

 Let $\mathcal{L}_2$ be the space of finite energy Lebesgue integrable signals on $[0,\infty)$  with norm
 \begin{equation}
 \|y\| = \left ( \int_0^{\infty}y(t)^2 \,dt\right )^{\frac{1}{2}}.
 \end{equation}
 Let  $\mathcal{L}_{2e}$ be the corresponding extended space (see for example  \cite{desoer75}). The truncation $y_T\in\mathcal{L}_2$ of $y\in\mathcal{L}_{2e}$ is given by
 \begin{equation}
     y_T(t) = \left \{ \begin{array}{lll}y(t) & \text{for} & 0\leq t \leq T,\\
     0 & \text{for} & T<t.\end{array}\right .
 \end{equation}
 \begin{definition}
 Let $\mathcal{P}\subset\mathcal{L}_{2e}$ be the space of finite power locally Lebesgue integrable signals on $[0,\infty)$ with seminorm
 \begin{equation}\label{def_P}
 \|y\|_P = \left (\limsup_{T\rightarrow\infty}\frac{1}{T}\|y_T\|^2\right )^{\frac{1}{2}}.
 \end{equation}
 We say $y$ is a {\bf power signal} if $y\in\mathcal{P}$.
 Let $\mathds{1}\in\mathcal{P}$ be the Heaviside step function given by 
     $\mathds{1}(t)=1$ for all  $t>0$.
 Define the {\bf bias} $\bar{y}\in\mathbb{R}$ of  a signal $y\in\mathcal{P}$ as
\begin{equation}
    \bar{y} = 
    \arg \min_{\bar{y}\in\mathbb{R}}\|(y-\bar{y}\mathds{1})\|_P.
\end{equation}
We say $y$ is an $\mathcal{L}_2$-{\bf bias signal with bias} $\bar{y}$ if $\bar{y}$ is unique and $y-\bar{y}\mathds{1}\in\mathcal{L}_2$. 

A signal is {\bf periodic with period} $T$ if $y(t+T)=y(t)$ for all $t\geq 0$. Let $\mathcal{S}_T\subset\mathcal{P}$ be the class of signal that can be expressed as $y=y_1+y_2$ with $y_1$ periodic with period $T$ and $y_2\in\mathcal{L}_2$.

\end{definition}

\begin{remark}
The limit superior in (\ref{def_P}) does not appear in standard definitions of power (e.g. \cite{vidyasagar93,Zhou96}) but is necessary to ensure $\mathcal{P}$ is a vector space \cite{Partington04,Mari96}\footnote{We are grateful to Andrey Kharitenko for this observation.}.
\end{remark}

 A map $\boldsymbol{H}:\mathcal{L}_{2e}\rightarrow\mathcal{L}_{2e}$ is stable if $u\in\mathcal{L}_2$  implies $\boldsymbol{H}(u)\in\mathcal{L}_2$. It is finite-gain stable (FGS) if there is some $h<\infty$ such that $\|\boldsymbol{H}(u)\|\leq h \|u\|$ for all $u \in \mathcal{L}_2$. Its gain is the smallest such~$h$.

\begin{remark}\label{rem:ic1}
 Our definition of finite-gain stability carries the assumption that we have zero initial conditions.
 Non-zero initial conditions can be accommodated provided they can be represented with a nonlinear state-space description that is reachable and uniformly observable \cite{vidyasagar93}. 
 \end{remark}

\begin{definition}\label{def:offset}
Let $\boldsymbol{H}:\mathcal{L}_{2e}\rightarrow\mathcal{L}_{2e}$. We say $\boldsymbol{H}$ is {\bf offset stable} if there is some function $H_0:\mathbb{R}\rightarrow\mathbb{R}$ such that if $u$ is an $\mathcal{L}_2$-bias signal with bias $\bar{u}$ then $\boldsymbol{H}(u)$ is an $\mathcal{L}_2$-bias signal with bias $H_0(\bar{u})$.
We call $H_0$ the steady state map of $\boldsymbol{H}$.
Define $\boldsymbol{H_{\bar{u}}}:\mathcal{L}_{2e}\rightarrow\mathcal{L}_{2e}$ as
\begin{equation}
    \boldsymbol{H_{\bar{u}}}(u)=\boldsymbol{H}(u+\bar{u}\mathds{1})-H_0(\bar{u}).
\end{equation}
It follows that $\boldsymbol{H}$ is offset stable if $\boldsymbol{H_{\bar{u}}}$ is stable for all $\bar{u}\in\mathbb{R}$.
We say $\boldsymbol{H}$ is {\bf finite-gain offset stable} (FGOS) if there is some $h<\infty$ such that $\boldsymbol{H_{\bar{u}}}$ is FGS  with gain less than or equal to $h$ for all $\bar{u}\in\mathbb{R}$. We call the minimum such $h$ the {\bf offset gain} of $\boldsymbol{H}$.
\end{definition}

 \subsection{Continuity}
 \begin{definition}[\cite{Zames66a}]
 The {\bf incremental gain} of $H:\mathcal{L}_{2e}\rightarrow\mathcal{L}_{2e}$ is the supremum of 
 $\|(H(x))_T-(H(y))_T\| / \|x_T-y_T\|$
 over all $x,y\in\mathcal{L}_{2e}$ and all $T>0$ for which $\|x_T-y_T\|\neq 0$.
 \end{definition}
 Finite-gain stability does not guarantee finite incremental gain: if $H$ is FGS and $u_1,u_2\in\mathcal{L}_2$ the ratio $R=\|H(u_1)-H(u_2)\|/\|u_1-u_2\|$ may be arbitrarily large. In particular, suppose $v_1,v_2$ are power signals with $v_1-v_2\in\mathcal{L}_2$ but $v_1,v_2\notin\mathcal{L}_2$ and suppose $H$ is FGS but $H(v_1)-H(v_2)\notin\mathcal{L}_2$. Let $u_1,u_2$ be the truncations $u_1=(v_1)_T$ and $u_2=(v_2)_T$. Then $R\rightarrow\infty$ as $T\rightarrow\infty$. In \cite{Heath:24} we discussed a discrete-time example that is FGS where specific inputs $v_1,v_2$ satisfy $v_1-v_2\in \ell_2$ but $H(v_1)-H(v_2)\notin \ell_2$.

 As noted in the Introduction, dynamic multipliers do not, in general, preserve the incremental positivity of nonlinearities \cite{Kulkarni:2002a}. An example of a Lurye system where multipliers guarantee finite gain stability but where continuity of the input-output map is lost  is given in \cite{Fromion04} and discussed further below.

\subsection{State space stability}

We follow the terminology of \cite{Yoshizawa66} for systems whose dynamics are driven by ordinary differential equations  
with state $x(t;t_0,x_0)\in\mathbb{R}^n$, with initial condition $x(t_0)=x_0$ and where
\begin{equation}
    \frac{dx}{dt}=F(t,x)
\end{equation}
for some $F:\mathbb{R}^+\times\mathbb{R}^n\rightarrow\mathbb{R}^n$ with $F(t,0)=0$ for all $t\geq t_0$ and $F$ continuous on the solution space. Let $|x(t)|$ be the 2-norm of $x(t)$.
\begin{definition}
        The {\bf zero solution} is the solution $x(t)=0$ for all $t\geq t_0$ when $x(t_0)=0$.
        Solutions are {\bf uniform-bounded} if for any $\alpha>0$ and $t_0\in[0,\infty)$ there exists a $\beta(\alpha)>0$ such that if $|x_0|\leq \alpha$, $|x(t;x_0,t_0)|<\beta(\alpha)$ for all $t\geq t_0$.
        Solutions are {\bf ultimately bounded for bound} $B$ if for every solution $x(\cdot;x_0,t_0)$ there exists a $t_s>0$ such that $|x(t;x_0,t_0)|<B$ for all $t\geq t_0+t_s$.
        Solutions are {\bf equi-ultimately bounded for bound }$B$ if for any $\alpha>0$ and $t_0\geq 0$ there exists a $t_s(t_0,\alpha)$ such that if $|x_0|<\alpha$ then $|x(t;t_0,x_0)|<B$ for all $t\geq t_0+t_s(t_0,\alpha)$.
        Solutions are {\bf uniform-ultimately bounded for bound }$B$ if they are equi-ultimately bounded for bound $B$ with $t_s$ independent of $t_0$.
        The zero solution is {\bf stable} if for any $\varepsilon>0$ and any $t_0\in [0,\infty)$ there exists a $\delta(t_0,\varepsilon)>0$ such that if $|x_0|<\delta(t_0,\varepsilon)$ we have $|x(t;t_0,x_0)|<\varepsilon$ for all $t\geq t_0$.
        The zero solution is {\bf uniform-stable} if it is stable with $\delta$ independent of $t_0$.
        The zero solution is {\bf asymptotically stable in the large} if it is stable and if every solution tends to zero as $t\rightarrow\infty$.
        The zero solution is {\bf quasi-uniform-asymptotically stable in the large} if for any $\alpha>0$, any $\varepsilon>0$ and any $t_0\in [0,\infty)$ there exists a $t_s(\varepsilon,\alpha)>0$ such that for any $t_0\in [0,\infty)$ if $|x_0|<\alpha$ then $|x(t;x_0,t_0)|<\varepsilon$ for all $t\geq t_0+t_s(\varepsilon,\alpha)$.
        The zero solution is {\bf uniform-asymptotically stable in the large} if it is uniform-stable, quasi-uniform asymptotically stable in the large and solutions are uniform-bounded.
\end{definition}
Similar properties can be defined for systems whose dynamics are driven by delay-differential equations \cite{Yoshizawa66}.

\subsection{Memoryless, monotone and bounded operators}
\begin{definition}
Let $\Phi^{m}$ be the class of {\bf memoryless, monotone and bounded (MMB) operators} $\boldsymbol{\phi}:\mathcal{L}_{2e}\rightarrow\mathcal{L}_{2e}$ that, for every input signal $u\in\mathcal{L}_{2e}$, can be characterised by some monotone and bounded function $N:\mathbb{R}^+\times\mathbb{R}\rightarrow\mathbb{R}$ with 
\begin{equation}(\boldsymbol{\phi}(u))(t) = N(t,u(t)), \text{ for all } t\ge0.
\end{equation}
$N$ is monotone (in the second variable) in the sense that 
$N(t,x_1)\geq N(t,x_2)$ for all $t\geq 0$ and $x_1\geq  x_2$. $N$ is bounded (in the second variable) in the sense that there exists a $C\geq 0$ such that $|N(t,x)|\leq C|x|$ for all $t\geq 0$ and $x\in\mathbb{R}$. We say $N\in \mathcal{N}^{m}$ if $N:\mathbb{R}^+\times\mathbb{R}\rightarrow\mathbb{R}$ characterises some $\mathbf{\phi}\in\Phi^{m}$. We say $\mathcal{N}^{m}$ characterises $\Phi^{m}$.



Let $\Phi^{sr}_k\subset\Phi^{m}$ be the class of {\bf slope-restricted} (on $[0,k]$)  MMB operators, characterised by $\mathcal{N}^{sr}_k\subset\mathcal{N}^{m}$ whose members $N\in\mathcal{N}^{sr}_k$ are slope-restricted on $[0,k]$ in the sense that they satisfy $0\leq (N(t,x_1) - N(t,x_2))/(x_1-x_2)\leq k$ for all $t\geq 0$ and  $x_1\neq x_2$.

Let $\Phi^{ti}\subset\Phi^{m}$ be the class of {\bf time-invariant} MMB operators, characterised by $\mathcal{N}^{ti}\subset\mathcal{N}^m$ whose members $N\in\mathcal{N}^{ti}$ satisfy $N(t_1,x)=N(t_2,x)$ for all $t_1$, $t_2\geq 0$ and $x\in\mathbb{R}$. If $N(t,x)$ characterises a time-invariant MMB operator we can define $Q:\mathbb{R}\rightarrow\mathbb{R}$ such that $Q(x)=N(t,x)$ for all $x\in\mathbb{R}$ and $t\geq 0$. With some abuse of notation we will say a time-invariant MMB operator is characterised by such a $Q$, belonging to the class~$\mathcal{Q}$. 

Let $\Phi^p_T\subset\Phi^m$ be the class of {\bf  periodic} MMB operators (with period $T$), characterised by $\mathcal{N}^p_T\subset\mathcal{N}^m$ whose members $N\in\mathcal{N}^p_T$ satisfy $N(t+T,x)=N(t,x)$ for all $t\geq 0$ and $x\in\mathbb{R}$. 

Let $\Phi^{odd}\subset\Phi^m$ be the class of {\bf odd} MMB operators 
 characterised by $\mathcal{N}^{odd}\subset\mathcal{N}^m$ whose  members $N\in\mathcal{N}^{odd}$ satisfy $N(t,x)=-N(t,-x)$ for all $t\geq 0$ and $x\in\mathbb{R}$.
\end{definition}

 \subsection{Linear operators}

 
  Following \cite{vidyasagar93} we define a class of LTI (linear time invariant) and stable operators as follows.
\begin{definition}
Let $\mathcal{S}$ be the class of  continuous time convolution operator{\color{red}s} $\boldsymbol{G}:\mathcal{L}_{2e}\rightarrow\mathcal{L}_{2e}$ whose impulse response takes the form
\begin{equation}
g(t) = \left \{
            \displaystyle{\begin{array}{ll}
                0 & \text{when }t<0,\\
                \displaystyle{\sum_{i=0}^{\infty}}g_i\delta(t-t_i)+g_a(t) & \text{when }t\geq0,
            \end{array}}
        \right .
\end{equation}
where $\delta(\cdot)$ denotes the unit delta distribution, $0\leq t_0<t_1<\cdots$ are constants, $g_a(\cdot)$ is a measurable function and in addition
\begin{equation}
\sum_{i=0}^{\infty}|g_i| + \int_0^{\infty}|g_a(t)|\,dt < \infty.
\end{equation}
\end{definition}




For any $\boldsymbol{G}\in\mathcal{S}$, its associated transfer function, i.e., the Laplace transform of $g$, will be denoted by $G$, and its region of convergence includes the closed right-half plane. Moreover, its Fourier transform, which can be computed by taking $g(t)=0$ for all $t<0$, corresponds to its Laplace transform at the imaginary axis. Where appropriate we will consider either its Laplace transform (${G}:\bar{\mathbb{C}}_{+}\rightarrow \mathbb{C}$, $s\mapsto {G}(s)$ where $\bar{\mathbb{C}}_{+} = \{ s\in\mathbb{C}: Re(s)\geq 0\}$), or its Fourier Transform  (${G}: j \mathbb{R}\rightarrow \mathbb{C}$, $j\omega\mapsto {G}(j\omega))$).

 \begin{remark}
 Vidyasagar \cite{vidyasagar93} uses the notation $\mathcal{A}$ for the class of operators we denote $\mathcal{S}$ (and $\hat{\mathcal{A}}$ for the corresponding class of transfer functions; the class of operators is denoted $\mathcal{A}(0)$ in \cite{Curtain91}).
  Although the corresponding class of transfer functions is less general than $\mathcal{H}_\infty$, from an operator point-of-view the generality is appropriate \cite{vidyasagar93}.
 Certainly the class is considerably more general than the class of operators with transfer functions in $\mathcal{R}\mathcal{H}_{\infty}$. 
 Our notation is chosen  to avoid confusion with the Altshuller multipliers (defined below). Theorems~\ref{thm:ss} and~\ref{thm:dd} below are restricted to a subclass of $\mathcal{S}$. 
 \end{remark}



\subsection{Lurye systems}\label{sec:Lurye}

We are concerned with the behaviour of the closed-loop system depicted in Fig.~\ref{fig:Lurye} and defined as follows:
\begin{definition}\label{def:lurye}
A {\bf Lurye system} is a closed-loop system, assumed to be well-posed, with dynamics
\begin{equation}
y_1=\boldsymbol{G}u_1,\mbox{ } y_2=\boldsymbol{\phi} (u_2),\mbox{ } u_1=r_1-y_2 \mbox{ and }u_2 = y_1+r_2,\label{eq:Lurye}
\end{equation}
with $\boldsymbol{G}\in\mathcal{S}$ and $\boldsymbol{\phi}\in\Phi^m$.
We will denote by $\boldsymbol{L}^{y_j}_{r_i}$ and $\boldsymbol{L}^{u_j}_{r_i}$ the closed-loop maps from $r_i$ to $y_j$ and to $u_j$ respectively.
\end{definition}

\begin{remark}\label{rem:ic2}
Non-zero initial conditions can be accommodated in our definition of finite gain stability for such a Lurye system (c.f. Remark~\ref{rem:ic1}). Specifically, since the nonlinearity $\boldsymbol{\phi}$ is Lipschitz and the LTI transfer function $G$ admits a minimal state-space representation,  non-zero initial conditions can be accommodated by extending the time line backwards and including some fictitious exogenous signal over this extension. See \cite{vidyasagar93}, pp 290-291.
\end{remark}

 The Lurye system is stable if $r_1,r_2\in\mathcal{L}_2$  implies $u_1,u_2,y_1,y_2\in\mathcal{L}_2$. It is FGS if there is some $h<\infty$ such that
\begin{equation}
    \begin{split}
\|y_i\|\leq h (\|r_1\|+\|r_2\|)\text{ and }\|u_i\|\leq h (\|r_1\|+\|r_2\|),\\ 
\text{for }i=1,2\text{ and for all }r_1, r_2 \in \mathcal{L}_2.
    \end{split}
\end{equation}
Since $\boldsymbol{G}$ is LTI and stable, if $\boldsymbol{L}^{y_2}_{r_2}$ is FGS then all other closed-loop maps are FGS. A similar statement is true if $\boldsymbol{L}^{y_2}_{r_2}$ is FGOS. 

We will consider only time-invariant or periodic nonlinearities and correspondingly state specifically whether $\phi\in\Phi^{ti}$ (characterised by some $Q\in\mathcal{Q}$) or $\phi\in\Phi^p_T$ for some $T$ (characterised by some $N\in\mathcal{N}_T^p$). We will also specify, when appropriate, if $\phi\in\Phi^{sr}_k$ for some $k$ and if $\phi\in\Phi^{odd}$.  We will also, where appropriate, specify whether $r_2$ is zero, in $\mathcal{L}_2$ or in $\mathcal{P}$.

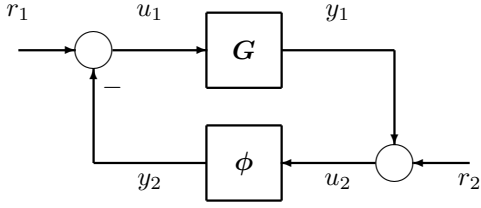
\begin{figure}[tbp]
\begin{center}
\ifx\JPicScale\undefined\def\JPicScale{1}\fi
\unitlength \JPicScale mm
\begin{picture}(60,25)(0,0)
\linethickness{0.3mm}
\put(0,20){\line(1,0){7.5}}
\put(7.5,20){\vector(1,0){0.12}}
\linethickness{0.3mm}
\put(10,5){\line(1,0){15}}
\linethickness{0.3mm}
\put(10,5){\line(0,1){12.5}}
\put(10,17.5){\vector(0,1){0.12}}
\linethickness{0.3mm}
\put(12.5,20){\line(1,0){12.5}}
\put(25,20){\vector(1,0){0.12}}
\linethickness{0.3mm}
\put(35,20){\line(1,0){15}}
\linethickness{0.3mm}
\put(50,7.5){\line(0,1){12.5}}
\linethickness{0.3mm}
\put(35,5){\line(1,0){12.5}}
\put(35,5){\vector(-1,0){0.12}}
\put(0,25){\makebox(0,0)[cc]{$r_1$}}

\put(42.5,25){\makebox(0,0)[cc]{$y_1$}}

\linethickness{0.3mm}
\put(25,25){\line(1,0){10}}
\put(25,15){\line(0,1){10}}
\put(35,15){\line(0,1){10}}
\put(25,15){\line(1,0){10}}
\linethickness{0.3mm}
\put(25,0){\line(1,0){10}}
\put(25,0){\line(0,1){10}}
\put(35,0){\line(0,1){10}}
\put(25,10){\line(1,0){10}}
\put(30,5){\makebox(0,0)[cc]{$\boldsymbol\phi$}}

\put(30,20){\makebox(0,0)[cc]{$\boldsymbol{G}$}}

\linethickness{0.3mm}
\put(10,20){\circle{5}}

\put(12.5,15){\makebox(0,0)[cc]{$-$}}

\put(17.5,25){\makebox(0,0)[cc]{$u_1$}}

\linethickness{0.3mm}
\put(50,5){\circle{5}}
\put(50,7.5){\vector(0,-1){0.12}}

\linethickness{0.3mm}
\put(52.5,5){\line(1,0){7.5}}
\put(52.5,5){\vector(-1,0){0.12}}
\put(60,2.5){\makebox(0,0)[cc]{$r_2$}}

\put(42.5,2.5){\makebox(0,0)[cc]{$u_2$}}

\put(17.5,2.5){\makebox(0,0)[cc]{$y_2$}}

\end{picture}
\end{center}
\caption{Lurye system.} 
\label{fig:Lurye}
\end{figure}



\subsection{Multiplier theory}\label{subsec:mult}


\begin{definition}[\cite{Zames68}]\label{def2a}
Let $\mathcal{M}$ be the class of continuous-time convolution operators $\boldsymbol{M}:\mathcal{L}_2\rightarrow\mathcal{L}_2$ whose (possibly non-causal) impulse response is given by 
\begin{equation}\label{m_def}
m(t)=\delta(t)-h(t)-\sum_{i=1}^{\infty}h_i\delta(t-t_i),
\end{equation}
with $h(t)\geq 0$ for all $t\in\mathbb{R}$, $h_i\geq 0$ and $t_i\neq 0$ for all $i$ and
\begin{equation}\label{h_ineq}
\int_{-\infty}^{\infty}h(t)\,dt + \sum_{i=1}^{\infty}h_i < 1.
\end{equation}
We say $\boldsymbol{M}$ is an {\bf OZF multiplier} if $\boldsymbol{M}\in\mathcal{M}$.
\end{definition}

\begin{remark}
It is possible to broaden the class of OZF multipliers by replacing (\ref{m_def}) with
\begin{equation}
m(t)=m_0\left (\delta(t)-h(t)-\sum_{i=1}^{\infty}h_i\delta(t-t_i)\right ),
\end{equation}
with $m_0>0$, but we can set $m_0=1$ without loss of generality. 
\end{remark}

\begin{remark}\label{rem:strict}
    Since classical loop transformation techniques \cite{Zames68, desoer75} require $\boldsymbol{M}^{-1}$ to exist, we impose a strict inequality in (\ref{h_ineq}).  If instead one uses  homotopy arguments \cite{Megretski97}  the requirement is removed, so a non-strict inequality can be used. This relaxation is sometimes useful for parametrizing the class of multipliers, while ultimately leading to the same results. We used a non-strict inequality in our preliminary results \cite{Heath:24}. The distinction is discussed in \cite{Carrasco12}. 
\end{remark}

\begin{definition}[\cite{Zames68}]\label{def2b}
Let $\mathcal{M}_{\text{odd}}$ be the class of  continuous-time convolution operators $\boldsymbol{M}:\mathcal{L}_2\rightarrow\mathcal{L}_2$ whose (possibly non-causal) impulse response is given by (\ref{m_def})
with 
\begin{equation}\label{h_ineq_odd}
\int_{-\infty}^{\infty}|h(t)|\,dt + \sum_{i=1}^{\infty}|h_i| < 1.
\end{equation}
We say $\boldsymbol{M}$ is an {\bf OZF multiplier for odd nonlinearities} if $\boldsymbol{M}\in\mathcal{M}_{\text{odd}}$.
\end{definition}

\begin{definition}\label{def1}
Let 
${M}:j \mathbb{R}\rightarrow\mathbb{C}$ and let ${G}:j \mathbb{R}\rightarrow\mathbb{C}$. 
We say ${M}$ is {\bf suitable} for ${G}$ if there exists $\varepsilon>0$ such that
\begin{align}\label{suitable}
\mbox{Re}\left \{
				{M}(j\omega) {G}(j\omega)
			\right \} > \varepsilon\mbox{ for all } \omega \in \mathbb{R}.
\end{align}
We also say a linear operator $\boldsymbol{M}$ is suitable for $\boldsymbol{G}$ if their respective frequency responses satisfy (\ref{suitable}).
\end{definition}

\begin{theorem}[\cite{Zames68,desoer75}]
    A Lurye system (Definition~\ref{def:lurye}) with  $\boldsymbol{\phi}\in\Phi^{ti}$   is FGS if there is 
an $\boldsymbol{M}\in\mathcal{M}$ 
suitable for~$\boldsymbol{G}$. A Lurye system with  $\boldsymbol{\phi}\in\Phi_k^{sr}\cap\Phi^{ti}$  is FGS if there is 
an $\boldsymbol{M}\in\mathcal{M}$  suitable for~$1/k+\boldsymbol{G}$. 

If, in addition, $\phi\in\Phi^{odd}$ then the same statements can be made for $\boldsymbol{M}\in\mathcal{M}_{odd}$.
\end{theorem}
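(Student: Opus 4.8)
The plan is to reconstruct the classical argument in two stages: first show that membership $\boldsymbol{M}\in\mathcal{M}$ forces the multiplied nonlinearity to be positive, and then combine this with the frequency condition (\ref{suitable}) through a passivity / loop-transformation argument. For the first statement, take $\boldsymbol{\phi}\in\Phi^{ti}$ characterised by a monotone bounded $Q\in\mathcal{Q}$ with $Q(0)=0$, and write the kernel of $\boldsymbol{M}$ as $m=\delta-h-\sum_i h_i\delta(\cdot-t_i)$. I would establish the positivity inequality $\langle\boldsymbol{M}v,\boldsymbol{\phi}(v)\rangle\geq 0$ for all $v\in\mathcal{L}_2$. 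Writing $E=\int Q(v(t))v(t)\,dt\geq 0$ (nonnegative since $Q$ is monotone through the origin), the crucial step is the shifted cross-term bound $\int Q(v(t))v(t-\tau)\,dt\leq E$: introducing the convex potential $\Psi(x)=\int_0^x Q(s)\,ds$, the subgradient inequality $Q(a)(a-b)\geq\Psi(a)-\Psi(b)$ evaluated at $a=v(t)$, $b=v(t-\tau)$ and integrated over the shift-invariant line gives the claim. Since $h\geq0$, $h_i\geq0$ and $\int h+\sum h_i<1$, summing the cross terms yields $\langle\boldsymbol{M}v,\boldsymbol{\phi}(v)\rangle\geq(1-\int h-\sum h_i)E\geq0$.

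With positivity in hand and $\mathrm{Re}\{M(j\omega)G(j\omega)\}>\varepsilon$ (so $\boldsymbol{MG}$ is strictly passive by Parseval), I would invoke the multiplier form of the passivity theorem. The strict inequality (\ref{h_ineq}) ensures, via Remark~\ref{rem:strict}, that $\boldsymbol{M}^{-1}$ exists and is bounded, which legitimises the loop transformation that moves $\boldsymbol{M}$ into the forward path: the loop becomes the interconnection of the strictly passive $\boldsymbol{MG}$ with the passive multiplied nonlinearity, and a passivity bound delivers $\|u_2\|\leq c(\|r_1\|+\|r_2\|)$; boundedness of $\boldsymbol{G}\in\mathcal{S}$ then propagates FGS to every closed-loop map. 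I expect the main obstacle to lie here: $\boldsymbol{M}$ is non-causal, so the positivity is naturally an integral over the whole line whereas the feedback loop must be read causally with truncations. I would resolve this exactly as in \cite{desoer75}, either by a causal/anticausal spectral factorisation of $\boldsymbol{M}$ so that the transformed components are causal and passive, or by the homotopy/IQC route of \cite{Megretski97}, which additionally removes the need for the strict inequality.

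For $\boldsymbol{\phi}\in\Phi^{sr}_k\cap\Phi^{ti}$ I would reduce to the monotone case by a standard loop transformation: setting $w=\boldsymbol{\phi}(u_2)$ and $z=u_2-w/k$, the slope bound $0\leq Q'\leq k$ makes $z$ nondecreasing in $u_2$, so $w$ is a monotone bounded function $\psi(z)$, while the loop equations express $z$ as the response of $-(1/k+\boldsymbol{G})$ to $w$ plus exogenous terms. This is again a Lurye system, now with a monotone nonlinearity and plant $1/k+\boldsymbol{G}$, so the first statement applies and accounts for suitability being required of $1/k+G$ rather than $G$. Finally, for $\boldsymbol{\phi}\in\Phi^{odd}$ with $\boldsymbol{M}\in\mathcal{M}_{\mathrm{odd}}$, where $h$ and $h_i$ may change sign but $\int|h|+\sum|h_i|<1$, I would upgrade the cross-term estimate to $|\int Q(v(t))v(t-\tau)\,dt|\leq E$. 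The upper bound is as above; for the lower bound I would use that $\Psi$ is even when $Q$ is odd, so $Q(a)(a+b)\geq\Psi(a)-\Psi(-b)=\Psi(a)-\Psi(b)$, which integrates to $\int Q(v(t))v(t-\tau)\,dt\geq-E$. The two-sided bound lets the signed kernel satisfy the same positivity inequality under (\ref{h_ineq_odd}), after which the passivity argument is unchanged.
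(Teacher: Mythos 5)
The paper itself gives no proof of this theorem: it is quoted from \cite{Zames68,desoer75}, and the only internal indication of how the authors would argue it is their proof of the analogous Theorem~5 for Altshuller multipliers, which has exactly your structure — positivity of the multiplied nonlinearity via a convex potential, existence of $\boldsymbol{M}^{-1}$ plus a causal/anticausal factorisation, then standard multiplier/passivity theory. Your monotone and odd cases are the classical argument and are correct: the subgradient inequality for $\Psi(x)=\int_0^x Q(s)\,ds$, shift-invariance of $\int \Psi(v(t))\,dt$ along the whole line, and (for the odd case) evenness of $\Psi$ giving the two-sided bound $\left|\int Q(v(t))v(t-\tau)\,dt\right|\le E$ are precisely the steps of \cite{Zames68} and of \cite{desoer75} (cf.~p.~205, which the paper's own Lemma~1 proof also follows), and deferring the non-causality issue to a factorisation of $\boldsymbol{M}$ or to the homotopy/IQC route is the standard and legitimate resolution.

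The one step that is wrong as written is the slope-restricted reduction. You claim that because $z=u_2-w/k$ is nondecreasing in $u_2$, the map $z\mapsto w$ is a monotone \emph{bounded function} $\psi$. It is not whenever $Q$ attains slope exactly $k$ on an interval — which includes the saturation and deadzone with $k=1$ used as the main examples of this paper: for the unit saturation, $z\equiv 0$ on $[-1,1]$ while $w$ ranges over all of $[-1,1]$, so the graph has a vertical segment at $z=0$ and violates the bound $|\psi(z)|\le C|z|$ required for membership of $\Phi^m$. Consequently you cannot invoke the first statement as a black box for the transformed loop. The standard repair is to note that the graph of $(z,w)$ is a maximal monotone relation, hence still the subdifferential of a convex potential, so the same potential argument yields the quadratic constraint
\begin{equation*}
\int_{-\infty}^{\infty} \bigl(\boldsymbol{M}(u_2-y_2/k)\bigr)(t)\, y_2(t)\,dt \;\ge\; 0
\end{equation*}
directly (this is exactly the IQC with the matrix $\Pi$ displayed in the paper), after which the passivity/factorisation argument runs on the loop formed by $1/k+\boldsymbol{G}$ and this relation; its internal signals are well defined because the original Lurye system is assumed well-posed. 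With that substitution your proof coincides with the classical one the paper cites.
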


These results can be expressed (and derived) in terms of Integral Quadratic Constraints (IQCs) \cite{Megretski97} (c.f. Remark~\ref{rem:strict}). This is particularly useful for calculating bounds on the $\mathcal{L}_2$ gain \cite{Turner12,Veenman14,Bertolin25}. If $y_2=\boldsymbol{\phi}(u_2)$ with $\boldsymbol{\phi}\in\Phi^{sr}_k\cap\Phi^{ti}$ and $u_2,y_2\in\mathcal{L}_2$ then their Fourier transforms satisfy 
\begin{equation}
\int_{-\infty}^\infty\left [
\begin{array}{c}
\hat{u}_2(j\omega)\\
\hat{y}_2(j\omega)
\end{array}
\right ]^*
\Pi(j\omega)
\left [
\begin{array}{c}
\hat{u}_2(j\omega)\\
\hat{y}_2(j\omega)
\end{array}
\right ]
\,d\omega \geq 0,
\end{equation}
with
\begin{equation}
\Pi(j\omega)=\left [
    \begin{array}{cc}
        0 & M^*(j\omega)\\
        M(j\omega) & -(M(j\omega)+M^*(j\omega))/k
    \end{array}
\right ],
\end{equation}
and $M(j\omega)$ is the frequency response of some $\boldsymbol{M}\in\mathcal{M}$.

We write $G$ for $G(j\omega)$ and $M$ for $M(j\omega)$. It follows (see Fig.~\ref{fig:Lurye_L2one}) that we can bound  the  $\mathcal{L}_2$ gain from $r_2$ to $y_2$ of 
a Lurye system with $\boldsymbol{\phi}\in\Phi^{sr}_k\cap\Phi^{ti}$  by the smallest $h$ satisfying
\begin{eqnarray}
\left [
    \begin{array}{cc}
    -G & 1\\
    1 & 0\\
    1 & 0\\
    0 & 1
    \end{array}
\right ]^*
\left [
    \begin{array}{cccc}
        0 & 0 & M^* & 0\\
        0 & 1/h & 0 & 0\\
        M & 0 & -(M+M^*)/k & 0\\
        0 & 0 & 0 & -h
    \end{array}
\right ]\nonumber\\
\left [
    \begin{array}{cc}
    -G & 1\\
    1 & 0\\
    1 & 0\\
    0 & 1
    \end{array}
\right ]
< -\epsilon I,\text{ for all $\omega$, for some $\epsilon>0$.}
\end{eqnarray}
This reduces to
\begin{equation}
\left [
    \begin{array}{cc}
    1/h - G^*M^*-MG-(M+M^*)/k & M\\
    M^* & -h
    \end{array}
\right ]
<-\epsilon I,
\end{equation}
and hence, taking limits, we can bound the gain by
\begin{equation}\label{first_gamma}
    h_M = \sup_{\omega} \frac{1+M^*M}{G^*M^*+MG+(M+M^*)/k}.
\end{equation}
A similar expression  when the nonlinearity is monotonic but not slope-restricted follows by letting $k\rightarrow\infty$.

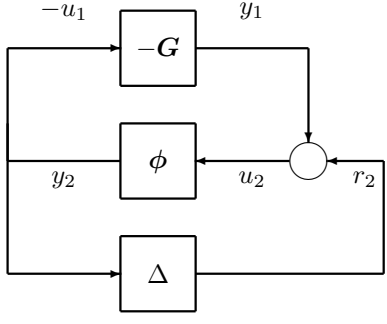
\begin{figure}[tbp]
\begin{center}
\ifx\JPicScale\undefined\def\JPicScale{1}\fi
\unitlength \JPicScale mm
\begin{picture}(60,40)(0,0)
\linethickness{0.3mm}
\linethickness{0.3mm}
\put(10,20){\line(1,0){15}}
\linethickness{0.3mm}
\put(10,20){\line(0,1){15}}
\linethickness{0.3mm}
\put(10,35){\line(1,0){15}}
\put(25,35){\vector(1,0){0.12}}
\linethickness{0.3mm}
\put(35,35){\line(1,0){15}}
\linethickness{0.3mm}
\put(50,22.5){\line(0,1){12.5}}
\linethickness{0.3mm}
\put(35,20){\line(1,0){12.5}}
\put(35,20){\vector(-1,0){0.12}}
\put(50,22.5){\vector(0,-1){0.12}}

\put(42.5,40){\makebox(0,0)[cc]{$y_1$}}

\linethickness{0.3mm}
\put(25,40){\line(1,0){10}}
\put(25,30){\line(0,1){10}}
\put(35,30){\line(0,1){10}}
\put(25,30){\line(1,0){10}}
\linethickness{0.3mm}
\put(25,25){\line(1,0){10}}
\put(25,15){\line(0,1){10}}
\put(35,15){\line(0,1){10}}
\put(25,15){\line(1,0){10}}
\put(30,20){\makebox(0,0)[cc]{$\boldsymbol\phi$}}

\put(30,35){\makebox(0,0)[cc]{$-\boldsymbol{G}$}}
\put(30,5){\makebox(0,0)[cc]{$\Delta$}}

\linethickness{0.3mm}


\put(17.5,40){\makebox(0,0)[cc]{$-u_1$}}

\linethickness{0.3mm}
\put(50,20){\circle{5}}

\linethickness{0.3mm}
\put(52.5,20){\line(1,0){7.5}}
\put(52.5,20){\vector(-1,0){0.12}}
\put(57.5,17.5){\makebox(0,0)[cc]{$r_2$}}

\put(42.5,17.5){\makebox(0,0)[cc]{$u_2$}}

\put(17.5,17.5){\makebox(0,0)[cc]{$y_2$}}

\linethickness{0.3mm}
\put(25,10){\line(1,0){10}}
\put(25,0){\line(0,1){10}}
\put(35,0){\line(0,1){10}}
\put(25,0){\line(1,0){10}}
\put(35,5){\line(1,0){25}}
\put(10,5){\line(1,0){15}}
\put(25,5){\vector(1,0){0.12}}
\put(10,5){\line(0,1){20}}
\put(60,5){\line(0,1){15}}

\end{picture}
\end{center}
\caption{Block diagram for bound on $\mathcal{L}_2$ gain from $r_2$ to $y_2$.} 
\label{fig:Lurye_L2one}
\end{figure}

Similarly 
an upper bound on the gain from $r_2$ to $u_2$ satisfies
\begin{eqnarray}
\left [
    \begin{array}{cc}
    -G & 1\\
    -G & 1\\
    1 & 0\\
    0 & 1
    \end{array}
\right ]^*
\left [
    \begin{array}{cccc}
        0 & 0 & M^* & 0\\
        0 & 1/h & 0 & 0\\
        M & 0 & -(M+M^*)/k & 0\\
        0 & 0 & 0 & -h
    \end{array}
\right ]\nonumber\\
\left [
    \begin{array}{cc}
    -G & 1\\
    -G & 1\\
    1 & 0\\
    0 & 1
    \end{array}
\right ]
<-\epsilon I,
\end{eqnarray}
which reduces to 
\[  
 \left [
    \begin{array}{cc}
    G^*G/h - G^*M^*-MG-(M+M^*)/k & M-G^*/h\\
    M^*-G/h & 1/h-h
    \end{array} 
\right ]
\]
\begin{equation}
  <-\epsilon I.
\end{equation}
Hence we require $h> 1$ and
\begin{equation}\label{qi}
2h^2\text{Re}[M(1/k+G)]  -h (G^*G+MM^*)  -2\text{Re}[M/k]
  \geq 0.
\end{equation}
This gives the bound
\[h_m = \sup_{\omega}r(j\omega)\]
where, at each frequency, $r(j\omega)$ is the positive root of (\ref{qi}) satisfied with equality. If the nonlinearity is monotone but not slope-restricted this reduces to
\begin{equation}
h_m  =  \sup_{\omega}\frac{G^*G+MM^*}{MG+G^*M^*}.
\end{equation}



We can find bounds on the gains from $r_1$ or $r_2$ to $u_1$, $u_2$, $y_1$ or $y_2$ in a similar fashion. Results are summarised in Table~\ref{tab1}.

\begin{table*}
    \centering
    \begin{tabular}{|c|c|c|}
    \hline & $r_1$ & $r_2$\\
    \hline $u_1$ & 
    $
            \begin{array}{c}
                {\displaystyle{\sup_{\omega} h}} \text{ such that } h\geq 1 \text{ and }
                    \\
                h^22\text{Re}[M(G+1/k)] - h(1+|MG|^2)-2\text{Re}[M/k]=0
            \end{array}
    $
   & $\displaystyle{\sup_\omega\frac{1+|M|^2}{2\text{Re}[M(G+1/k)]}}$ \\
    \hline $u_2$ &
    $\begin{array}{c}
    \displaystyle{\sup_\omega h} \text{ such that } \displaystyle{h\geq \sup_\omega |G|^2}\text{ and }\\
    \displaystyle{h^22\text{Re}[M(G+1/k)] - h|G|^2(1+|M|^2)-|G|^22\text{Re}[M/k]=0}
    \end{array}
    $
    &
    $\begin{array}{c}
    \displaystyle{\sup_\omega h}\text{ such that } h\geq 1 \text{ and}\\
    \displaystyle{h^22\text{Re}[M(G+1/k)] - h(|G|^2+|M|^2)-\text{Re}[M/k]=0}
    \end{array}
    $\\
    \hline 
    $y_1$ & 
    $
        \begin{array}{c}
        \displaystyle{\sup_\omega h} \text{ such that } h\geq \displaystyle{\sup_\omega|G|^2} \text{ and }\\
        h^22\text{Re}[M(G+1/k)] - h|G|^2(1+|M|^2)-|G|^22\text{Re}[M/k]=0
        \end{array}
    $
        & $\displaystyle{\sup_\omega\frac{|G|^2+|M|^2}{2\text{Re}[M(G+1/k)]}}$ \\
    \hline $y_2$ & $\displaystyle{\sup_\omega\frac{1+|MG|^2}{2\text{Re}[M(G+1/k)]}}$ & $\displaystyle{\sup_\omega\frac{1+|M|^2}{2\text{Re}[M(G+1/k)]}}$ \\
    \hline
    \end{tabular}
    \caption{Bounds on the $\mathcal{L}_2$ gains for Lurye systems.}\label{tab1}
\end{table*}


\begin{definition}\label{def:alt}
Let $\mathcal{A}_T\subset\mathcal{M}$ be the class of continuous-time convolution operators whose impulse response is given by 
\begin{equation}\label{alt_imp}
m(t)=\delta(t)-\sum_{n=-\infty}^{\infty}h_n \delta(t-nT),
\end{equation}
with 
\begin{equation}\label{alt_cond}
    h_0=0\text{, }h_n\geq 0\text{ for all } n  \text{ and }
\sum_{n=-\infty}^{\infty}h_n < 1.
\end{equation}
We say $\boldsymbol{M}$ is an {\bf Altshuller multiplier with period} $T$ if $\boldsymbol{M}\in\mathcal{A}_T$.
\end{definition}

\begin{remark}
Altshuller \cite{Altshuller:11} uses the frequency response to define multipliers equivalent to those in Definition~\ref{def:alt}.
\end{remark}

\subsection{Discrete-time systems}

Although our development is for continuous-time systems, most definitions and results carry over to discrete-time,  with 
the spaces $\ell$ of real-valued sequences $h:\mathbb{Z}^+\rightarrow \mathbb{R}$ and $\ell_2$ of square-summable sequences $h:\mathbb{Z}^+\rightarrow \mathbb{R}$  taking the places of $\mathcal{L}_{2e}$ and   $\mathcal{L}_2$ respectively. Similarly $\ell_2$ gain takes the place of $\mathcal{L}_2$ gain.

We require some of these to be stated explicitly. 
The following is the counterpart to Definition~\ref{def2a}.
\begin{definition}\label{def2a_d}
Let $\mathcal{M}^d$ be the class of discrete-time convolution operators $\boldsymbol{M}:\ell_2\rightarrow\ell_2$ whose (possibly non-causal) impulse response is given by 
\begin{equation}\label{def_md}
m(n) = \delta(n)-\sum_{k=-\infty}^{\infty}h_k \delta(n-k),
\end{equation}
for $n\in\mathbb{Z}$ with $h_0$=0, $h_k\geq 0$ for all $k$ and
\begin{equation}
\sum_{k=-\infty}^{\infty}h_k < 1.
\end{equation}
We say $\boldsymbol{M}$ is a {\bf discrete-time OZF multiplier} if $\boldsymbol{M}\in\mathcal{M}^d$.
\end{definition}

\begin{remark}
The original definition of the discrete-time multipliers \cite{Willems68} includes linear time-varying multipliers. It is sufficient to consider  LTI multipliers \cite{Kharitenko23,Su23}.
\end{remark}
The class of discrete-time OZF multipliers for odd nonlinearities is defined analogously. Let $\mathbb{D}$ denote the unit circle. Then we have the following counterpart to Definition~\ref{def1}.
\begin{definition}\label{def1d}
Let 
${M}:\mathbb{D}\rightarrow\mathbb{R}$ and let ${G}:\mathbb{D}\rightarrow\mathbb{C}$. 
We say ${M}$ is {\bf suitable} for ${G}$ if
\begin{align}
\mbox{Re}\left \{
				{M}(e^{j\omega}) {G}(e^{j\omega})
			\right \} > 0\mbox{ for all } \omega \in [0,2\pi).
\end{align}
\end{definition}
The counterpart to Theorem 1 is direct and standard \cite{Willems68,Willems71}. 
Finally we define the discrete-time counterpart to the class $\mathcal{A}_T$.
\begin{definition}\label{def:altd}
Let $\mathcal{A}^d_N\subset\mathcal{M}^d$ with $N\in\mathbb{Z}^+$ be the class of discrete-time convolution operators whose impulse response is given by 
\begin{equation}
m(n)=\delta(n)-\sum_{k=-\infty}^{\infty}h_k \delta(n-kN),
\end{equation}
with 
\begin{equation}
    h_0=0\text{, }h_k\geq 0\text{ for all } k  \text{ and }
\sum_{k=-\infty}^{\infty}h_k < 1.
\end{equation}
We say $\boldsymbol{M}$ is a {\bf discrete-time Altshuller multiplier with period}~$N$ (as a multiple of sampling period) if $\boldsymbol{M}\in\mathcal{A}^d_N$.
\end{definition}


\subsection{Multivariable systems}

Similarly, while our development is for single-input single-output systems, many results carry over in a straightforward manner to multivariable systems. In particular the development of \cite{Altshuller:11,Altshuller:13}, which sets the framework for Chapter~\ref{sec:periodic}, is for multivariable systems. 

If the nonlinearity $\boldsymbol{\phi}$ is characterised by $N:\mathbb{R^+}\times\mathbb{R}^n\rightarrow\mathbb{R}^n$ then we have the additional requirement that $\int_A^BN(t,x)\,dx$ be independent of path. This is equivalent to saying there is some $P:\mathbb{R^+}\times\mathbb{R}^n\rightarrow\mathbb{R}$ that is convex in the second variable with $N(t,\cdot)=\nabla P(t,\cdot)$ for all $t$. If $N$ is sufficiently smooth this is in turn equivalent to the requirement that
\begin{equation}
    \left [\frac{\partial N(t,x)}{\partial x_i}\right ]_j = \left [\frac{\partial N(t,x)}{\partial x_j}\right ]_i\text{ for all }i,j.
\end{equation}
See \cite{Willems72,Safonov2000}.

\section{Power analysis}

In this section we consider FGS systems. The application to Lurye systems where finite-gain stablilty is guaranteed by the existence of a suitable OZF multiplier is immediate.

Zames \cite{Zames66a} argues that ``in order to behave properly'' a system's ``outputs must not be critically sensitive to small changes in inputs - changes such as those caused by noise.'' Further he argues that the input-output map must be continuous to ensure it is ``not critically sensitive to noise.'' Here we show that if the noise is a power signal then finite-gain stability suffices.

The following is standard for linear systems \cite{Zhou96} but, to the authors' knowledge and with the exception of \cite{Heath:24}, has not been previously stated for nonlinear systems.

\begin{theorem}\label{thm:power}
Suppose $u\in\mathcal{P}$ and $y=\boldsymbol{H}(u)$ where $\boldsymbol{H}$ is FGS with gain $h$.  Then $y\in\mathcal{P}$ with $\|y\|_P\leq h \|u\|_P$.
\end{theorem}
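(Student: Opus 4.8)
The plan is to transfer the finite-gain $\mathcal{L}_2$ bound to the power seminorm by working with truncations. Since $u\in\mathcal{P}\subset\mathcal{L}_{2e}$ need not lie in $\mathcal{L}_2$, the FGS inequality cannot be applied to $u$ directly; however, each truncation $u_T$ lies in $\mathcal{L}_2$, so the inequality $\|\boldsymbol{H}(u_T)\|\leq h\|u_T\|$ is available. The key structural fact I would invoke is causality of $\boldsymbol{H}$ (which holds for the convolution operators and well-posed Lurye feedback maps under consideration): for every $T>0$ the output on $[0,T]$ depends only on the input on $[0,T]$, so that $(\boldsymbol{H}(u))_T=(\boldsymbol{H}(u_T))_T$. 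I would state this explicitly as the one hypothesis the argument relies on.

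First, fix $T>0$ and write $y=\boldsymbol{H}(u)$. Using causality, together with the fact that truncation does not increase the $\mathcal{L}_2$ norm, I obtain
\[
\|y_T\| = \|(\boldsymbol{H}(u_T))_T\| \leq \|\boldsymbol{H}(u_T)\| \leq h\,\|u_T\|,
\]
where the final inequality is the FGS bound applied to $u_T\in\mathcal{L}_2$. Squaring and dividing by $T$ gives $\tfrac{1}{T}\|y_T\|^2 \leq h^2\,\tfrac{1}{T}\|u_T\|^2$ for every $T$.

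Next I would take $\limsup_{T\to\infty}$ of both sides. Monotonicity of the limit superior together with homogeneity under the nonnegative factor $h^2$ yields $\|y\|_P^2 \leq h^2\|u\|_P^2$, that is, $\|y\|_P\leq h\|u\|_P$. Since $u\in\mathcal{P}$ we have $\|u\|_P<\infty$, whence $\|y\|_P<\infty$; combined with $y\in\mathcal{L}_{2e}$ (so that each $y_T\in\mathcal{L}_2$ and $y$ is locally Lebesgue integrable) this establishes $y\in\mathcal{P}$, completing the argument.

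The main obstacle is the causality step: the FGS hypothesis as stated controls only the $\mathcal{L}_2\to\mathcal{L}_2$ behaviour, and without causality there is no reason the restriction of $y$ to $[0,T]$ should be governed by $u_T$ alone, so the chain of inequalities above would fail. For the operators treated in this paper causality is immediate, so I expect this to be an essential but routine remark rather than a genuine difficulty; the remaining steps are elementary once the truncation identity $(\boldsymbol{H}(u))_T=(\boldsymbol{H}(u_T))_T$ is in hand.
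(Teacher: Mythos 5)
Your proof is correct and is essentially identical to the paper's: both use causality to write $(\boldsymbol{H}(u))_T=(\boldsymbol{H}(u_T))_T$, bound this by $\|\boldsymbol{H}(u_T)\|\leq h\|u_T\|$ via the FGS hypothesis applied to the truncation $u_T\in\mathcal{L}_2$, and then take the limit superior. Your explicit flagging of causality as the one structural hypothesis needed (the paper invokes it in passing with ``since $\boldsymbol{H}$ is causal'') is a fair presentational refinement but not a different argument.
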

\begin{proof}We find
\begin{align}
    \|y\|_P^2 & = \limsup_{T\rightarrow \infty} \frac{1}{T}\|(\boldsymbol{H}(u))_T\|^2, \nonumber \\
     & = \limsup_{T\rightarrow \infty} \frac{1}{T}\|(\boldsymbol{H}(u_T))_T\|^2\text{ since $\boldsymbol{H}$ is causal,}\nonumber \\
     & \leq \limsup_{T\rightarrow \infty} \frac{1}{T}\|\boldsymbol{H}(u_T)\|^2,\nonumber\\
     & \leq \limsup_{T\rightarrow \infty} \frac{1}{T} h^2\|u_T\|^2,\nonumber\\
     & = h^2 \|u\|_P^2. 
\end{align}
\end{proof}
Hence if we define the power gain of $\boldsymbol{H}$ to be
\begin{equation}
    h_P = \sup_{u\in\mathcal{P},\|u\|_P>0} \frac{\|\boldsymbol{H}(u)\|_P}{\|u\|_P},
\end{equation}
then $h_P\leq h$.

Suppose $u_1,u_2\in\mathcal{P}$ and $\boldsymbol{H}$ is FGS with gain $h$. Since $\|\cdot\|_P$ is a seminorm we have the triangle inequalities
\begin{align}
        \|\boldsymbol{H}(u_1) & \pm  \boldsymbol{H}(u_2)\|_P^2 \nonumber\\
          & \leq \|\boldsymbol{H}(u_1)\|_P^2+\|\boldsymbol{H}(u_2)\|_P^2 + 2 \|\boldsymbol{H}(u_1)\|_P\|\boldsymbol{H}(u_2)\|_P\nonumber\\
         & \leq h^2 \left [\|u_1\|_P^2+\|u_2\|_P^2+2\|u_1\|_P\|u_2\|_P\right ],
\end{align}
and
\begin{align}\label{ineq:power}
        \|\boldsymbol{H}(u_1+& u_2)\|_P^2 \leq h^2 \|u_1+u_2\|^2 \nonumber\\
        & \leq h^2 \left [\|u_1\|_P^2+\|u_2\|_P^2+2\|u_1\|_P\|u_2\|_P\right ].
\end{align}
In particular we may say:
\begin{theorem}\label{thm:power_L2}
Suppose $\boldsymbol{H}:\mathcal{L}_{2e}\rightarrow\mathcal{L}_{2e}$ is FGS with gain $h$. Suppose further that $u_1\in\mathcal{L}_2$ and $u_2\in\mathcal{P}$.  Then
\begin{equation}
        \|\boldsymbol{H}(u_1 + u_2)\|_P  \leq h \|u_2\|_P.
\end{equation} 
\end{theorem}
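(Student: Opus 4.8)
The plan is to reduce the claim to Theorem~\ref{thm:power} by showing that perturbing the input by an $\mathcal{L}_2$ signal leaves the power seminorm unchanged. The single observation doing all the work is that every $\mathcal{L}_2$ signal has zero power. Concretely, for $u_1\in\mathcal{L}_2$ we have $\|(u_1)_T\|^2=\int_0^T u_1(t)^2\,dt\leq\|u_1\|^2<\infty$ for every $T$, so $\tfrac{1}{T}\|(u_1)_T\|^2\to 0$ as $T\to\infty$, giving $\|u_1\|_P=0$. In particular $\mathcal{L}_2\subset\mathcal{P}$.

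Next I would invoke the fact that $\mathcal{P}$ is a vector space (this is exactly the point of the $\limsup$ in (\ref{def_P}) noted in the remark following Definition~1), so that $u_1+u_2\in\mathcal{P}$ and Theorem~\ref{thm:power} is applicable to the combined input. Using that $\|\cdot\|_P$ is a seminorm, the triangle inequality together with $\|u_1\|_P=0$ yields
\begin{equation}
\|u_1+u_2\|_P \leq \|u_1\|_P+\|u_2\|_P = \|u_2\|_P .
\end{equation}
(In fact the reverse inequality $\|u_1+u_2\|_P\geq\|u_2\|_P-\|u_1\|_P=\|u_2\|_P$ holds too, so there is equality, but only the upper bound is needed here.)

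Finally I would apply Theorem~\ref{thm:power} to $u=u_1+u_2\in\mathcal{P}$, obtaining $\|\boldsymbol{H}(u_1+u_2)\|_P\leq h\,\|u_1+u_2\|_P$, and then chain this with the seminorm bound above to conclude $\|\boldsymbol{H}(u_1+u_2)\|_P\leq h\,\|u_2\|_P$, which is the desired estimate.

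There is essentially no deep obstacle; the proof is a short reduction. The only step requiring any care is the well-definedness issue, namely that $u_1+u_2$ genuinely lies in $\mathcal{P}$ so that Theorem~\ref{thm:power} may be invoked. This rests on $\mathcal{P}$ being closed under addition, which is precisely why the $\limsup$ (rather than a plain $\lim$) is used in the definition of the power seminorm; absent this, the sum of two power signals need not be a power signal and the argument would break down.
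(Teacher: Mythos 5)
Your proof is correct and takes essentially the same route as the paper: the paper's proof invokes its inequality (\ref{ineq:power}), which is exactly the chain you build — Theorem~\ref{thm:power} applied to $u_1+u_2\in\mathcal{P}$ combined with the seminorm triangle inequality — and then notes $\|u_1\|_P=0$. Your explicit verification that $\mathcal{L}_2$ signals have zero power and your remark on the vector-space role of the $\limsup$ are details the paper leaves implicit, but the argument is the same.
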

\begin{proof}
The result follows immediately from (\ref{ineq:power}) since 
 $\|u_1\|_P=0$.
\end{proof}
Hence, if we add small power noise to an $\mathcal{L}_2$ input signal then the output power is small provided the system is FGS, irrespective of the continuity of the input-output map.

\section{Bias signal analysis}\label{sec:bias}
Theorem~\ref{thm:power_L2} has an immediate counterpart when $u_1$ is a bias signal and $\boldsymbol{H}$ is FGOS.
\begin{theorem}\label{thm:power_offset}
Suppose $\boldsymbol{H}:\mathcal{L}_{2e}\rightarrow\mathcal{L}_{2e}$ is FGOS with steady state map $H_0$ and offset gain $h$. Suppose further that $u_1$ is a bias signal with bias $\bar{u}_1$ and  $u_2\in\mathcal{P}$.  Then
\begin{equation}
 \|\boldsymbol{H}(u_1+u_2)-H_0(\bar{u}_1)\|_P  \leq h\|u_2\|_P.
 \end{equation}
\end{theorem}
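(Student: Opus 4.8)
The plan is to reduce the claim to Theorem~\ref{thm:power_L2} by exploiting the precise way the offset map $\boldsymbol{H_{\bar{u}_1}}$ is constructed in Definition~\ref{def:offset}. The key observation is that the bias-signal setting here is obtained from the $\mathcal{L}_2$-plus-power setting of Theorem~\ref{thm:power_L2} simply by shifting the operator by the constant $\bar{u}_1\mathds{1}$ at the input and by $H_0(\bar{u}_1)$ at the output.

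First I would use the bias to split the input. Since $u_1$ is an $\mathcal{L}_2$-bias signal with bias $\bar{u}_1$, the signal $v_1 := u_1 - \bar{u}_1\mathds{1}$ lies in $\mathcal{L}_2$, so that $u_1 + u_2 = (v_1 + u_2) + \bar{u}_1\mathds{1}$. Because $\mathcal{P}$ is a vector space and $\mathcal{L}_2\subset\mathcal{P}$, the signal $v_1 + u_2$ belongs to $\mathcal{P}$.

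Next I would rewrite the quantity to be bounded in terms of the offset map. Applying Definition~\ref{def:offset} with input $v_1 + u_2$ gives
\begin{equation}
\boldsymbol{H_{\bar{u}_1}}(v_1 + u_2) = \boldsymbol{H}\big((v_1 + u_2) + \bar{u}_1\mathds{1}\big) - H_0(\bar{u}_1) = \boldsymbol{H}(u_1 + u_2) - H_0(\bar{u}_1),
\end{equation}
where, as throughout, the constant $H_0(\bar{u}_1)$ is read as the bias signal $H_0(\bar{u}_1)\mathds{1}$. By the definition of FGOS, the operator $\boldsymbol{H_{\bar{u}_1}}$ is FGS with gain at most the offset gain $h$.

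Finally I would invoke Theorem~\ref{thm:power_L2} applied to the FGS operator $\boldsymbol{H_{\bar{u}_1}}$, with the $\mathcal{L}_2$-component $v_1$ and the power component $u_2$, to conclude $\|\boldsymbol{H_{\bar{u}_1}}(v_1 + u_2)\|_P \leq h\|u_2\|_P$; substituting the identity above yields the stated bound. I do not expect a genuine obstacle: the only points requiring care are the standing notational convention that $H_0(\bar{u}_1)$ denotes the constant signal $H_0(\bar{u}_1)\mathds{1}$, and the membership $v_1 + u_2\in\mathcal{P}$. The conceptual crux, and the reason the proof is immediate, is that Definition~\ref{def:offset} is engineered precisely so that the bias-signal problem collapses onto the $\mathcal{L}_2$-plus-power situation already resolved in Theorem~\ref{thm:power_L2}.
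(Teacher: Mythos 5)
Your proposal is correct and follows essentially the same route as the paper: both rewrite $\boldsymbol{H}(u_1+u_2)-H_0(\bar{u}_1)$ as $\boldsymbol{H_{\bar{u}_1}}(u_1-\bar{u}_1\mathds{1}+u_2)$ via Definition~\ref{def:offset} and then apply Theorem~\ref{thm:power_L2} to the FGS operator $\boldsymbol{H_{\bar{u}_1}}$ with $\mathcal{L}_2$-part $u_1-\bar{u}_1\mathds{1}$ and power part $u_2$. The paper's proof is simply a terser statement of exactly this argument, so your additional care over the constant-signal convention and the membership $v_1+u_2\in\mathcal{P}$ is fine but not a departure.
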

\begin{proof}
The result follows from Theorem~\ref{thm:power_L2} 
since
\begin{equation}
 \|\boldsymbol{H}(u_1+u_2)-H_0(\bar{u}_1)\|_P = \|\boldsymbol{H_{\bar{u}}}(u_1-\bar{u}_1\mathds{1}+u_2)\|_P,
 \end{equation}
 where $\boldsymbol{H_{\bar{u}}}$ is given by Definition~\ref{def:offset}. 
\end{proof}
Hence, if we add small power noise to a input bias signal then the output power (measured about the noise-free output bias) is small provided the system is FGOS, irrespective of the continuity of the input-output map.

The application to a Lurye system with time-invariant nonlinearity is immediate provided we can show the system is FGOS. It turns out that the existence of a suitable OZF multiplier $\boldsymbol{M}\in\mathcal{M}$ guarantees this, but the existence of a suitable $\boldsymbol{M}\in\mathcal{M}_{\text{odd}}$ does not.

\begin{theorem}\label{thm:offset}
    A Lurye system (Definition~\ref{def:lurye}) with  $\boldsymbol{\phi}\in\Phi^{ti}$ is FGOS if there is 
an $\boldsymbol{M}\in\mathcal{M}$ suitable for~$\boldsymbol{G}$. A Lurye system with  $\boldsymbol{\phi}\in\Phi^{sr}_k\cap\Phi^{ti}$ is FGOS if there is 
an $\boldsymbol{M}\in\mathcal{M}$ suitable for~$1/k+\boldsymbol{G}$. 
\end{theorem}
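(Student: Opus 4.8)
The plan is to pass to deviation coordinates about a steady-state operating point, where the closed loop becomes a \emph{new} Lurye system with the same linear part $\boldsymbol{G}$ and a shifted nonlinearity lying in the same class, so that the classical multiplier result (Theorem~1) applies verbatim with the \emph{same} multiplier $\boldsymbol{M}$ and hence yields a gain bound independent of the offset. By the discussion in Section~\ref{sec:Lurye} it suffices to treat the map $\boldsymbol{L}^{y_2}_{r_2}$, so I would fix $r_1=0$.

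First I would construct the steady-state map. Writing $Q\in\mathcal{Q}$ for the nonlinearity and $G(0)$ for the DC value of $\boldsymbol{G}$, a constant operating point for input bias $\bar{r}_2$ solves $\bar{u}_2+G(0)Q(\bar{u}_2)=\bar{r}_2$ with $\bar{y}_2=Q(\bar{u}_2)$, $\bar{u}_1=-\bar{y}_2$. The first step is to show this equation is uniquely solvable for every $\bar{r}_2$, which defines $H_0$ via $H_0(\bar{r}_2)=\bar{y}_2$. Evaluating the suitability condition (\ref{suitable}) at $\omega=0$ and using $M(j0)=1-\int h-\sum_i h_i>0$ from (\ref{h_ineq}) gives $G(0)>0$ in the monotone case and $G(0)>-1/k$ in the slope-restricted case. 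Together with monotonicity of $Q$ this makes $\bar{u}_2\mapsto\bar{u}_2+G(0)Q(\bar{u}_2)$ strictly increasing; in the slope-restricted case $Q$ is moreover Lipschitz, so this map is continuous with derivative bounded below by $\min(1,1+kG(0))>0$, hence a bijection of $\mathbb{R}$, giving existence and uniqueness.

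The key step is to rewrite the loop in the deviations $\tilde{u}_2=u_2-\bar{u}_2\mathds{1}$, $\tilde{y}_2=y_2-\bar{y}_2\mathds{1}$, and so on. Using linearity of $\boldsymbol{G}$, the deviations satisfy the Lurye equations (\ref{eq:Lurye}) with the same $\boldsymbol{G}$, exogenous input $r_2-\bar{r}_2\mathds{1}$, and shifted nonlinearity $\tilde{Q}(x)=Q(x+\bar{u}_2)-\bar{y}_2$. By construction $\tilde{Q}(0)=0$, and monotonicity (resp.\ slope restriction on $[0,k]$) is inherited, so the shifted nonlinearity lies in $\Phi^{ti}$ (resp.\ $\Phi^{sr}_k\cap\Phi^{ti}$) with the sector/slope bound $k$ unchanged. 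Since $\boldsymbol{M}$ is still suitable for the same $\boldsymbol{G}$ (resp.\ $1/k+\boldsymbol{G}$), Theorem~1 applies to the deviation system and bounds its gain by a quantity depending only on $M$, $G$ and $k$, as in the expressions of Table~\ref{tab1}, and therefore \emph{independent of} $\bar{r}_2$. Identifying this deviation system with $\boldsymbol{H}_{\bar{u}}$ of Definition~\ref{def:offset} (the settling transient from rest to the operating point being an $\mathcal{L}_2$ effect absorbed as a non-zero initial condition per Remarks~\ref{rem:ic1} and~\ref{rem:ic2}) then yields a uniform gain bound, which is exactly FGOS; stability of the deviation map also confirms that $\boldsymbol{H}(u)$ is an $\mathcal{L}_2$-bias signal with bias $H_0(\bar{r}_2)$.

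The main obstacle is the pair of facts underpinning this reduction: the unique solvability of the steady-state equation and the membership of the shifted nonlinearity in $\Phi^{ti}$ (resp.\ $\Phi^{sr}_k\cap\Phi^{ti}$). Both hinge on the DC sign condition extracted from suitability together with monotonicity; the slope-restricted case is clean, whereas a purely monotone $Q$ may be discontinuous, so some care is needed to guarantee the operating point exists and that $\tilde{Q}$ satisfies the linear bound required of an MMB operator. Finally, this argument makes transparent why a suitable $\boldsymbol{M}\in\mathcal{M}_{\text{odd}}$ does \emph{not} suffice: shifting generally destroys oddness, so $\tilde{Q}\notin\Phi^{odd}$ and an odd multiplier no longer certifies positivity of the shifted nonlinearity.
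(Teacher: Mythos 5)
Your proposal is correct and follows essentially the same route as the paper's proof: pass to deviation coordinates about the constant steady-state operating point, observe that the shifted nonlinearity $Q(x+\bar{u}_2)-\bar{y}_2$ remains in $\Phi^{ti}$ (resp.\ $\Phi^{sr}_k\cap\Phi^{ti}$), and apply the standard multiplier theorem with the same $\boldsymbol{M}$ to obtain a gain bound independent of $\bar{r}_2$, which is exactly FGOS. The only substantive difference is that you prove unique solvability of the steady-state equation explicitly via the DC condition $M(0)G(0)>0$ (resp.\ $M(0)(1/k+G(0))>0$), whereas the paper simply invokes monotonicity of $\boldsymbol{\phi}$ with a citation to the classical literature.
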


\begin{proof}
It is sufficient to show $\boldsymbol{L}^{y_2}_{r_2}$ is FGOS. Let $r_1=0$ without loss of generality and let $r_2$ be a bias signal with bias $\bar{r}_2$. 

Suppose first that $r_2$ is constant, i.e. 
$r_2=\bar{r}_2\mathds{1}$ for some $\bar{r}_2\in\mathbb{R}$
and $\boldsymbol{G}$ is a fixed gain $g\in \mathbb{R}$. 
The monotonicity of $\boldsymbol{\phi}$ ensures there is a unique fixed solution 
$y_2=\bar{y}_2\mathds{1}$ \cite{desoer75}.
This defines our candidate steady state map $H_o(\bar{r}_2)=\bar{y}_2$ from $r_2$ to $y_2$. 
The input to the nonlinearity is $u_2=\bar{u}_2\mathds{1}$ where $\bar{u}_2=\bar{r}_2-g\bar{y}_2$.
If the nonlinearity is characterised by~$Q$ then $\bar{y}_2=Q(\bar{u}_2)$.

Now consider our original system where $\boldsymbol{G}$ has frequency response $G(j\omega)$ and set $g=G(0)$. We can define a normalized Lurye system with the same linear element $\boldsymbol{G}$ but nonlinear element $\boldsymbol{\phi}_{\bar{r}_2}\in\Phi^{ti}$
 characterised by $Q_{\bar{r}_2}$ where
\begin{equation}
Q_{\bar{r}_2}(x) = Q (x+\bar{u}_2) - \bar{y}_2 \text{ for all }x\in\mathbb{R}.
\end{equation}
Denote $\boldsymbol{H_{\bar{r}_2}}$ as the map from $r_2-\bar{r}_2\mathds{1}$ to $y_1-\bar{y}_1\mathds{1}$. If $\boldsymbol{M}\in\mathcal{M}$ is suitable for $\boldsymbol{G}$ then $\boldsymbol{H_{\bar{r}_2}}$ is FGS. It follows that $\boldsymbol{L}^{y_2}_{r_2}$ for the original system is FGOS.

If in addition $\boldsymbol{\phi}\in\Phi^{sr}_k$ then $\boldsymbol{\phi}_{\bar{r}_2}\in\Phi^{sr}_k$ also.
\end{proof}

\begin{remark}
There is no corresponding result when $\boldsymbol{M}\in\mathcal{M}_{\text{odd}}-\mathcal{M}$. In this case we can have  $\boldsymbol{\phi}_{\bar{r}_2}\notin\Phi^{odd}$ even if $\boldsymbol{\phi}\in\Phi^{odd}$. Hence if $\boldsymbol{M}\in\mathcal{M}_{\text{odd}}-\mathcal{M}$ is suitable for $\boldsymbol{G}$ there is no guarantee that $\boldsymbol{H_{\bar{r}_2}}$ is stable.
\end{remark}

\begin{remark}
Suppose $\boldsymbol{M}$ is suitable for $\boldsymbol{G}$ (or for $1/k+\boldsymbol{G}$) and can be used to ensure the $\mathcal{L}_2$ gain of $\boldsymbol{L}^{y_2}_{r_2}$ is bounded above by $h$. It follows from the proof of Theorem~\ref{thm:offset} that the offset gain of $\boldsymbol{L}^{y_2}_{r_2}$ is also bounded above by $h$.
\end{remark}

\section{Periodic excitation}\label{sec:periodic}

\subsection{Lurye systems with periodic nonlinearities}\label{sec:pernon}

Altshuller's preliminary results in \cite{Altshuller:11,Altshuller:13} concern Lurye systems where $\boldsymbol{\phi}\in\Phi^{sr}_k\cap\Phi^p_T$ for some period $T$ and slope $k$. 
In addition there are regularity conditions on both the linear part (described by integral equations) and the nonlinearity ($N(\cdot,\cdot)$ is required to continuous with respect to its first variable and differentiable with respect to its second).
Altshuller establishes that the Altshuller multipliers are sufficient for absolute $\mathcal{L}_2$ stability, using delay-IQCs to establish the results. Here we show that classical methods can be used to obtain absolute $\mathcal{L}_2$ stability for very general LTI systems and a more general class of periodic nonlinearity.


We begin by showing that the Altshuller multipliers are necessary and sufficient to preserve the positivity of periodic nonlinearities.

\begin{lemma}
    Let $\boldsymbol{\phi}\in\Phi^p_T$ for some $T>0$. If $u\in\mathcal{L}_2$ and $y=\boldsymbol{\phi}(u)$ then 
\begin{equation}
    \int_{-\infty}^{\infty}
u(t+nT)y(t)\,dt \leq \int_{-\infty}^{\infty}
u(t)y(t)\,dt,
\end{equation}
for all $n\in\mathbb{Z}$.
\end{lemma}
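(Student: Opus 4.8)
The plan is to mimic the classical argument for time-invariant monotone nonlinearities, where positivity preservation follows from the convexity of the \emph{primitive} of the nonlinearity, and then to exploit the periodicity of the characterising function together with the translation-invariance of the Lebesgue integral on $\mathbb{R}$. The subtlety is that the lemma asks for a one-sided inequality for \emph{each} individual $n$, whereas a naive use of monotonicity only yields a symmetric average; routing through the convex primitive is what sharpens this. Throughout I write $A_n=\int_{-\infty}^{\infty}u(t+nT)\,y(t)\,dt$, so that the claim is $A_n\le A_0$ for all $n\in\mathbb{Z}$.

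Let $N\in\mathcal{N}^p_T$ characterise $\boldsymbol{\phi}$, so $y(t)=N(t,u(t))$, with $N(t,\cdot)$ nondecreasing, $|N(t,x)|\le C|x|$, and $N(t+T,x)=N(t,x)$. First I would introduce the primitive
\begin{equation}
P(t,x)=\int_0^x N(t,\xi)\,d\xi .
\end{equation}
For each fixed $t$ this is convex in $x$ (its derivative $N(t,\cdot)$ is nondecreasing) and it inherits the period, $P(t+T,x)=P(t,x)$, hence $P(t+nT,x)=P(t,x)$ for every $n\in\mathbb{Z}$. The bound $|N(t,x)|\le C|x|$ gives $|P(t,x)|\le\tfrac{C}{2}x^2$, so, extending $u$ by zero to $\mathbb{R}$ (whereupon $y$ and $P(\cdot,u(\cdot))$ vanish for $t<0$ since $N(t,0)=0$), one has $y\in\mathcal{L}_2$ and $P(\cdot,u(\cdot)),\,P(\cdot,u(\cdot+nT))\in\mathcal{L}_1(\mathbb{R})$; every integral below converges.

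The key pointwise estimate is the subgradient inequality for the convex function $P(t,\cdot)$, applied with $a=u(t)$ and $b=u(t+nT)$:
\begin{equation}
N(t,u(t))\,\big(u(t)-u(t+nT)\big)\;\ge\;P(t,u(t))-P(t,u(t+nT)).
\end{equation}
Integrating over $t\in(-\infty,\infty)$ and recalling $y(t)=N(t,u(t))$ gives
\begin{equation}
A_0-A_n\;\ge\;\int_{-\infty}^{\infty}\big[P(t,u(t))-P(t,u(t+nT))\big]\,dt .
\end{equation}
It then remains to show the right-hand side is zero. Substituting $s=t+nT$ in the second term and invoking periodicity, $P(s-nT,\cdot)=P(s,\cdot)$, yields $\int P(t,u(t+nT))\,dt=\int P(s,u(s))\,ds=\int P(t,u(t))\,dt$, so the two primitives cancel and $A_0-A_n\ge 0$, which is the claim for every $n\in\mathbb{Z}$.

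I expect the only real obstacle to be this final cancellation, and more precisely the temptation to argue directly from monotonicity. Pairing the monotonicity inequality $\big(N(t,u(t))-N(t,u(t+nT))\big)\big(u(t)-u(t+nT)\big)\ge 0$ (using $N(t,u(t+nT))=y(t+nT)$ by periodicity) and integrating only delivers the averaged bound $\tfrac12(A_n+A_{-n})\le A_0$, not the sharp per-$n$ statement. It is the passage through the convex primitive $P$, combined with both its $T$-periodicity and the translation-invariance of the integral on $\mathbb{R}$, that makes each individual $n$ come out; so I would take care to justify integrability and the zero-extension convention that legitimises the translation substitution.
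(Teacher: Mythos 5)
Your proof is correct and is essentially the same as the paper's: the paper likewise introduces the convex primitive $F(t,x)=\int_0^x N(t,\xi)\,d\xi$ (your $P$), applies the gradient inequality $F(t,a)-F(t,b)\le (a-b)N(t,a)$ with $a=u(t)$, $b=u(t+nT)$, and cancels the integrated primitives via the $T$-periodicity of $F$ in its first argument together with a translation change of variables. Your added care about integrability ($|P(t,x)|\le \tfrac{C}{2}x^2$) and the zero-extension of $u$ only makes explicit what the paper leaves implicit.
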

\begin{proof}[Proof (c.f. \cite{desoer75}, p205)]
$\mbox{ }$\\
Let $\mathcal{\phi}$ be characterised by 
$N\in\mathcal{N}^{p}_T$.
Define $F(t,p)$ so that 
\begin{equation}
    \frac{\partial F(t,p)}{\partial p}= N(t,p)\text{ and }F(t,0)=0\text{ for all }t.
\end{equation}
Then $F$ is a periodic function in its first variable and a convex function in its second. In addition,
it follows that if $u\in\mathcal{L}_2$ then the signal $f\in\mathcal{L}_1$ where $f(t)=F(t,u(t))$.
For any $a,b\in\mathbb{R}$ we have
\begin{align}
    F(t,a)-F(t,b) & =\int_b^aN(t,p)\,dp,\nonumber\\
    & \leq (a-b)N(t,a).
\end{align}
Put  $a=u(t)$ and $b=u(t+nT)$. So
\begin{equation}\label{ineq}
    F(t,u(t))-F(t,u(t+nT))
    \leq (u(t)-u(t+nT))y(t).
\end{equation}
Integrating the left hand side of (\ref{ineq}) yields
\begin{align}
    \int_{-\infty}^{\infty} & F(t,u(t)) -F(t,u(t+nT))\, dt \nonumber \\
      & = \int_{-\infty}^{\infty} F(t,u(t))-F(t+nT,u(t+nT))\, dt, \nonumber \\
      &  = 0.
\end{align}
Hence integrating the right hand side of (\ref{ineq}) yields the result.
%
%
%
\end{proof}

\begin{lemma}$\mbox{ }$\\
        (a) Let 
        $\boldsymbol{\phi}\in\Phi^p_T$ for some $T>0$. Let $\boldsymbol{M}\in\mathcal{A}_T$. Then for all $u\in\mathcal{L}_2$ we have
\begin{equation}
\int_{-\infty}^{\infty} (\boldsymbol{M}u)(t)\,(\boldsymbol{\phi}( u))(t)\,dt \geq 0.
\end{equation}
        (b) Let $\boldsymbol{M}\in\mathcal{M}-\mathcal{A}_T$ for some $T>0$. Then there exists a 
        $\boldsymbol{\phi}\in\Phi^p_T$ and a  $u\in\mathcal{L}_2$ such that
\begin{equation}
\int_{-\infty}^{\infty} (\boldsymbol{M}u)(t)\,(\boldsymbol{\phi}( u))(t)\,dt < 0.
\end{equation}
\end{lemma}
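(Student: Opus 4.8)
The plan is to handle the two parts by very different means: part~(a) by a direct estimate resting on the preceding lemma, and part~(b) by an explicit construction. For part~(a), since $\boldsymbol{M}\in\mathcal{A}_T$ has $h_0=0$ its action is $(\boldsymbol{M}u)(t)=u(t)-\sum_{n\neq 0}h_n u(t-nT)$, so with $y=\boldsymbol{\phi}(u)$ I would write
\[
\int_{-\infty}^{\infty}(\boldsymbol{M}u)(t)\,y(t)\,dt = \int_{-\infty}^{\infty}u(t)y(t)\,dt - \sum_{n\neq 0}h_n\int_{-\infty}^{\infty}u(t-nT)y(t)\,dt.
\]
Replacing $n$ by $-n$ in the preceding lemma gives $\int u(t-nT)y(t)\,dt\leq\int u(t)y(t)\,dt$ for every $n$; since each $h_n\geq 0$ this yields $\int(\boldsymbol{M}u)y\geq\big(1-\sum_n h_n\big)\int uy$. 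Because $\sum_n h_n<1$ and because monotonicity and boundedness of $N$ force $xN(t,x)\geq 0$, hence $u(t)y(t)\geq 0$ pointwise, the right-hand side is nonnegative.

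For part~(b) the key observation is that $\boldsymbol{M}\notin\mathcal{A}_T$ is equivalent to the multiplier carrying strictly positive \emph{off-lattice} mass: either $h\not\equiv 0$ (and, as $T\mathbb{Z}$ is null, then $h>0$ on an off-lattice set) or some delay has $t_i\notin T\mathbb{Z}$. I would exploit periodicity to collapse this to a finite measure on one period. For $T$-periodic $u$ the correlation $\int u(t-\tau)y(t)\,dt$ depends only on $\tau\bmod T$, so wrapping $h(\tau)\,d\tau+\sum_i h_i\delta(\tau-t_i)$ onto $[0,T)$ produces a positive measure $\nu$ with total mass $\int h+\sum_i h_i<1$ and, crucially, $\nu\big((0,T)\big)>0$ precisely when $\boldsymbol{M}\notin\mathcal{A}_T$.

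Taking $\boldsymbol{\phi}$ to be a $T$-periodic gain $N(t,x)=\alpha(t)x$ with $\alpha\geq 0$ bounded (which indeed lies in $\Phi^p_T$, being monotone, bounded and periodic), so that $y=\alpha u$, the per-period average $\tfrac{1}{T}\int_0^T(\boldsymbol{M}u)\,y$ reduces to
\[
\frac{1}{T}\int_0^T \alpha(t)\,u(t)\Big[u(t)-\int_{[0,T)}u(t-\sigma)\,d\nu(\sigma)\Big]\,dt.
\]
It therefore suffices to exhibit a $T$-periodic $u\geq 0$ and a point $t_*$ with $u(t_*)>0$ yet $u(t_*)<\int_{[0,T)}u(t_*-\sigma)\,d\nu(\sigma)$; by continuity the bracketed term times $u$ is then negative on a neighbourhood of $t_*$, and choosing $\alpha$ a nonnegative bump supported there makes the whole average negative. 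Such a $u$ is elementary: pick $\sigma_\dagger\in\mathrm{supp}(\nu)\cap(0,T)$, let $u$ be a small positive constant plus a tall narrow bump at a point $\rho$, and set $t_*=\rho+\sigma_\dagger$, so the right-hand side captures the bump (large) while $u(t_*)$ only sees the background (small).

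The hard part is not this periodic computation but the two technical steps bracketing it. First, a naive concentrated input fails for the continuous part of $\boldsymbol{M}$: a pulse of width $\eta$ correlates with only $O(\eta^2)$ of the diffuse mass of $h$ against $O(\eta)$ of self-energy, so the ratio vanishes; the wrapping onto $[0,T)$ is exactly what repackages that diffuse mass as a genuine measure $\nu$ that a single bump can then exploit, which is why periodicity is essential rather than cosmetic. Second, the construction above lives in the power/periodic setting whereas the statement demands $u\in\mathcal{L}_2$; I would finish by truncating the periodic $u$ to $N$ periods, so that $\int(\boldsymbol{M}u_N)\,\boldsymbol{\phi}(u_N)$ equals $N$ times the (negative) per-period value plus boundary corrections that remain $O(1)$ as $N\to\infty$, using $\int h+\sum_i h_i<\infty$ together with boundedness of $u$ and $\alpha$ to control the multiplier's tail and edge effects. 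For $N$ large this integral is negative, furnishing the required $\boldsymbol{\phi}$ and $u$. (For the special case of a single off-lattice delay, incidentally, one can bypass both wrapping and truncation with a direct two-pulse $\mathcal{L}_2$ construction, the continuous part then being negligible.)
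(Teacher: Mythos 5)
Your part (a) is essentially the paper's own proof: the same expansion of $(\boldsymbol{M}u)$ into $u$ minus nonnegatively-weighted lattice shifts, the same appeal to the shift inequality of the preceding lemma (with $n$ replaced by $-n$), and the observation that $\sum_n h_n<1$ together with $u(t)(\boldsymbol{\phi}(u))(t)\geq 0$ pointwise closes the estimate; nothing further is needed there.

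Part (b) is correct but takes a genuinely different route. The paper fixes a two-level square wave $\bar{u}$ (values $1$ and $1+\Delta$ on half-periods) and a two-level periodic gain (values $1+\Delta$ and $1/(1+\Delta)$), computes the per-period correlation in closed form --- $T(1+\Delta)$ at lattice shifts versus $T(1+\Delta)+\min(\tau,T-\tau)\Delta^2$ at an off-lattice shift $\tau$ --- and wins by letting $\Delta\to\infty$, since the off-lattice excess grows quadratically in $\Delta$ while the baseline grows only linearly; it then truncates to land in $\mathcal{L}_2$, exactly as you do (and just as tersely). You instead wrap the multiplier's mass into a measure $\nu$ on $[0,T)$, pick a support point $\sigma_\dagger\in(0,T)$, and use a constant-plus-tall-bump periodic input with a gain $\alpha$ localized at $\rho+\sigma_\dagger$, winning by letting the bump height grow. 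Your route buys robustness: it treats the atoms and the diffuse part of $m$ uniformly, needing only $\nu(B_\epsilon(\sigma_\dagger))>0$, whereas the paper's case split assumes $h\geq\underline{h}$ a.e.\ on an interval avoiding multiples of $T/2$ --- a property a general $L_1$ density need not have (that gap in the paper is fixable via Lebesgue density points, but your argument sidesteps it entirely). The paper's route buys fully explicit formulas with no wrapping or two-scale machinery. One small correction to your bookkeeping: with only $\int h+\sum_i h_i<1$ and no first-moment bound on $h$, the truncation's edge corrections are not $O(1)$ but only $o(N)$, since the tail mass $\int_{|\tau|>s}h(\tau)\,d\tau$ tends to zero without necessarily being integrable in $s$; this is still sufficient, because the per-period contribution is a fixed negative number, so the total is negative for $N$ large.
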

\begin{proof} $\mbox{ }$\\
(a)
    Let $\boldsymbol{M}$ have impulse response $m(t)$ given by (\ref{alt_imp}) and satisfying (\ref{alt_cond}). 
Then
\begin{align}
\int_{-\infty}^{\infty} & (\boldsymbol{M}u)(t)  (\boldsymbol{\phi}( u))(t)\,dt
 = \int_{-\infty}^{\infty} (m*u)(t) y(t)\,dt,\nonumber\\
& = \int_{-\infty}^{\infty} u(t)y(t)\,dt - \sum_{n=-\infty}^{\infty}h_n \int_{-\infty}^{\infty} u(t-nT)y(t)\,dt,\nonumber\\
& \geq 0.
\end{align}
(b) We construct such a nonlinearity $\boldsymbol{\phi}$ and signal $u$ as follows.
Let $\boldsymbol{\phi}$ be characterised by 
$N\in\mathcal{N}^p_T$
with
\begin{equation}
N(t,u(t)) = \left \{\begin{array}{ll}(1+\Delta)u(t) & \text{for }0\leq t<T/2,\\
u(t)/(1+\Delta) & \text{for }T/2\leq t<T.\end{array}\right .
\end{equation}
Let $\bar{u}$ be periodic with
\begin{equation}
\bar{u}(t) = \left \{\begin{array}{ll}1  & \text{for }0\leq t<T/2,\\
1+\Delta & \text{for }T/2\leq t<T.\end{array}\right .
\end{equation}
Then the output $\bar{y}=\boldsymbol{\phi}(\bar{u})$ is also periodic with 
\begin{equation}
\bar{y}(t) = \left \{\begin{array}{ll}1+\Delta  & \text{for }0\leq t<T/2,\\
1 & \text{for }T/2\leq t<T.\end{array}\right .
\end{equation}
Then
\begin{align}
\int_0^T \bar{u}(t+nT)\bar{y}(t) \, dt & =  T (1 + \Delta  )\text{ for any }n\in\mathbb{Z}.
\end{align}
Suppose $0<\tau < T$. 
We find
\begin{equation}
\int_{0}^{T}  \bar{u}(t+\tau)  \bar{y}(t)\, dt
=\left \{
    \begin{array}{l} T (1 + \Delta  )+\tau \Delta^2 \\
    \hspace{1.5cm}\text{ for }0<\tau\leq T/2,\\
    T(1+\Delta  )+(T-\tau)\Delta^2\\
     \hspace{1.5cm}\text{ for }T/2<\tau < T.
    \end{array}
    \right .
%
\end{equation}

Now suppose $\boldsymbol{M}\in \mathcal{M}-\mathcal{A}_T$. 
We can write its impulse response as in Definition~\ref{def2a} where either $h_i>0$ for some $t_i$ that is not an integer multiple of $T$ or there exists some $\underline{h}>0$ such that $h(t)\geq\underline{h}$ almost everywhere on some interval $(t_j,t_j+\delta)$ which contains no integer multiple of $T/2$.


Suppose first there is some $t_i$ with $h_i>0$. 
Write $t_i = n_i T + \tau_i$
with $0<\tau_i<T$.
Then 
\begin{align}
\int_0^T \bar{y}(t) [m*\bar{u}](t)\, dt & 
< T(1+\Delta)-h_i\min(\tau_i,T-\tau_i)\Delta^2,\nonumber\\
& < 0 \text{ for }\Delta \text{ sufficiently big.}
\end{align}
Suppose instead there is some $\underline{h}>0$ with $h(t)\geq\underline{h}$ almost everywhere on some interval $(t_j,t_j+\delta)$ which contains no integer multiple of $T/2$. Write $t_j = n_j T + \tau_j$
with $0<\tau_j<T$. Then
\begin{align}
\int_0^T \bar{y}(t) [m*\bar{u}](t)\, dt & 
< T(1+\Delta)-\underline{h}\delta\min(\tau_j,T-\tau_j)\Delta^2,\nonumber\\
& < 0 \text{ for }\Delta \text{ sufficiently big.}
\end{align}

Finally we can truncate to find $u=\bar{u}_{\bar{T}}\in\mathcal{L}_2$ and $y=\bar{y}_{\bar{T}}\in\mathcal{L}_2$ with $\bar{T}$ sufficiently big that the inequalities still hold.
\end{proof}

We can now state a stability result that generalises Theorem~5 in \cite{Altshuller:11} for the single-input single-output case.

\begin{theorem}\label{thm:pernon}
    A Lurye system with  $\boldsymbol{\phi}\in\Phi^p_T$ for some $T>0$ is FGOS if there is 
an $\boldsymbol{M}\in\mathcal{A}_T$ suitable for~$\boldsymbol{G}$.  
%
%
If, in addition, $\boldsymbol{\phi}\in\Phi^{sr}_k$ for some $k>0$ then the Lurye system is FGOS if there is 
an $\boldsymbol{M}\in\mathcal{A}_T$ suitable for~$1/k+\boldsymbol{G}$.  
The closed-loop gains can be bounded by the expressions in Table~\ref{tab1} for any suitable $\boldsymbol{M}\in\mathcal{A}_T$.
\end{theorem}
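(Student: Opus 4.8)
The plan is to mirror the proof of Theorem~\ref{thm:offset}, making two substitutions: the time-invariant positivity of the OZF multiplier is replaced by the periodic positivity just established (part~(a) of the preceding lemma), and the constant steady state is replaced by a $T$-periodic one. As already observed for the Lurye system of Definition~\ref{def:lurye}, since $\boldsymbol{G}$ is LTI and stable it suffices to show that $\boldsymbol{L}^{y_2}_{r_2}$ is (offset) stable; set $r_1=0$. The engine of the argument is a plain $\mathcal{L}_2$ bound. For $r_2\in\mathcal{L}_2$ the loop equations give $u_2+\boldsymbol{G}\boldsymbol{\phi}(u_2)=r_2$ with $y_2=\boldsymbol{\phi}(u_2)$, so that
\[
\mathrm{Re}\,\langle\boldsymbol{M}r_2,y_2\rangle=\mathrm{Re}\,\langle\boldsymbol{M}u_2,\boldsymbol{\phi}(u_2)\rangle+\mathrm{Re}\,\langle\boldsymbol{M}\boldsymbol{G}y_2,y_2\rangle .
\]
The first term is nonnegative by the periodic positivity property (valid because $\boldsymbol{M}\in\mathcal{A}_T$ and $\boldsymbol{\phi}\in\Phi^p_T$), while the second is at least $\varepsilon\|y_2\|^2$ by Parseval and the suitability condition~(\ref{suitable}); boundedness of $\boldsymbol{M}$ on $\mathcal{L}_2$ then yields $\varepsilon\|y_2\|^2\le\|\boldsymbol{M}\|\,\|r_2\|\,\|y_2\|$ and hence finite gain. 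Because $\boldsymbol{\phi}$ is time-varying I cannot invoke the time-invariant stability theorem of Zames and Falb directly; the a~priori boundedness of $u_2,y_2$ and the admissibility of applying the positivity inside the loop are made rigorous either by the homotopy IQC argument of \cite{Megretski97} (scaling $\boldsymbol{\phi}$ by $\tau\in[0,1]$ preserves monotonicity, periodicity and slope restriction) or by classical loop transformation. The same quadratic form underlies Table~\ref{tab1}, so those bounds transfer verbatim to any suitable $\boldsymbol{M}\in\mathcal{A}_T$.

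To obtain the offset statement I would follow Theorem~\ref{thm:offset}. For a constant bias $r_2=\bar r_2\mathds{1}$ the periodicity of $\boldsymbol{\phi}$ forces the steady state to be $T$-periodic rather than constant, so here the steady-state map sends $\bar r_2$ to a $T$-periodic signal and the relevant property is that the deviation map is FGS (the natural periodic reading of Definition~\ref{def:offset}). Existence of this $T$-periodic solution $(\bar u_2,\bar y_2)$ is the standing assumption of Altshuller \cite{Altshuller:11}, justified here for state-space representations by Theorems~\ref{thm:ss} and~\ref{thm:dd}. Shifting about it produces the recentred nonlinearity characterised by $\tilde N(t,x)=N(t,x+\bar u_2(t))-\bar y_2(t)$; using $\bar u_2(t+T)=\bar u_2(t)$, $\bar y_2(t+T)=\bar y_2(t)$ and $N(t+T,\cdot)=N(t,\cdot)$ one checks $\tilde N(t+T,x)=\tilde N(t,x)$, while $\tilde N(t,0)=0$ and monotonicity (and slope restriction, when assumed) are inherited pointwise in $t$, so $\tilde{\boldsymbol{\phi}}\in\Phi^p_T$ (respectively $\Phi^{sr}_k\cap\Phi^p_T$). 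Applying the $\mathcal{L}_2$ bound of the first paragraph to the shifted Lurye system, whose linear part is still $\boldsymbol{G}$ and whose multiplier $\boldsymbol{M}\in\mathcal{A}_T$ is still suitable, shows the deviation map $\boldsymbol{H_{\bar{u}}}$ is FGS with a gain independent of $\bar r_2$, which is the required offset stability.

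The slope-restricted claim is obtained exactly as in the preliminary passage from the monotone to the slope-restricted IQC (the $-(M+M^*)/k$ term): every manipulation of the nonlinearity there is pointwise in $t$ and uses only the monotone positivity, both of which are compatible with membership of $\Phi^p_T$ once the time-invariant positivity is replaced by its periodic counterpart; suitability of $\boldsymbol{M}$ for $1/k+\boldsymbol{G}$ then closes the argument and the corresponding entries of Table~\ref{tab1} again apply. I expect the main obstacle to be the offset construction rather than the stability estimate: one must guarantee that a $T$-periodic steady state exists to shift about, and verify that the recentred nonlinearity stays in $\Phi^p_T$. The latter is immediate in the slope-restricted case (the shift is Lipschitz with constant $k$), but in the purely monotone case it needs the regularity hypotheses on $N$, since a shifted argument could otherwise destroy the bound $|\tilde N(t,x)|\le C|x|$.
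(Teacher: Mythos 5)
Your proposal is correct and, in its architecture, matches the paper's own proof: FGS follows from the positivity of Lemma~2a for $\boldsymbol{M}\in\mathcal{A}_T$ against $\boldsymbol{\phi}\in\Phi^p_T$ together with suitability, made rigorous by standard multiplier machinery (the paper uses the invertibility and factorization of $\boldsymbol{M}$ inherited from $\mathcal{A}_T\subset\mathcal{M}$, citing \cite{Carrasco12} and \cite{desoer75}; your homotopy-IQC alternative via \cite{Megretski97} is an acceptable substitute, cf.\ Remark~\ref{rem:strict}); the bounds of Table~\ref{tab1} transfer because the quadratic forms of Section~\ref{subsec:mult} use only that positivity; and FGOS is obtained by recentring about a steady state as in Theorem~\ref{thm:offset}.

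The one substantive difference is your handling of the FGOS step, and there you are more careful than the paper, which disposes of it in a single sentence (``FGOS follows from the development of Chapter~\ref{sec:bias}''). That development constructs a \emph{constant} steady state from the monotonicity of a \emph{time-invariant} nonlinearity; as you observe, for a genuinely time-varying $\boldsymbol{\phi}\in\Phi^p_T$ a constant bias input produces a $T$-periodic, generally non-constant, steady state. For instance with $N(t,x)=(1.5+\sin(2\pi t/T))\,\mathrm{sat}(x)$, a low-gain strictly positive-real $\boldsymbol{G}$, $\boldsymbol{M}=\boldsymbol{I}\in\mathcal{A}_T$ and $r_2=\mathds{1}$, the output $y_2$ is asymptotically periodic and non-constant, hence not an $\mathcal{L}_2$-bias signal, so Definition~\ref{def:offset} cannot hold literally. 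Your ``periodic reading'' of offset stability (steady-state map into $T$-periodic signals, deviation map FGS with gain uniform in the bias) is what the argument actually delivers, and your check that the recentred nonlinearity $\tilde N(t,x)=N(t,x+\bar u_2(t))-\bar y_2(t)$ stays in $\Phi^p_T$ (and in $\Phi^{sr}_k$ when assumed) is exactly the needed step. Two caveats on your side. First, existence of the $T$-periodic steady state is not literally supplied by Theorems~\ref{thm:ss} and~\ref{thm:dd}, which treat time-invariant nonlinearities under periodic forcing rather than periodic nonlinearities under constant forcing; the Yoshizawa machinery does adapt (the closed loop is then a $T$-periodic state-space system), but this must be argued or taken as a hypothesis. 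Second, the possible loss of the bound $|\tilde N(t,x)|\le C|x|$ after recentring, which you flag in the purely monotone case, is a genuine issue --- though it equally afflicts the paper's own proof of Theorem~\ref{thm:offset}, where membership $\boldsymbol{\phi}_{\bar r_2}\in\Phi^{ti}$ is asserted without checking boundedness (a monotone bounded $Q$ with a jump yields a shifted nonlinearity with no linear bound at the origin).
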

\begin{proof}
Since $\mathcal{A}_T\subset \mathcal{M}$ any $\boldsymbol{M}\in\mathcal{A}_T$ inherits the properties of the OZF multipliers. In particular $\boldsymbol{M}^{-1}$ exists (see Remark~\ref{rem:strict} and \cite{Carrasco12}); this, together with Lemma~2a, establishes the positivity of the map from $x$ to $y$ where $u=\boldsymbol{M}^{-1}x$; similarly an appropriate factorization of $\boldsymbol{M}$ is guaranteed (see \cite{Carrasco12}); FGS then follows from standard multiplier theory (see, e.g., \cite{desoer75}). 
FGOS follows from the development of Chapter~\ref{sec:bias}. 
The expressions for 
closed-loop gains follow similar to the development of Section~\ref{subsec:mult}. 
\end{proof}



The generalisation of Theorem~\ref{thm:pernon} to either an $n\times n$ multivariable system or to a discrete-time system (or to both) is straightforward.



\subsection{Lurye systems with periodic excitation}

Altshuller \cite{Altshuller:11,Altshuller:13} uses results similar to those of Section~\ref{sec:pernon} to analyse Lurye systems with time-invariant nonlinearities but with periodic exogenous signals. The following generalises a  result in  \cite{Altshuller:11} for LTI systems whose dynamics are governed by integral equations.

Both the extension to more general LTI systems and the implications for bounds on the closed-loop gain follow immediately given our previous analysis.

\begin{theorem}\label{thm:perexc}

Consider a Lurye system where $\boldsymbol{\phi}\in\Phi^{ti}$. 
Let $r_2^*$ be periodic (and non-zero) with period $T>0$. Suppose when $r_1=0$ and $r_2=r_2^*$ there exists a (not-necessarily unique or attracting) non-zero periodic solution $y_2=y_2^*$ with period $T$. If there is 
an $\boldsymbol{M}\in\mathcal{A}_T$ suitable for~$\boldsymbol{G}$ then the periodic solution is unique and a global attractor. Further, let $h$ be the bound on the gain from $r_2$ to $y_2$ given in Table~\ref{tab1} with this multiplier. Then if $r_2-r_2^*\in\mathcal{L}_2$ we have $\|y_2-y_2^*\|\leq h \|r_2-r_2^*\|$ and if $r_2-r_2^*\in\mathcal{P}$ we have $\|y_2-y_2^*\|_P\leq h \|r_2-r_2^*\|_P$.

A similar result follows if, in addition, $\boldsymbol{\phi}\in\Phi^{sr}_k$ for some $k>0$ 
and there is an $\boldsymbol{M}\in\mathcal{A}_T$ suitable for~$1/k+\boldsymbol{G}$.

\end{theorem}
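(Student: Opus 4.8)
The plan is to reduce everything to the periodic-nonlinearity result of Theorem~\ref{thm:pernon} by linearising the loop about the assumed periodic solution. Write $\tilde{r}_2 = r_2 - r_2^*$ and, for any solution driven by $r_1=0$ and $r_2$, set $\tilde{u}_2 = u_2 - u_2^*$ and $\tilde{y}_2 = y_2 - y_2^*$, where $u_2^* = -\boldsymbol{G} y_2^* + r_2^*$. Since $\boldsymbol{G}$ is LTI, subtracting the two copies of (\ref{eq:Lurye}) gives $\tilde{u}_2 = -\boldsymbol{G}\tilde{y}_2 + \tilde{r}_2$ and $\tilde{y}_2(t) = \tilde{N}(t,\tilde{u}_2(t))$ with $\tilde{N}(t,x) := Q(u_2^*(t)+x) - Q(u_2^*(t))$. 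Thus $(\tilde{u}_2,\tilde{y}_2)$ obeys a Lurye system with the same $\boldsymbol{G}$, exogenous input $\tilde{r}_2$, and nonlinearity $\tilde{\boldsymbol{\phi}}$ characterised by $\tilde{N}$. The first step is to verify $\tilde{\boldsymbol{\phi}}\in\Phi^p_T$: periodicity $\tilde{N}(t+T,\cdot)=\tilde{N}(t,\cdot)$ follows from periodicity of $u_2^*$, while monotonicity in $x$ and $\tilde{N}(t,0)=0$ are immediate; in the slope-restricted case $\tilde{N}$ inherits the bound on $[0,k]$ directly so $\tilde{\boldsymbol{\phi}}\in\Phi^{sr}_k\cap\Phi^p_T$, and in the general monotone case the sector bound $|\tilde{N}(t,x)|\le C'|x|$ follows from boundedness of $u_2^*$ under the operative regularity assumptions. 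With the \emph{same} $\boldsymbol{M}\in\mathcal{A}_T$, suitable for $\boldsymbol{G}$ (respectively $1/k+\boldsymbol{G}$), Theorem~\ref{thm:pernon} then makes the perturbation system FGOS, with the gain from $\tilde{r}_2$ to $\tilde{y}_2$ bounded by the corresponding entry $h$ of Table~\ref{tab1}.

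Uniqueness can be read off directly. If $y_2^{**}$ were a second $T$-periodic solution for the same $r_2^*$, then $\tilde{y}_2 := y_2^{**}-y_2^*$ and $\tilde{u}_2 := u_2^{**}-u_2^*$ are $T$-periodic and satisfy the perturbation loop with $\tilde{r}_2=0$. On $T$-periodic signals the Altshuller multiplier collapses to the positive scalar $c:=1-\sum_n h_n$, since $\tilde{u}_2(\cdot-nT)=\tilde{u}_2(\cdot)$; equivalently $M(j\omega_k)=c$ at the harmonics $\omega_k=2\pi k/T$. Hence $\int_0^T(\boldsymbol{M}\tilde{u}_2)\tilde{y}_2\,dt = c\int_0^T \tilde{u}_2\tilde{y}_2\,dt\ge 0$ by monotonicity of $\tilde{N}$. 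On the other hand $\tilde{u}_2=-\boldsymbol{G}\tilde{y}_2$, and Parseval over one period shows the same integral equals $-cT\sum_k\operatorname{Re}[G(j\omega_k)]\,|\hat{y}_{2,k}|^2$; suitability forces $\operatorname{Re}[M(j\omega_k)G(j\omega_k)]=c\operatorname{Re}[G(j\omega_k)]>\varepsilon$, so this quantity is strictly negative unless every $\hat{y}_{2,k}=0$. Compatibility with the lower bound $0$ forces $\tilde{y}_2\equiv 0$, i.e. $y_2^{**}=y_2^*$. The slope-restricted case is identical after the loop transformation replacing $\boldsymbol{G}$ by $1/k+\boldsymbol{G}$ and monotonicity by the slope condition.

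Global attractivity and the quantitative bounds then follow. For an arbitrary solution of the original loop with $r_2=r_2^*$ started from arbitrary initial data, $(\tilde{u}_2,\tilde{y}_2)$ again solves the perturbation loop with $\tilde{r}_2=0$ but with nonzero initial conditions; using Remark~\ref{rem:ic2} these are absorbed into a fictitious $\mathcal{L}_2$ exogenous signal on a backward time extension, so FGS of the perturbation system gives $\tilde{y}_2\in\mathcal{L}_2$, and with $\tilde{y}_2$ uniformly continuous (from the state-space realisation) Barbalat's lemma yields $\tilde{y}_2(t)\to 0$, i.e. $y_2(t)\to y_2^*(t)$. For the bounds, applying the Table~\ref{tab1} gain $h$ of the perturbation system to $\tilde{r}_2=r_2-r_2^*\in\mathcal{L}_2$ gives $\|y_2-y_2^*\|\le h\|r_2-r_2^*\|$, and applying Theorem~\ref{thm:power} to $\tilde{r}_2\in\mathcal{P}$ gives $\|y_2-y_2^*\|_P\le h\|r_2-r_2^*\|_P$.

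I expect the reduction itself to be routine, so the main obstacles are the two delicate technical points. The first is confirming $\tilde{\boldsymbol{\phi}}\in\Phi^p_T$ in the general monotone case: the shifted nonlinearity retains monotonicity and periodicity trivially, but its sector bound relies on boundedness of $u_2^*$ together with the regularity of $Q$, and is only automatic under slope restriction. The second is making the global-attractor step rigorous, namely the conversion of nonzero initial conditions into an $\mathcal{L}_2$ exogenous perturbation via Remark~\ref{rem:ic2} and the justification of the passage from $\tilde{y}_2\in\mathcal{L}_2$ to $\tilde{y}_2(t)\to 0$. Uniqueness and the gain bounds are, by contrast, immediate once the reduction to Theorem~\ref{thm:pernon} is in place.
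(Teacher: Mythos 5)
Your proposal is correct and takes essentially the same route as the paper: the identical deviation-variable reduction (the paper's $\boldsymbol{\phi}^\partial$, your $\tilde{\boldsymbol{\phi}}$) to a Lurye system with periodic nonlinearity, followed by application of the same Altshuller multiplier via Theorem~\ref{thm:pernon} to get FGS of the perturbation loop and hence uniqueness, global attractivity and the gain bounds. If anything you are more explicit than the paper, which simply asserts $\boldsymbol{\phi}^\partial\in\Phi^p_T$ and that FGS means ``deviations tend to zero'' without your Parseval-based uniqueness argument, your Barbalat/initial-condition step, or your (legitimate) caveat that the sector bound on the shifted nonlinearity is only automatic in the slope-restricted case.
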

\begin{proof}
Let $\phi$ by characterised by $Q\in\mathcal{Q}$. Let $u_1^*=-y_2^*$, $y_1^*=\boldsymbol{G}u_1^*$ and $u_2^*=y_1^*+r_2^*$. Define the deviation variables $u^\partial_1 = u_1-u^*_1$, $u^\partial_2 = u_2-u^*_2$, $y^\partial_1 = y_1-y^*_1$, $y^\partial_2 = y_2-y^*_2$.
    Then 
    \begin{equation}
        u^\partial_1 = -y^\partial_1\text{, } u^\partial_2  =  y^\partial_2\text{, }  y^\partial_1 = \boldsymbol{G}u^\partial_1\text{ and }y^\partial_2=\boldsymbol{\phi}^\partial(u^\partial_2),
    \end{equation}
    where $\boldsymbol{\phi}^\partial$ is characterised by $N^\partial:\mathbb{R}^+\times\mathbb{R}\rightarrow\mathbb{R}$ with
    \begin{equation}
        N^\partial(t,u^\partial_2(t))=Q(u^\partial_2(t)+u^*_2(t))-Q(u_2^*(t)).
    \end{equation}
    See Fig~\ref{fig:Lurye_delta}.
    It follows that the periodic solution of the Lurye system with LTI element $\boldsymbol{G}$, nonlinearity $\boldsymbol{\phi}$ and periodic $r_2\neq 0$ is a global attractor if and only if the autonomous Lurye system with LTI element $\boldsymbol{G}$ and nonlinearity $\boldsymbol{\phi}^\partial$ is stable.
    Furthermore $\boldsymbol{\phi}^\partial\in\Phi^p_T$; 
    also $\boldsymbol{\phi}^\partial\in\Phi^{sr}_k$ whenever $\boldsymbol{\phi}\in\Phi^{sr}_k$.
    The existence of a suitable Altshuller multiplier ensures this system is FGS - in particular deviations tend to zero as $t\rightarrow\infty$.
\end{proof}

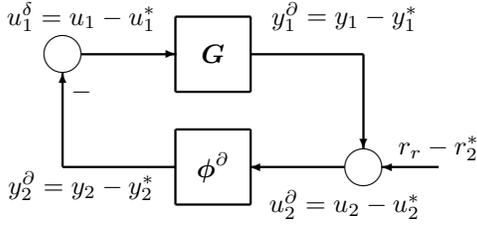
\begin{figure}[tbp]
\begin{center}
\ifx\JPicScale\undefined\def\JPicScale{1}\fi
\unitlength \JPicScale mm
\begin{picture}(60,25)(0,0)
\linethickness{0.3mm}
\linethickness{0.3mm}
\put(10,5){\line(1,0){15}}
\linethickness{0.3mm}
\put(10,5){\line(0,1){12.5}}
\put(10,17.5){\vector(0,1){0.12}}
\linethickness{0.3mm}
\put(12.5,20){\line(1,0){12.5}}
\put(25,20){\vector(1,0){0.12}}
\linethickness{0.3mm}
\put(35,20){\line(1,0){15}}
\linethickness{0.3mm}
\put(50,7.5){\line(0,1){12.5}}
\linethickness{0.3mm}
\put(35,5){\line(1,0){12.5}}
\put(35,5){\vector(-1,0){0.12}}

\put(47.5,25){\makebox(0,0)[cc]{$y_1^\partial=y_1-y_1^*$}}

\linethickness{0.3mm}
\put(25,25){\line(1,0){10}}
\put(25,15){\line(0,1){10}}
\put(35,15){\line(0,1){10}}
\put(25,15){\line(1,0){10}}
\linethickness{0.3mm}
\put(25,0){\line(1,0){10}}
\put(25,0){\line(0,1){10}}
\put(35,0){\line(0,1){10}}
\put(25,10){\line(1,0){10}}
\put(30,5){\makebox(0,0)[cc]{$\boldsymbol\phi^\partial$}}

\put(30,20){\makebox(0,0)[cc]{$\boldsymbol{G}$}}

\linethickness{0.3mm}
\put(10,20){\circle{5}}

\put(12.5,15){\makebox(0,0)[cc]{$-$}}

\put(12.5,25){\makebox(0,0)[cc]{$u_1^\delta=u_1-u_1^*$}}

\linethickness{0.3mm}
\put(50,5){\circle{5}}
\put(50,7.5){\vector(0,-1){0.12}}

\linethickness{0.3mm}
\put(52.5,5){\line(1,0){7.5}}
\put(52.5,5){\vector(-1,0){0.12}}
\put(60,7.5){\makebox(0,0)[cc]{$r_r-r_2^*$}}

\put(47.5,0){\makebox(0,0)[cc]{$u_2^\partial = u_2-u_2^*$}}

\put(12.5,2.5){\makebox(0,0)[cc]{$y_2^\partial= y_2-y_2^*$}}

\end{picture}
\end{center}
\caption{Change of variables for proof of Theorem~\ref{thm:perexc}. $u_1^\partial$, $u_2^\partial$, $y_1^\partial$ and $y_2^\partial$ represent deviations from the periodic solutions $u_1^*$, $u_2^*$, $y_1^*$ and $y_2^*$ given periodic excitation $r_2^*$. With the change of variables $\boldsymbol{G}$ is unchanged but the nonlinearity $\boldsymbol{\phi}^\partial$ is itself periodic.
} 
\label{fig:Lurye_delta}
\end{figure}


Altshuller \cite{Altshuller:11,Altshuller:13} claims that it is ``well-known'' that such a periodic solution exists without even the requirement for a suitable multiplier. While there is a considerable literature on the existence of periodic solutions of differential and delay-differential equations (e.g.\cite{Halanay66,Yoshizawa66,Burton85}) we cannot find evidence for Altshuller's claim. R\u{a}svan \cite{Rasvan11} considers it an assumption. Yakubovich \cite{Yakubovich64} states without proof a similar result where the dynamics of $\boldsymbol{G}$ have a finite-dimensional state-space description and there exists a suitable Popov multiplier. R\u{a}svan \cite{Rasvan11} gives a generalisation to delay-differential systems with an incomplete proof.  In \cite{Jonsson03} existence is established using incremental IQCs; this approach rules out the use of dynamic multipliers for slope-restricted nonlinearities.




Nevertheless 
when the dynamics of $\boldsymbol{G}$ can be represented as a finite-dimensional state-space system, and when there exists a suitable OZF multiplier, the assumption can be verified from material in \cite{Yoshizawa66} with further restrictions on both the nonlinearity and the exogenous signal. 
One of the restrictions in the development in \cite{Yoshizawa66} is that the nonlinearity be Lipschitz continuous. 
We only consider the slope-restricted case which ensures Lipschitz continuity.
\begin{theorem}\label{thm:ss}
Consider a Lurye system where $\boldsymbol{\phi}\in\Phi^{sr}_k\cap\Phi^{ti}$ for some $k>0$. 
Let the dynamics of $\boldsymbol{G}$ be described by the finite-dimensional minimal state-space equations
\begin{align}
    \dot{x}(t) & =Ax(t)+Bu_1(t),\nonumber\\
    y_1(t) & = Cx(t)
\end{align}
where $A$ is Hurwitz.
Let $r_1=0$ and let $r_2$ be periodic with period $T>0$, non-zero and Lipschitz continuous. If there is 
an $\boldsymbol{M}\in\mathcal{M}$ suitable for~$1/k+\boldsymbol{G}$ then there exists a (not-necessarily unique or attracting) non-zero periodic solution with period $T$.  
\end{theorem}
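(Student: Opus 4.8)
The plan is to reduce the claim to a classical existence theorem for periodic solutions of periodically forced ordinary differential equations. Setting $r_1=0$ and eliminating the loop variables, the closed loop becomes the $n$-dimensional nonautonomous system $\dot{x}(t)=Ax(t)-B\,Q\bigl(Cx(t)+r_2(t)\bigr)=:F(t,x(t))$, where $\boldsymbol{\phi}$ is characterised by $Q\in\mathcal{Q}$. Since $r_2$ is periodic with period $T$, $F$ is periodic in its first argument with period $T$; since $Q$ is $k$-Lipschitz (the slope restriction) and $A,B,C$ are constant, $F$ is globally Lipschitz in $x$ uniformly in $t$; and since $r_2$ is Lipschitz continuous and $Q$ is continuous, $F$ is continuous in $t$. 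Thus $F$ meets the regularity hypotheses of Yoshizawa's framework: solutions exist, are unique, extend to $[0,\infty)$, and depend continuously on initial data, so the period map $x_0\mapsto x(T;0,x_0)$ is well defined and continuous. Note also that boundedness of $N$ forces $Q(0)=0$.

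The heart of the proof is to show that the solutions of this forced system are both \emph{uniform-bounded} and \emph{uniform-ultimately bounded} in the sense defined above, and I would obtain this from the multiplier as follows. With $r_2\equiv 0$ the system reduces to the autonomous Lurye interconnection $\dot{x}=Ax-BQ(Cx)$, whose origin is an equilibrium (as $Q(0)=0$); the existence of an $\boldsymbol{M}\in\mathcal{M}$ suitable for $1/k+\boldsymbol{G}$ certifies its absolute stability, and the strict inequality $\varepsilon>0$ in (\ref{suitable}) upgrades this to a global exponential rate for the finite-dimensional minimal Hurwitz realisation. For $r_2\not\equiv 0$ I would write the forced system as this exponentially stable system perturbed by $p(t):=-B\bigl[Q(Cx+r_2)-Q(Cx)\bigr]$, which the slope restriction bounds uniformly by $|p(t)|\le k\,|B|\,\|r_2\|_\infty<\infty$. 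A converse Lyapunov function $V$ for the exponentially stable unforced system, with $c_1|x|^2\le V\le c_2|x|^2$ and $|\nabla V|\le c_3|x|$, then satisfies $\dot{V}\le -c_4|x|^2+c_3|x|\,\|r_2\|_\infty<0$ outside a ball whose radius is independent of the initial time, which gives exactly uniform-boundedness and uniform-ultimate boundedness for some bound $B$. The main obstacle is the first link in this chain: translating the frequency-domain condition for a \emph{dynamic, possibly non-causal} OZF multiplier into a genuine time-domain exponential-stability (equivalently Lyapunov/dissipation) certificate for the state, since the non-causal factor of $\boldsymbol{M}$ carries no direct storage-function interpretation. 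This is precisely the step that forces the extra restrictions (finite-dimensional minimal $\boldsymbol{G}$, Lipschitz nonlinearity and excitation), and where I would lean on standard absolute-stability results rather than reprove them.

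With uniform-boundedness and uniform-ultimate boundedness in hand, the conclusion is immediate from the classical theorem of Yoshizawa \cite{Yoshizawa66}: a $T$-periodic system whose solutions are uniform-bounded and uniform-ultimately bounded admits a $T$-periodic solution, proved by applying a fixed-point theorem to the continuous period map on a compact convex absorbing set. This yields a periodic $x^*$ with $x^*(t+T)=x^*(t)$, and hence periodic closed-loop signals $u_1^*,u_2^*,y_1^*,y_2^*$ of period $T$. Finally, $x^*\equiv 0$ would force $BQ(r_2(t))\equiv 0$; as the realisation is minimal ($B\neq 0$) and $r_2$ is a genuine non-zero $T$-periodic excitation, this can occur only in the degenerate case where $Q$ vanishes on the range of $r_2$, which we exclude, so the periodic solution is non-zero.
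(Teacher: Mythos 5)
Your proposal has the same skeleton as the paper's proof: reduce the loop to the $T$-periodic ODE $\dot{x}=Ax-BQ(Cx+r_2(t))$, use the multiplier to obtain a Lyapunov certificate for the unforced system, perturb that certificate to get uniform boundedness and uniform ultimate boundedness of the forced system, and invoke Yoshizawa's fixed-point theorem for periodic systems. The gap sits exactly at the step you yourself flag as ``the main obstacle'' and then do not close: you require \emph{global exponential stability} of $\dot{x}=Ax-BQ(Cx)$, asserting that the strict inequality $\varepsilon>0$ in the suitability condition ``upgrades'' absolute stability to a global exponential rate, and your argument genuinely needs this, because the converse Lyapunov function with quadratic bounds $c_1|x|^2\le V\le c_2|x|^2$, $|\nabla V|\le c_3|x|$ is what makes your perturbation estimate $\dot V\le -c_4|x|^2+c_3|x|\,k\,|B|\,\|r_2\|_\infty$ close outside a ball. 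But the ``standard absolute-stability results'' you propose to lean on do not deliver this: for a dynamic, possibly non-causal OZF multiplier, the classical result (Vidyasagar, Theorem 46, Section 6.3 --- the one the paper cites) gives \emph{asymptotic stability in the large}, not exponential stability; extracting exponential rates from Zames--Falb IQCs is a much more recent and nontrivial development and cannot be cited as standard. The paper sidesteps the issue entirely: asymptotic stability in the large plus time-invariance yields \emph{uniform} asymptotic stability in the large (Yoshizawa, Theorem 11.3), and Yoshizawa's converse theorem (Theorem 19.5) then supplies a Lyapunov function with the weaker, non-quadratic bounds $a(\|x\|)\le V\le b(\|x\|)$ and $\dot V\le -c(\|x\|)$ --- still enough to survive the bounded perturbation for $\|x\|\ge R$, feed Theorem 10.4 (uniform ultimate boundedness), and then Theorem 29.3 (existence of a $T$-periodic solution). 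So your architecture is right, but it rests on an unproven exponential-stability claim; the repair is to demand less of the converse theorem, exactly as the paper does.

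A second, smaller defect: your closing ``non-zero'' argument misreads the claim and smuggles in an assumption. You exclude the case where $Q$ vanishes on the range of $r_2$, but the theorem makes no such exclusion --- and it cannot, since the paper's own examples use deadzone nonlinearities whose inputs may remain inside the dead band, in which case $x^*\equiv 0$ is a perfectly legitimate state trajectory. ``Non-zero periodic solution'' refers to the closed-loop solution (the signal tuple), not to the state: if $x^*\equiv 0$ then $u_2^*=r_2\neq 0$, so the periodic solution is automatically non-zero because $r_2$ is. No degenerate case needs to be, or may be, excluded.
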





\begin{proof}
Consider first the unforced case where $r_2=0$. Then the existence of a suitable OZF multiplier ensures the system is asymptotically stable in the large (\cite{vidyasagar93}, Theorem 46, Section 6.3). Furthermore, since its dynamics are time-invariant, it is uniform asymptotically stable in the large (\cite{Yoshizawa66}, Theorem 11.3). Hence the system satisfies all the conditions of Theorem 19.5 in \cite{Yoshizawa66} and there exists a Liapunov function $V(t,x(t))$ with the following properties.
\begin{enumerate}
\item There exist continuous, increasing, positive definite functions $a$ and $b$ such that
\begin{equation}
    a(\|x(t)\|)\leq V(t,x(t)) \leq b(\|x(t)\|),
\end{equation}
where $a(r)\rightarrow\infty$ as $r\rightarrow\infty$.  
\item There exists a continuous, positive definite function $c$ such that
\begin{align}
\varlimsup_{h\rightarrow 0^+}\frac{1}{h} 
\left \{ V(t+h,\right . & x(t)+h[Ax(t)-BN(t,Cx(t))])\nonumber\\
& \left . -V(t,x(t))\right \} \leq -c(\|x\|).
\end{align}
\end{enumerate}

Now we consider the same system but where $r_2$ is periodic with period $T>0$, non-zero, and Lipschitz continuous. There exists an $R$ such that for all $\|x(t)\|\geq R$ the function $V(t,x(t))$ maintains the two properties. 
Hence by Theorem 10.4 in \cite{Yoshizawa66} all solutions are uniform-ultimately bounded. It follows from the definition that they are also equi-ultimately bounded. 
Finally by Theorem 29.3 in \cite{Yoshizawa66} this implies there exists a periodic solution with period~$T$.
\end{proof}


We can generalise Theorem~\ref{thm:ss} to delay-differential equations as follows.
\begin{theorem}\label{thm:dd}
Consider a Lurye system where $\boldsymbol{\phi}\in\Phi^{sr}_k\cap\Phi^{ti}$ for some $k>0$. 
Let the dynamics of $\boldsymbol{G}$ be described by the stable delay-differential state-space equations
\begin{align}
    \dot{x}(t) & = \int_{t-h}^t A(t-\theta)x(\theta)\,d\theta + \int_{t-h}^t B(t-\theta)u_1(\theta)\,d\theta,\nonumber\\
    y_1(t) & = Cx(t).
\end{align}
Let $r_1=0$ and let $r_2$ be periodic with period $T>0$, non-zero and Lipschitz continuous. If there is 
an $\boldsymbol{M}\in\mathcal{M}$ suitable for~$1/k+\boldsymbol{G}$ then there exists a (not-necessarily unique or attracting) non-zero periodic solution with period $T$.  
\end{theorem}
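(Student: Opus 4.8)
The plan is to follow the architecture of the proof of Theorem~\ref{thm:ss} essentially verbatim, replacing Liapunov functions on $\mathbb{R}^n$ by Liapunov--Krasovskii functionals on the state space $C=C([-h,0],\mathbb{R}^n)$ of continuous initial segments, and invoking the functional-differential-equation (FDE) counterparts of the Yoshizawa theorems rather than their ordinary-differential-equation versions. Throughout, the state is the segment $x_t(\theta)=x(t+\theta)$ for $\theta\in[-h,0]$, and $\|x_t\|$ denotes the sup-norm on $C$. Note that the stable delay dynamics have a transfer function lying in $\mathcal{S}$, so the multiplier hypotheses of Theorem~1 apply unchanged.

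First I would treat the unforced system $r_2=0$. Since $\boldsymbol{\phi}\in\Phi^{sr}_k\cap\Phi^{ti}$ and there is an $\boldsymbol{M}\in\mathcal{M}$ suitable for $1/k+\boldsymbol{G}$, Theorem~1 guarantees the autonomous Lurye system is FGS; because $\boldsymbol{G}$ is stable and minimal, this $\mathcal{L}_2$ stability transfers to asymptotic stability in the large of the zero solution of the FDE, by the functional analogue of the argument used for the finite-dimensional case (\cite{vidyasagar93}). As the dynamics are time-invariant, the FDE counterpart of (\cite{Yoshizawa66}, Theorem~11.3) upgrades this to uniform asymptotic stability in the large. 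I would then apply the converse Liapunov theorem for FDEs in \cite{Yoshizawa66} — the functional version of Theorem~19.5 — to obtain a functional $V(t,x_t)$ with $a(|x(t)|)\leq V(t,x_t)\leq b(\|x_t\|)$ where $a(r)\to\infty$, and an upper Dini-derivative bound $\dot{V}\leq -c(|x(t)|)$ along trajectories.

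Adding the periodic, non-zero, Lipschitz forcing $r_2$ perturbs the right-hand side of the FDE by a uniformly bounded term, so the two Liapunov properties persist whenever $\|x_t\|\geq R$ for a sufficiently large $R$. The FDE version of (\cite{Yoshizawa66}, Theorem~10.4) then yields uniform-ultimate boundedness of all solutions, and hence equi-ultimate boundedness. Finally I would invoke the FDE counterpart of (\cite{Yoshizawa66}, Theorem~29.3): a $T$-periodic FDE whose solutions are equi-ultimately bounded admits a $T$-periodic solution, which is non-trivial because $r_2\neq 0$ prevents the zero segment from being invariant.

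The hard part, and the only genuine departure from Theorem~\ref{thm:ss}, is this last existence step. The Poincar\'e (period-$T$) map now acts on the infinite-dimensional space $C$, so Brouwer's theorem — the engine behind the finite-dimensional result — is unavailable. What rescues the argument is that the solution operator of a retarded FDE becomes compact once $t\geq h$; combined with ultimate boundedness, this furnishes the compact, convex, positively invariant set required for an asymptotic fixed-point theorem of Horn/Browder type, which is precisely the machinery underlying Yoshizawa's FDE periodicity theorem. I expect the main technical obstacle to be verifying that the integral delay dynamics meet the smoothness and boundedness hypotheses of that theorem — in particular that the memory kernels $A(\cdot)$ and $B(\cdot)$ render the solution map well-defined and eventually compact — so that the equi-ultimate boundedness established above legitimately triggers the fixed-point conclusion.
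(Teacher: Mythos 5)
Your overall architecture is the same as the paper's: follow the proof of Theorem~\ref{thm:ss}, replacing Liapunov functions by Liapunov functionals and the ODE theorems of \cite{Yoshizawa66} by their FDE counterparts. But there is one genuine gap, and it sits exactly where the paper locates the real difficulty. You invoke ``the converse Liapunov theorem for FDEs in \cite{Yoshizawa66} --- the functional version of Theorem~19.5.'' No such theorem is stated in \cite{Yoshizawa66}: the paper points this out explicitly (``there is no stated counterpart in \cite{Yoshizawa66} to Theorem 19.5'') and has to import the converse result from elsewhere, namely Theorem~3.4 of \cite{Karafyllis11} (see also Theorem~4.4 of \cite{Haidar25}). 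The same issue, in milder form, affects your first two steps: transferring $\mathcal{L}_2$ input-output stability to asymptotic stability in the large of the FDE, and upgrading to uniform asymptotic stability by time-invariance, are not simply ``functional analogues'' of the arguments in \cite{vidyasagar93} and Theorem~11.3 of \cite{Yoshizawa66}; the paper cites Sections~4.3(a) and~4.2 of \cite{Karafyllis11} for these. Since the converse Liapunov functional is the engine that produces the ultimate-boundedness estimate under the periodic forcing, your proof as written rests on a reference that does not exist, even though the mathematical content can be rescued by the correct citations.

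Two smaller inaccuracies are worth flagging. First, the FDE periodicity theorem you appeal to (Theorem~37.1 of \cite{Yoshizawa66}, the counterpart of Theorem~29.3) requires solutions to be both uniform-bounded and uniform-ultimately bounded, not merely equi-ultimately bounded as you state; fortunately the boundedness theorem (Theorem~35.1, the counterpart of Theorem~10.4) delivers both properties, so the chain still closes once the hypotheses are stated correctly. Second, you identify the ``hard part'' as the infinite-dimensional fixed-point step (eventual compactness of the solution operator plus a Horn/Browder-type asymptotic fixed-point theorem). That machinery is indeed what underlies Yoshizawa's Theorem~37.1, but it is already packaged there; the step that genuinely has no off-the-shelf source in \cite{Yoshizawa66} is the converse Liapunov theorem, which is where your proposal, taken literally, fails.
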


\begin{proof}
The proof is similar in structure to that of Theorem~\ref{thm:ss}. 
Theorem 37.1 in \cite{Yoshizawa66} is the counterpart of Theorem 29.3 with the requirement that solutions are both uniform-bounded and uniform-ultimately bounded. Similarly Theorem 35.1 in \cite{Yoshizawa66} is the counterpart of Theorem 10.4 in \cite{Yoshizawa66} with a Liapunov-functional taking the place of Liapunov function and giving conditions that ensure solutions are both uniform-bounded and uniform-ultimately bounded. However there is no stated counterpart in \cite{Yoshizawa66} to Theorem 19.5 in \cite{Yoshizawa66}. 
Such a counterpart is given by Theorem 3.4 in \cite{Karafyllis11}; see also Theorem 4.4 in \cite{Haidar25}.
Similarly Sections 4.3(a) and 4.2 in \cite{Karafyllis11} provide the appropriate respective 
counterparts to Theorem 46, Section 6.3 in \cite{vidyasagar93} and Theorem 11.3 in \cite{Yoshizawa66}.
\end{proof}

In addition, we may state the following corollary to Theorem~\ref{thm:perexc}, irrespective of the state space structure.
\begin{corollary}\label{cor:lin}
Consider a Lurye system where $\boldsymbol{\phi}\in\Phi^{ti}$,  where $r_1=0$ and where $r_2$ is periodic (and non-zero) with period $T>0$. Suppose a linear solution is feasible. If there is 
an $\boldsymbol{M}\in\mathcal{A}_T$ suitable for~$\boldsymbol{G}$ then this solution is a global attractor. 

A similar result follows if, in addition, $\phi\in\Phi^{sr}_k$ for some $k>0$  
and there is an $\boldsymbol{M}\in\mathcal{A}_T$ suitable for~$1/k+\boldsymbol{G}$.
\end{corollary}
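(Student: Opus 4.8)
The plan is to recognise the \emph{linear solution} as precisely the non-zero $T$-periodic solution whose existence is assumed in Theorem~\ref{thm:perexc}, so that the corollary becomes an immediate specialisation of that theorem; in contrast to Theorems~\ref{thm:ss} and~\ref{thm:dd}, which secure that hypothesis through state-space arguments, here the hypothesis is met by directly exhibiting a periodic solution, and so no state-space structure is needed. First I would make the hypothesis precise. With $r_1=0$ and $\boldsymbol{\phi}\in\Phi^{ti}$ characterised by $Q\in\mathcal{Q}$, a linear solution corresponds to a gain $\kappa$ for which the nonlinearity acts linearly along the resulting trajectory: one replaces $\boldsymbol{\phi}$ by the gain $\kappa$, solves the LTI loop to obtain $u_2^*=(1+\kappa\boldsymbol{G})^{-1}r_2^*$ and $y_2^*=\kappa u_2^*$, and feasibility is the requirement that $Q(u_2^*(t))=\kappa u_2^*(t)$ for all $t$, i.e.\ that $u_2^*$ remains in a range on which $Q$ is exactly linear. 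Under feasibility, $(u_1^*,u_2^*,y_1^*,y_2^*)$ is then a genuine solution of the nonlinear Lurye system.

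Next I would verify that this solution meets the hypotheses of Theorem~\ref{thm:perexc}. Since $r_2^*$ is periodic with period $T$ and the closed-loop map $(1+\kappa\boldsymbol{G})^{-1}$ is LTI and stable, the signal $y_2^*$ is periodic with period $T$; it is non-zero because $r_2^*$ is non-zero and, for a non-trivial linear solution, $\kappa\neq 0$, whence $u_2^*\neq 0$ and $y_2^*=\kappa u_2^*\neq 0$. Thus the feasible linear solution supplies exactly the non-zero $T$-periodic solution $y_2=y_2^*$ required as a hypothesis in Theorem~\ref{thm:perexc}.

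Finally, with this $y_2^*$ in hand and a suitable $\boldsymbol{M}\in\mathcal{A}_T$ for $\boldsymbol{G}$, I would invoke Theorem~\ref{thm:perexc} directly to conclude that the periodic solution is unique and a global attractor, which is the assertion of the corollary. The slope-restricted case is identical, replacing suitability for $\boldsymbol{G}$ by suitability for $1/k+\boldsymbol{G}$ and using that membership in $\Phi^{sr}_k$ passes unchanged to the deviation nonlinearity in the proof of Theorem~\ref{thm:perexc}. The main obstacle is not the attractivity argument, which is wholly delegated to Theorem~\ref{thm:perexc}, but the bookkeeping of the first step: confirming that the linear solution really is a solution of the nonlinear loop -- that the nonlinearity genuinely acts as the gain $\kappa$ along $u_2^*$ -- and that it is non-zero, so that the theorem's hypotheses are legitimately met.
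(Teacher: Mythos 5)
Your proposal is correct and follows exactly the paper's route: the paper's own proof is simply that the (feasible) linear solution must be periodic with period $T$, so it serves as the non-zero periodic solution hypothesised in Theorem~\ref{thm:perexc}, whose conclusion then gives uniqueness and global attractivity. Your additional bookkeeping---making precise that feasibility means $Q(u_2^*(t))=\kappa u_2^*(t)$ along $u_2^*=(1+\kappa\boldsymbol{G})^{-1}r_2^*$, and checking periodicity and non-vanishing of $y_2^*$---is a sound elaboration of the same two-line argument.
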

\begin{proof}
The linear solution must be periodic. Hence the result follows from Theorem~\ref{thm:perexc}.
\end{proof}

\begin{remark}
In Theorem~\ref{thm:ss} it is possible to relax the condition on $\boldsymbol{\phi}$ so that it is only necessarily slope-restricted for large values. The relaxation requires the state-space description to be minimal. Yakubovich \cite{Yakubovich64} states without proof a version of Theorem~\ref{thm:ss} with such a relaxation where there is a suitable Popov multiplier. R\u{a}svan \cite{Rasvan11} states with an incomplete proof a generalisation of the result in \cite{Yakubovich64} to delay-differential systems. 
\end{remark}

\subsection{Phase properties}

It is immediate from Definition~\ref{def:alt} that if $\boldsymbol{M}\in\mathcal{A}_T$ with impulse response (\ref{alt_imp}) then its frequency response
\begin{equation}\label{alt_freq}
    M(j\omega)=1-\sum_{n=-\infty}^{\infty}h_n e^{-j\omega nT}.
\end{equation}
 is periodic with period $2\pi/T$.
Hence we can say
\begin{theorem}
Let $\boldsymbol{G}$ be an LTI system with frequency response $G(j\omega)$. Suppose, given some $\omega>0$ and $T>0$,
\begin{equation}
\left |
    \angle G(j\omega) - \angle G(j\omega +2n\pi/T)
    \right | > \pi\text{ for some integer } n.
\end{equation}
Then there is no $\boldsymbol{M}\in\mathcal{A}_T$ suitable for $\boldsymbol{G}$.
\end{theorem}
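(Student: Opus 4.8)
The plan is to argue by contradiction: assume some $\boldsymbol{M}\in\mathcal{A}_T$ is suitable for $\boldsymbol{G}$, and show that the phase of $G$ at the two frequencies $\omega$ and $\omega+2n\pi/T$ cannot then differ by more than $\pi$ in magnitude, contradicting the hypothesis. Here $\angle G$ is understood as the continuous (unwrapped) phase, which is the interpretation that makes the statement meaningful.

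First I would record two consequences of the standing assumptions. By Definition~\ref{def1}, suitability means $\mathrm{Re}\{M(j\omega)G(j\omega)\}>\varepsilon>0$ for all $\omega$; in particular neither $M(j\omega)$ nor $G(j\omega)$ vanishes on the imaginary axis, so both admit continuous phase functions. From the frequency response (\ref{alt_freq}) together with the constraint $\sum_{n} h_n<1$ in (\ref{alt_cond}) I would observe the crucial fact that $|M(j\omega)-1|\le\sum_{n} h_n<1$, so $M(j\omega)$ lies in the open disc of radius less than one centred at $1$, hence in the open right half-plane. Consequently its continuous phase stays inside $(-\pi/2,\pi/2)$, and since (\ref{alt_freq}) is genuinely $2\pi/T$-periodic in $\omega$ this phase returns to exactly the same value after a shift by an integer multiple of $2\pi/T$. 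Thus $\angle M(j\omega)=\angle M(j(\omega+2n\pi/T))$ with no accumulated multiple of $2\pi$.

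Next I would exploit the continuity of the product phase. Writing $\psi(\omega)=\angle M(j\omega)+\angle G(j\omega)$ for the continuous phase of $M(j\omega)G(j\omega)$, suitability forces $\cos\psi(\omega)>0$ for every $\omega$. The set where the cosine is positive is a disjoint union of open intervals of length $\pi$, and since $\psi$ is continuous it cannot pass between components; hence $\psi$ remains in a single such interval, so $|\psi(\omega+2n\pi/T)-\psi(\omega)|<\pi$. Combining this with the previous step, in which the $\angle M$ contribution cancels, yields $|\angle G(j(\omega+2n\pi/T))-\angle G(j\omega)|=|\psi(\omega+2n\pi/T)-\psi(\omega)|<\pi$. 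This contradicts the hypothesis, and the contrapositive is exactly the asserted statement.

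The step I expect to be the main obstacle — and the one doing the real work — is establishing that $\boldsymbol{M}$ contributes no net phase change across the interval, i.e. that $\angle M(j\omega)=\angle M(j(\omega+2n\pi/T))$ with no hidden winding around the origin. This is precisely where the strict inequality $\sum_{n} h_n<1$ is essential: it confines $M(j\omega)$ to the right half-plane, so its continuous phase neither leaves $(-\pi/2,\pi/2)$ nor accumulates a multiple of $2\pi$. This is exactly the phase limitation inherited from the discrete-time OZF multipliers. Secondary care is needed to justify that $G$ is non-vanishing and that $\angle G$ is taken in the unwrapped sense; both are supplied by the strict positivity $\mathrm{Re}\{MG\}>\varepsilon$ in the definition of suitability.
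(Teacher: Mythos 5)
Your proof is correct and follows essentially the same route as the paper's: suitability pins the phase of $M(j\omega)G(j\omega)$ inside an interval of width $\pi$, while the exact $2\pi/T$-periodicity of $M(j\omega)$ cancels the multiplier's phase contribution between $\omega$ and $\omega+2n\pi/T$, so the phase of $G$ cannot jump by more than $\pi$. Your extra care with unwrapped phase --- confining $M(j\omega)$ to the open right half-plane via $|M(j\omega)-1|\le\sum_n h_n<1$ to rule out any hidden winding --- makes rigorous a step the paper's proof treats implicitly.
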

\begin{proof}
Suppose $\boldsymbol{M}\in\mathcal{A}_T$ has frequency response (\ref{alt_freq}) and
\begin{equation}|\angle M(j\omega)G(j\omega)|\leq\pi/2.\end{equation}
Then
\begin{align}
    \angle M(j\omega +2n\pi/T) & G(j\omega +2n\pi/T)\nonumber \\
   & = \angle M(j\omega)G(j\omega +2n\pi/T).
\end{align}
So if $G(j\omega +2n\pi/T)-G(j\omega)>\pi$ then 
\begin{align}
    \angle M(j\omega +2n\pi/T)  G(j\omega +2n\pi/T) & > \angle M(j\omega)  G(j\omega)+\pi
    \nonumber\\
    & > \pi/2,
\end{align}
and if $G(j\omega +2n\pi/T)-G(j\omega)<-\pi$ then
\begin{align}
    \angle M(j\omega +2n\pi/T)  G(j\omega +2n\pi/T) & < \angle M(j\omega)  G(j\omega)-\pi
    \nonumber\\
    & < -\pi/2.
\end{align}
\end{proof}

Furthermore, it is immediate from Definitions~\ref{def:alt} and~\ref{def2a_d} that the Altshuller multipliers and the discrete-time OZF multipliers share the same frequency response. Hence the Altshuller multipliers inherit the phase limitations of the discrete-time OZF multipliers \cite{Zhang22}.
Hence we can say
\begin{theorem}\label{thm:plm}
Let $\boldsymbol{G}$ be an LTI system with frequency response $G(j\omega)$. Let $\beta>1$ and $\lambda_1,\ldots,\lambda_{\beta-1}\geq 0$ with at least one non-zero. If, for some $T>0$,
\begin{equation}
\sum_{r=1}^{\beta-1}\text{Re}\left \{\lambda_r G(j\omega_r)\right\} \leq \min_{l\in\mathbb{Z}}
\left [
\sum_{r=1}^{\beta-1}\text{Re}\left \{\lambda_r G(j\omega_r)e^{-j\omega_r l}\right \}
\right ]
\end{equation}
where 
$\omega_r = (-1)^{p_r}\frac{r\pi}{T\beta}+\frac{2n_r\pi}{T}$ 
for $r=1,\ldots\beta-1$, where either $p_r=0$ with $n_r\in\mathbb{Z}^+\cup\{0\}$ or $p_r=1$ with $n_r\in\mathbb{Z}^+$, then there is no Altshuller multiplier $\boldsymbol{M}\in\mathcal{A}_T$ with frequency response $M(j\omega)$ such that 
\begin{equation}
    \text{Re}\left \{M(j\omega)G(j\omega)\right \}>0 \text{ for all }\omega\in\mathbb{R}.
\end{equation}
\end{theorem}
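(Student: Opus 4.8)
The plan is to reduce the statement to the phase limitation of discrete-time OZF multipliers established in \cite{Zhang22}, exploiting the frequency-response correspondence recorded just before the theorem. First I would note that, by Definitions~\ref{def:alt} and~\ref{def2a_d}, an Altshuller multiplier $\boldsymbol{M}\in\mathcal{A}_T$ with coefficients $\{h_n\}$ and the discrete-time OZF multiplier with the same coefficients have frequency responses related by $M(j\omega)=M^d(e^{j\omega T})$; in particular, from~(\ref{alt_freq}), $M$ is $2\pi/T$-periodic. Hence $M(j\omega_r)$ depends only on $\theta_r:=\omega_r T \bmod 2\pi$, and the prescribed $\omega_r=(-1)^{p_r}\frac{r\pi}{T\beta}+\frac{2n_r\pi}{T}$ is engineered precisely so that $\theta_r=(-1)^{p_r}\frac{r\pi}{\beta}$. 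The harmonic offsets $2n_r\pi/T$ therefore leave $M$ untouched while letting $G(j\omega_r)$ range over genuinely distinct plant values; this aliasing is what turns a single discrete-time template into the whole family of continuous-time constraints in the statement.

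Next I would argue by contradiction. Suppose some $\boldsymbol{M}\in\mathcal{A}_T$ satisfies $\text{Re}\{M(j\omega)G(j\omega)\}>0$ for all $\omega$. Evaluating at each $\omega_r$, weighting by $\lambda_r\geq 0$ and summing gives
\begin{equation}
\sum_{r=1}^{\beta-1}\lambda_r\,\text{Re}\{M(j\omega_r)G(j\omega_r)\}>0,
\end{equation}
the inequality being strict since at least one $\lambda_r$ is nonzero. Writing $M(j\omega_r)=1-\sum_n h_n e^{-jn\theta_r}$ and setting $A=\sum_r\lambda_r\text{Re}\{G(j\omega_r)\}$ and $C_n=\sum_r\lambda_r\text{Re}\{G(j\omega_r)e^{-jn\theta_r}\}$, this reads $A-\sum_n h_n C_n>0$.

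Then I would show the hypothesis is exactly a Farkas-type certificate that this is impossible. The condition $A\leq\min_l[\cdots]$ supplies $C_n\geq A$ for every shift. Moreover, since each $\theta_r\in(-\pi,0)\cup(0,\pi)$ is nonzero modulo $2\pi$, averaging $C_l$ over $l$ by a Dirichlet-kernel argument gives $\frac{1}{2L+1}\sum_{|l|\leq L}C_l\to 0$, so $\min_l C_l\leq 0$ and hence $A\leq 0$. Using $h_n\geq 0$, $h_0=0$ and $\sum_n h_n<1$ together with $C_n\geq A$ yields $\sum_n h_n C_n\geq A\sum_n h_n$, whence $A-\sum_n h_n C_n\leq A\bigl(1-\sum_n h_n\bigr)\leq 0$, contradicting the strict inequality above. (Equivalently, this last estimate is just the discrete-time phase limitation theorem of \cite{Zhang22} applied at the frequencies $\theta_r$.)

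The main obstacle I anticipate is the bookkeeping of the correspondence rather than the closing estimate: one must check that the discrete frequencies $\theta_r$, the sign choices $p_r$ and the harmonic indices $n_r$ are matched so that the continuous-time hypothesis coincides term-by-term with the discrete-time certificate, and that the shift in the minimisation is read in the discrete domain (shifts by the coefficient index $n$, i.e.\ delays $nT$, rather than arbitrary continuous delays). Once that alignment is pinned down, the estimate above closes the argument, and the averaging observation is what guarantees the seemingly one-sided hypothesis in fact forces $A\leq 0$ and thus the contradiction.
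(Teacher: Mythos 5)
Your proposal is correct, and at the top level it takes the same route as the paper: identify the Altshuller multiplier's frequency response $M(j\omega)=1-\sum_n h_n e^{-jn\omega T}$ with a discrete-time OZF multiplier evaluated at $\theta=\omega T$, observe the $2\pi/T$-periodicity so that $M(j\omega_r)$ depends only on $\theta_r=(-1)^{p_r}r\pi/\beta$ while $G(j\omega_r)$ ranges over genuinely distinct plant values, and transfer the discrete-time phase limitation. The difference lies in what actually gets proved: the paper's entire proof is the citation ``Immediate from Theorem 2 in \cite{Zhang22}'', whereas you supply a self-contained proof of that inner result --- the contradiction setup $A-\sum_n h_n C_n>0$ with $C_n=\sum_r\lambda_r\,\mathrm{Re}\{G(j\omega_r)e^{-jn\theta_r}\}$, the Farkas-type use of the hypothesis $C_l\geq A$ for all $l$, and the Dirichlet-kernel averaging $\frac{1}{2L+1}\sum_{|l|\leq L}C_l\rightarrow 0$ (valid because every $\theta_r$ is nonzero modulo $2\pi$), which forces $A\leq 0$ and hence $A-\sum_n h_n C_n\leq A\left(1-\sum_n h_n\right)\leq 0$, the desired contradiction since $h_n\geq 0$ and $\sum_n h_n<1$. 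The citation buys brevity and consistency with the cited literature; your version buys transparency and self-containedness, in particular it exposes the only non-obvious step (why the apparently one-sided hypothesis forces $A\leq 0$). Your ``bookkeeping'' caveat is also substantive rather than cosmetic: read literally, the exponent $e^{-j\omega_r l}$ in the theorem statement is a shift by $l$ seconds, which does not alias onto the Altshuller delays at multiples of $T$ (and so does not match Theorem 2 of \cite{Zhang22}) unless $T=1$; the reading you adopt, $e^{-j\omega_r lT}=e^{-jl\theta_r}$, is the one under which both your argument and the paper's citation go through, so what you flagged is in effect a typo in the printed statement rather than a gap in your own proof.
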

\begin{proof}
Immediate from Theorem 2 in \cite{Zhang22}.
\end{proof}

If $n_r$ and $p_r$ are fixed then, as in \cite{Zhang22}, it is possible to test the conditions of Theorem~\ref{thm:plm} as the feasibility of a linear program. Although possibly conservative, useful results can also be obtained at single frequencies.
\begin{theorem}\label{thm:single}
If $\boldsymbol{M}\in\mathcal{A}_T$, for some $T>0$, then its frequency response satisfies, for any $a,b\in\mathbb{Z}^+$ with $b>1$,
\begin{align}
\left |\angle M \left (\frac{a}{b}\frac{2\pi j}{T}\right ) \right |  & \leq  \frac{\pi}{2}\left (1-\frac{2}{b}\right ).
\end{align}
Equivalently let $\boldsymbol{G}$ be an LTI system with frequency response $G(j\omega)$.  Suppose there exists some Altshuller multiplier $\boldsymbol{M}\in\mathcal{A}_T$ with frequency response $M(j\omega)$ such that 
\begin{equation}
    \text{Re}\left \{M(j\omega)G(j\omega)\right \}>0 \text{ for all }\omega\in\mathbb{R}.
\end{equation}
Then 
\begin{align}
\left |\angle G \left (\frac{a}{b}\frac{2\pi j}{T}\right ) \right |\leq   \pi\left (1-\frac{1}{b}\right ).
\end{align}
\end{theorem}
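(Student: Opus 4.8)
The plan is to exploit the fact, noted just before Definition~\ref{def2a_d}, that the Altshuller and discrete-time OZF multipliers share the frequency response (\ref{alt_freq}), together with the arithmetic of evaluating it at a rational multiple of the fundamental frequency. First I would set $\omega_0 = \frac{a}{b}\frac{2\pi}{T}$, so that $\omega_0 T = 2\pi a/b$ and each exponential in (\ref{alt_freq}) becomes $e^{-j\omega_0 nT} = e^{-2\pi j a n/b}$, which takes values only among the $b$-th roots of unity. Grouping the coefficients $h_n$ by the residue class of $an \bmod b$ then lets me write $M(j\omega_0) = 1 - \sum_{r=0}^{b-1} g_r \zeta^r$ with $\zeta = e^{-2\pi j/b}$, where $g_r = \sum_{n:\,an\equiv r} h_n \ge 0$ and $\sum_{r} g_r = \sum_n h_n < 1$. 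Since $h_0 = 0$, the term $r=0$ contributes only to the real part, so I would separate it off.

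Next I would recast the expression as a conic combination. Writing $A = 1 - g_0$ and $\psi_r = \pi - 2\pi r/b = \angle(-\zeta^r)$, we have $M(j\omega_0) = A + \sum_{r=1}^{b-1} g_r e^{j\psi_r}$ with $A > 0$ and total deviation weight $\sum_{r=1}^{b-1} g_r = \sum_n h_n - g_0 < 1 - g_0 = A$. The phase bound is then a supporting-line statement: I claim $\angle M(j\omega_0) \le \psi_1/2 = \frac{\pi}{2}(1 - 2/b)$, the largest available half-angle, attained in the limit by placing all deviation weight on $r=1$. I would prove this by showing $\text{Im}\{e^{-j\psi_1/2} M(j\omega_0)\} \le 0$. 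Expanding gives $-A\sin(\psi_1/2) + \sum_{r=1}^{b-1} g_r \sin(\psi_r - \psi_1/2)$, and the key identity $\sin(\psi_r - \psi_1/2) = \cos((2r-1)\pi/b)$ reduces everything to the elementary inequality $\cos((2r-1)\pi/b) \le \cos(\pi/b) = \sin(\psi_1/2)$ for $r = 1,\dots,b-1$; the claim then follows from $\sum_r g_r < A$ and $\sin(\psi_1/2)>0$. Since $\text{Re}\,M(j\omega_0) \ge A - \sum_{r=1}^{b-1} g_r > 0$, the sign of the imaginary part indeed controls the argument. The lower bound $\angle M(j\omega_0) \ge -\psi_1/2$ follows identically, because the available directions are symmetric about the real axis ($\psi_{b-r} = -\psi_r$), so $|\angle M| \le \frac{\pi}{2}(1 - 2/b)$.

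For the second, equivalent statement I would combine this with suitability: if $\text{Re}\{M(j\omega)G(j\omega)\} > 0$ at $\omega = \omega_0$ then $|\angle M(j\omega_0) + \angle G(j\omega_0)| < \pi/2$, so by the triangle inequality $|\angle G(j\omega_0)| \le \pi/2 + |\angle M(j\omega_0)| \le \pi/2 + \frac{\pi}{2}(1 - 2/b) = \pi(1 - 1/b)$, as required.

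I expect the main obstacle to be making the phase-maximisation step watertight rather than merely plausible: it is tempting to assert on geometric grounds that the extreme argument of the conic combination occurs at the single vertex $r=1$, but this must be pinned down, and the cleanest route is the supporting-line computation above, whose crux is the trigonometric inequality $\cos((2r-1)\pi/b) \le \cos(\pi/b)$ (which holds because the arguments $(2r-1)\pi/b$ all lie at angular distance at least $\pi/b$ from the multiples of $2\pi$). Two minor points also need care: when $\gcd(a,b)=d>1$ only the $(b/d)$-th roots of unity appear, which yields the sharper bound $\frac{\pi}{2}(1 - 2d/b)$ and hence \emph{a fortiori} the stated one; and the degenerate case $b=2$, where the bound is $0$ and $M(j\omega_0)$ is manifestly real and positive, should be verified separately.
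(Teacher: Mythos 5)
Your proof is correct, but it takes a genuinely different route from the paper. The paper disposes of this theorem in one line: it observes (just before the theorem) that an Altshuller multiplier $\boldsymbol{M}\in\mathcal{A}_T$ has exactly the frequency response of a discrete-time OZF multiplier, and then cites Theorem~3 of the Zhang--Carrasco--Heath phase-limitation paper \cite{Zhang22}, so the bound is inherited rather than derived. You instead prove the phase bound from scratch: evaluating (\ref{alt_freq}) at $\omega_0=\frac{a}{b}\frac{2\pi}{T}$ collapses the exponentials onto $b$-th roots of unity, grouping the $h_n$ by residue class gives $M(j\omega_0)=A+\sum_{r=1}^{b-1}g_r e^{j\psi_r}$ with $A=1-g_0>\sum_{r\geq 1}g_r$, and the supporting-line computation $\mathrm{Im}\{e^{-j\psi_1/2}M(j\omega_0)\}\leq 0$ reduces, via $\sin(\psi_r-\psi_1/2)=\cos((2r-1)\pi/b)$, to the elementary inequality $\cos((2r-1)\pi/b)\leq\cos(\pi/b)$, which holds since all the angles $(2r-1)\pi/b$ lie in $[\pi/b,\,2\pi-\pi/b]$. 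The rotation argument is watertight because $\mathrm{Re}\,M(j\omega_0)>0$ pins $\angle M$ in $(-\pi/2,\pi/2)$, so no branch ambiguity arises, and the symmetry $\psi_{b-r}=-\psi_r$ gives the lower bound; the deduction of the bound on $\angle G$ from suitability is the same triangle-inequality step the paper would implicitly use. What each approach buys: the paper's citation is short and makes the structural point of the section (Altshuller multipliers inherit discrete-time phase limitations) explicit, whereas your argument is self-contained, yields strictness of the inequality for $b>2$, and gives the sharper bound $\frac{\pi}{2}(1-2d/b)$ when $\gcd(a,b)=d>1$ -- none of which is visible from the citation alone.
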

\begin{proof}
    Immediate from Theorem~3 in \cite{Zhang22}.
\end{proof}

\begin{corollary}\label{cor:incr}
Let $\boldsymbol{G}$ be an LTI system with frequency response $G(j\omega)$.  Suppose for all $T>0$ there exists an Altshuller multiplier $\boldsymbol{M}_T\in\mathcal{A}_T$ with frequency response $M_T(j\omega)$ such that 
\begin{equation}
    \text{Re}\left \{M_T(j\omega)G(j\omega)\right \}>0 \text{ for all }\omega\in\mathbb{R}.
\end{equation}
Then we must have 
\begin{equation}
\left |\angle G \left (j\omega\right )\right | \leq \frac{\pi}{2}
\end{equation}
at all frequencies $\omega$.
\end{corollary}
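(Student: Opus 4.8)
The plan is to isolate a single positive frequency and tailor the period $T$ to it, exploiting the fact that the hypothesis supplies a suitable Altshuller multiplier for \emph{every} $T>0$. The key observation is that Theorem~\ref{thm:single} bounds the phase of $G$ by $\pi(1-1/b)$ at frequencies of the form $\tfrac{a}{b}\tfrac{2\pi}{T}$, and this bound is tightest for the smallest admissible value $b=2$, where it equals exactly $\pi/2$. So rather than invoke any limiting argument in $b$, I would pick, for each target frequency, the one period that places it at a half-period point of the multiplier grid.

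First I would reduce to the case $\omega_0>0$. Since $\boldsymbol{G}$ has a real impulse response, $G(-j\omega)=\overline{G(j\omega)}$ and hence $|\angle G(-j\omega)|=|\angle G(j\omega)|$, so the negative-frequency claim follows from the positive one. Then, fixing $\omega_0>0$, I would set $T=\pi/\omega_0$, so that
\begin{equation}
\omega_0 = \tfrac{1}{2}\cdot\tfrac{2\pi}{T} = \tfrac{a}{b}\,\tfrac{2\pi}{T}, \qquad a=1,\ b=2,
\end{equation}
with $a,b\in\mathbb{Z}^+$ and $b>1$ as Theorem~\ref{thm:single} requires. By hypothesis there is a suitable $\boldsymbol{M}_T\in\mathcal{A}_T$ for this particular $T$, so the second (equivalent) form of Theorem~\ref{thm:single} applies and yields $|\angle G(j\omega_0)|\leq\pi(1-\tfrac{1}{2})=\pi/2$. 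As $\omega_0>0$ was arbitrary, the bound holds at every positive, and hence every non-zero, frequency.

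Finally I would dispatch $\omega_0=0$ directly: the frequency response (\ref{alt_freq}) gives $M_T(0)=1-\sum_n h_n$, which is strictly positive by the summability condition (\ref{alt_cond}) of Definition~\ref{def:alt}, while $G(0)$ is real by conjugate symmetry; the suitability inequality $\mathrm{Re}\{M_T(0)G(0)\}=M_T(0)\,G(0)>0$ then forces $G(0)>0$ and $\angle G(0)=0$. Combining the three cases gives $|\angle G(j\omega)|\leq\pi/2$ at all $\omega$. I do not anticipate a substantive obstacle: the argument hinges entirely on the algebraic remark that $b=2$ makes $\pi(1-1/b)$ collapse to the threshold $\pi/2$, so the only thing to get right is the explicit choice $T=\pi/\omega_0$ that realises $b=2$ for the target frequency, together with the observation that ``for all $T>0$'' in the hypothesis is precisely what lets this period depend on $\omega_0$.
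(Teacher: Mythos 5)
Your proposal is correct and follows essentially the same route as the paper: both arguments set $b=2$ in Theorem~\ref{thm:single} and exploit the freedom to choose $T$ (equivalently, the integer multiple $n$ of $\pi/T$) so that the half-period frequencies sweep out all positive frequencies, yielding the $\pi/2$ bound everywhere. Your version merely makes explicit the choice $T=\pi/\omega_0$ and the routine edge cases ($\omega<0$ by conjugate symmetry, $\omega=0$ by positivity of $M_T(0)$), which the paper leaves implicit.
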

\begin{proof}
Setting $b=2$ in Theorem~\ref{thm:single} gives the requirement
\begin{equation}
\left|\angle M \left (\frac{n\pi j}{T}\right ) \right | =0 \text{ and }
\left |\angle G \left (\frac{n\pi j}{T}\right ) \right |\leq   \frac{\pi}{2}.
\end{equation}
for all $n\in\mathbb{Z}^+$ and for all $T>0$.
\end{proof}
We may interpret Corollary~\ref{cor:incr} as saying if we insist on a unique periodic solution at all frequencies, then we can do no better using multipliers than the circle criterion. This is consistent with Brockett's statement in \cite{Brockett:65} that ``the circle criterion has proven to be just the right tool to establish results on the
existence of unique steady-state oscillations.''

\section{The example of Fromion and Safonov \cite{Fromion04}}



Fromion and Safonov \cite{Fromion04} consider a Lurye system where the LTI system $\boldsymbol{G}$ has transfer function
\begin{equation}
G(s) = \frac{g}{(s^2+0.1+s)(s+100)}\text{ with } g=909.
\end{equation}
Their nonlinearity $\boldsymbol{\phi}\in\Phi^{sr}_1\cap\Phi^{ti}$ is smooth.
They give a suitable OZF multiplier for $1+\boldsymbol{G}$, specifically $\boldsymbol{M}\in\mathcal{M}$ with transfer function
\begin{equation}
M(s)=\frac{1+9s}{1+10^{-6}s}.
\end{equation}
The phases of $1+G(j\omega)$, $M(j\omega)$ and $M(j\omega)(1+G(j\omega))$ are all shown in Fig~\ref{fig01}.
\begin{figure}[tb]
\begin{center}
\includegraphics[width = \figwidth \columnwidth]{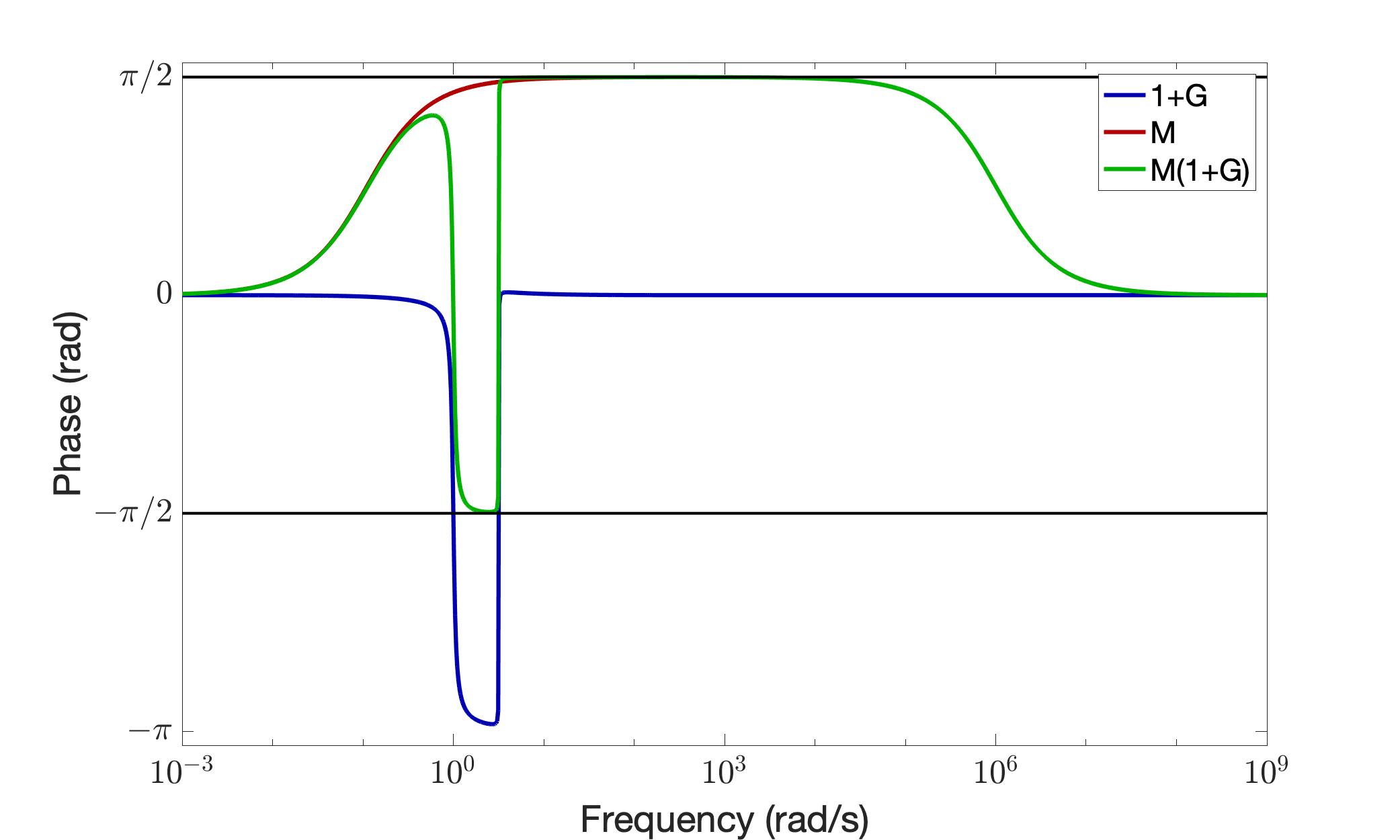}
\caption{Example from \cite{Fromion04} with gain 909. There is a suitable OZF multiplier (phase of its frequency response shown in red). 
}\label{fig01}
\end{center}
\end{figure}

Nevertheless when the exogenous signal is $r_2(t)=\sin(2t)$ (for~$t$ sufficiently large) then they show there are two periodic solutions, both locally attractive, and the system may reach either depending on initial conditions. 
It is straightforward to reproduce the simulation responses they report but with a standard saturation function characterised by
\begin{equation}\label{saturation}
Q(u(t)) = \left \{\begin{array}{ll}-1 & \text{when }u(t)\leq -1,\\
u(t) & \text{when } -1<u(t) < 1,\\
1 & \text{when }u(t)\geq 1.
\end{array}\right .
\end{equation}
Two responses, depending on initial conditions, are shown in Fig~\ref{fig:from}. It is noteworthy that one is the linear response while the other meets the saturation. 
\begin{figure}[ht]
\begin{center}
\includegraphics[width = \figwidth\columnwidth]{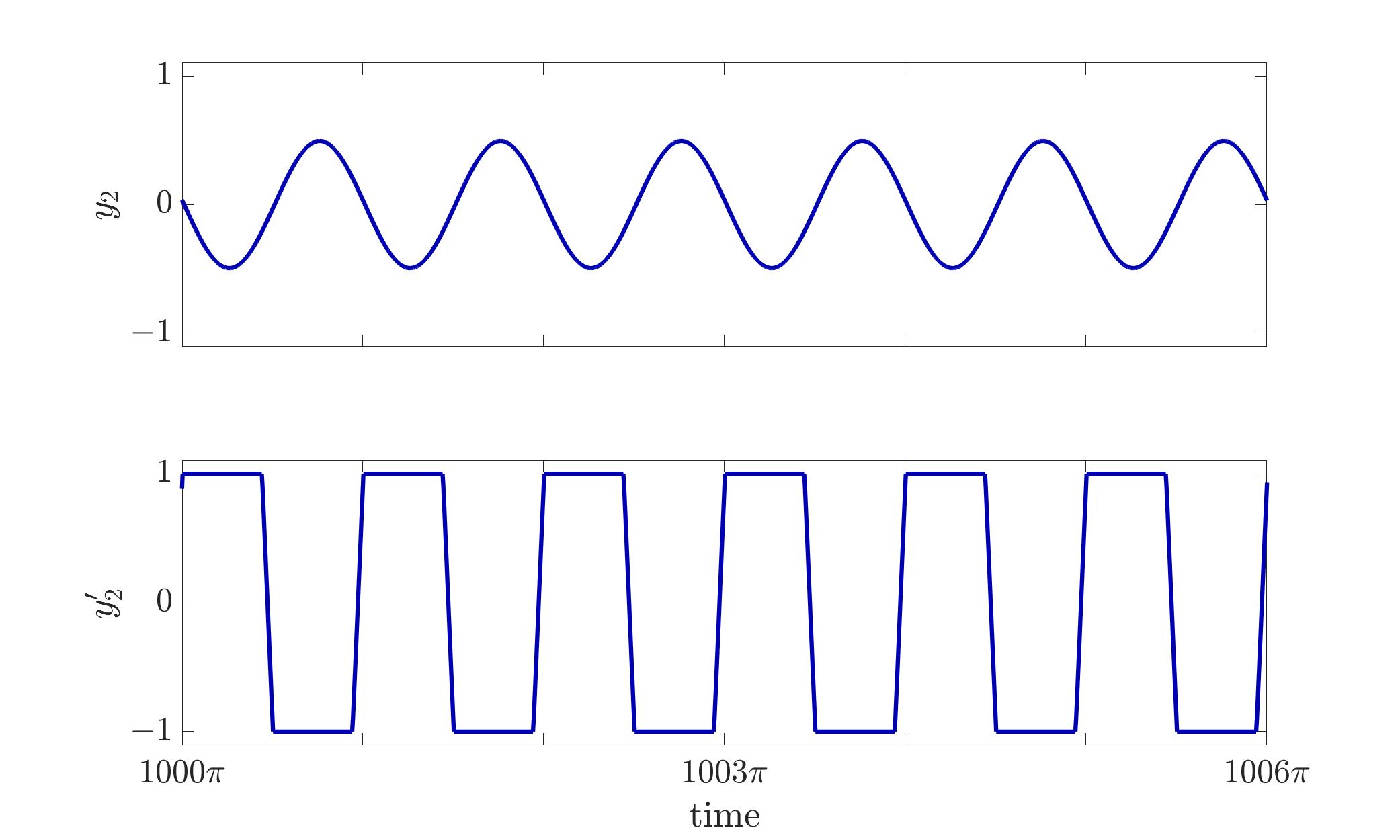}
\caption{Time responses from the example in \cite{Fromion04} with gain 909 and with a saturation nonlinearity. Similar behaviour is reported in \cite{Fromion04} where the signal before the nonlinearity is shown; here the response after the nonlinearity is shown. The first response $y_2$ is the linear response whereas the second $y_2'$ meets the saturation. All signals have period $\pi$, the period of the excitation signal.} \label{fig:from}
\end{center}
\begin{center}
\includegraphics[width = \figwidth \columnwidth]{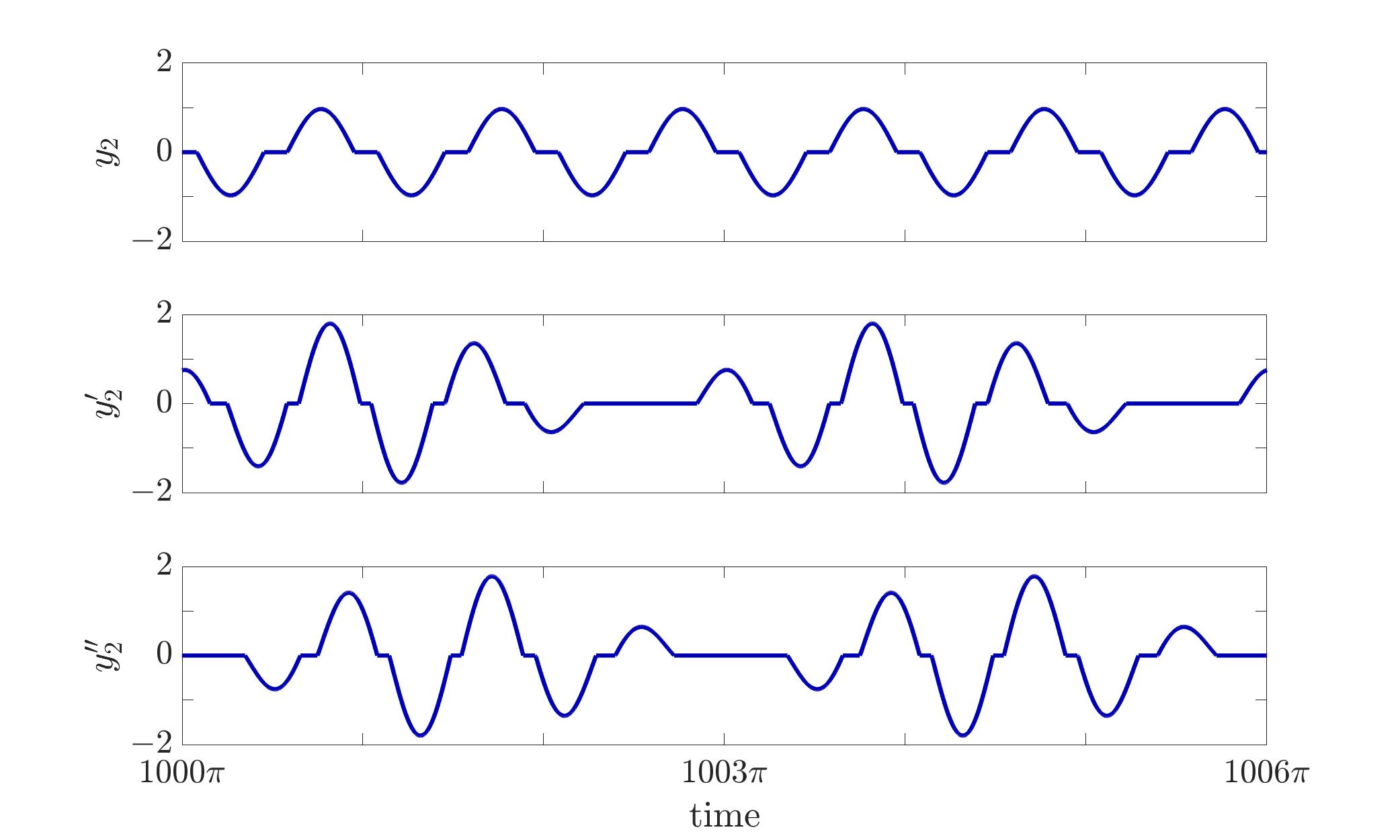}
\caption{Time responses from the example in \cite{Fromion04} with gain 909 but where the nonlinearity is a deadzone. The response may exhibit subharmonics in steady state. The excitation has period $\pi$ but the second and third responses have period $3\pi$}\label{fig:subh}
\end{center}
\end{figure}

Simulations also indicate that such a system may exhibit subharmonics. Figure~\ref{fig:subh} shows three responses (specifically $y_2$) when the nonlinearity is a deadzone characterised by
\begin{equation}\label{deadzone}
Q(u(t)) = \left \{\begin{array}{ll}u(t)+w & \text{when }u(t)\leq -w,\\
0 & \text{when } -w<u(t) < w,\\
u(t)-w & \text{when }u(t)\geq w,
\end{array}\right .
\end{equation}
with $w=0.5$.
The first response is periodic with period $\pi$, as is the excitation.  The second and third responses are periodic with period $3\pi$. If we label them $y_2'$ and $y_2''$ respectively, then  $y'_2(t+\pi)\neq y_2'(t)$ except at isolated points, and similarly for $y_2''$. The signals are related as $y_2''(t)=-y_2'(t-\pi)$.


If we set $0<g<20.77$ (working to two decimal places) then $\boldsymbol{G}$ satisfies the circle criterion 
and the Lurye system is guaranteed incremental stable. By contrast, if $g>73.37$ then the phase of $1+\boldsymbol{G}$ exceeds the phase limitation of Theorem~\ref{thm:single} for period $T=\pi$ at 1.2 rad/s (see Fig~\ref{fig03}). Hence there can be no suitable Altshuller multiplier $\boldsymbol{M}\in\mathcal{A}_\pi$ for $g>73.37$.


If we set $g=50$ then multipliers $\boldsymbol{M}_\theta\in\mathcal{A}_{\theta \pi}$ are all suitable for $1+\boldsymbol{G}$ with $1\leq \theta\leq 1.08$. Fig~\ref{fig05} illustrates this for $\theta = 1, 1.01, \ldots, 1.08$. This says there is a unique globally attractive solution when the exogenous signal is $r_2=\sin(2t/\theta)$ over the same range of $\theta$. NB these values were obtained by hand; it is possible to find suitable higher order Altshuller multipliers with larger values of $g$. We make no attempt to optimise values here. However the phase of $1+G(j\omega)$ drops below $-\pi/2$ when $\omega = 1.01$ rad/s. This says there can be no $\boldsymbol{M}\in\mathcal{A}_{0.99\pi}$.

\begin{figure}[ht]
\begin{center}
\includegraphics[width = \figwidth \columnwidth]{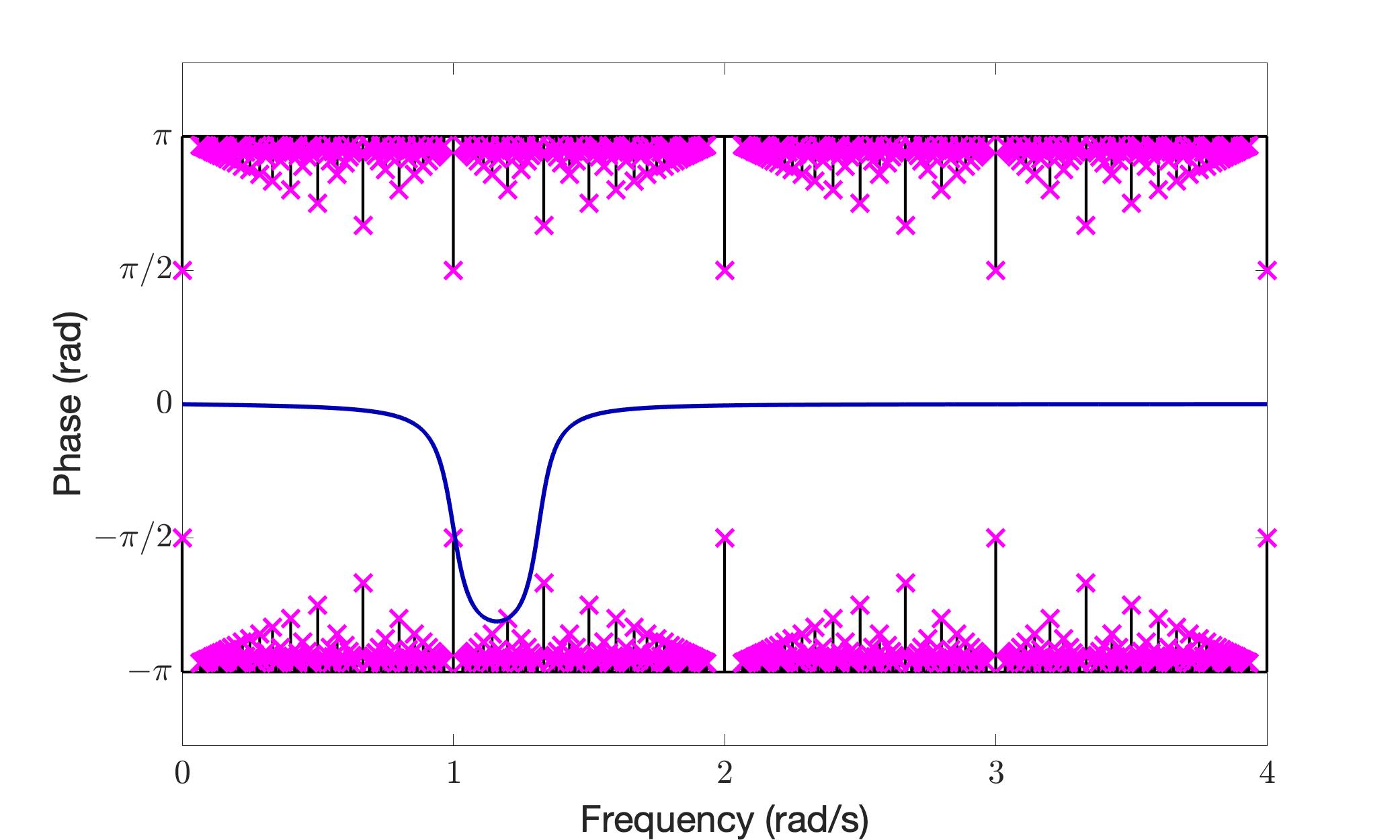}
\caption{Example from \cite{Fromion04} but with gain 73.37. The phase of $1+G$ touches the limitation at frequency $1.2$ rad/s. There can be no suitable Altshuller multiplier $\boldsymbol{M}\in\mathcal{A}_\pi$ when the gain is higher.}\label{fig03}
\end{center}
\begin{center}
\includegraphics[width = \figwidth \columnwidth]{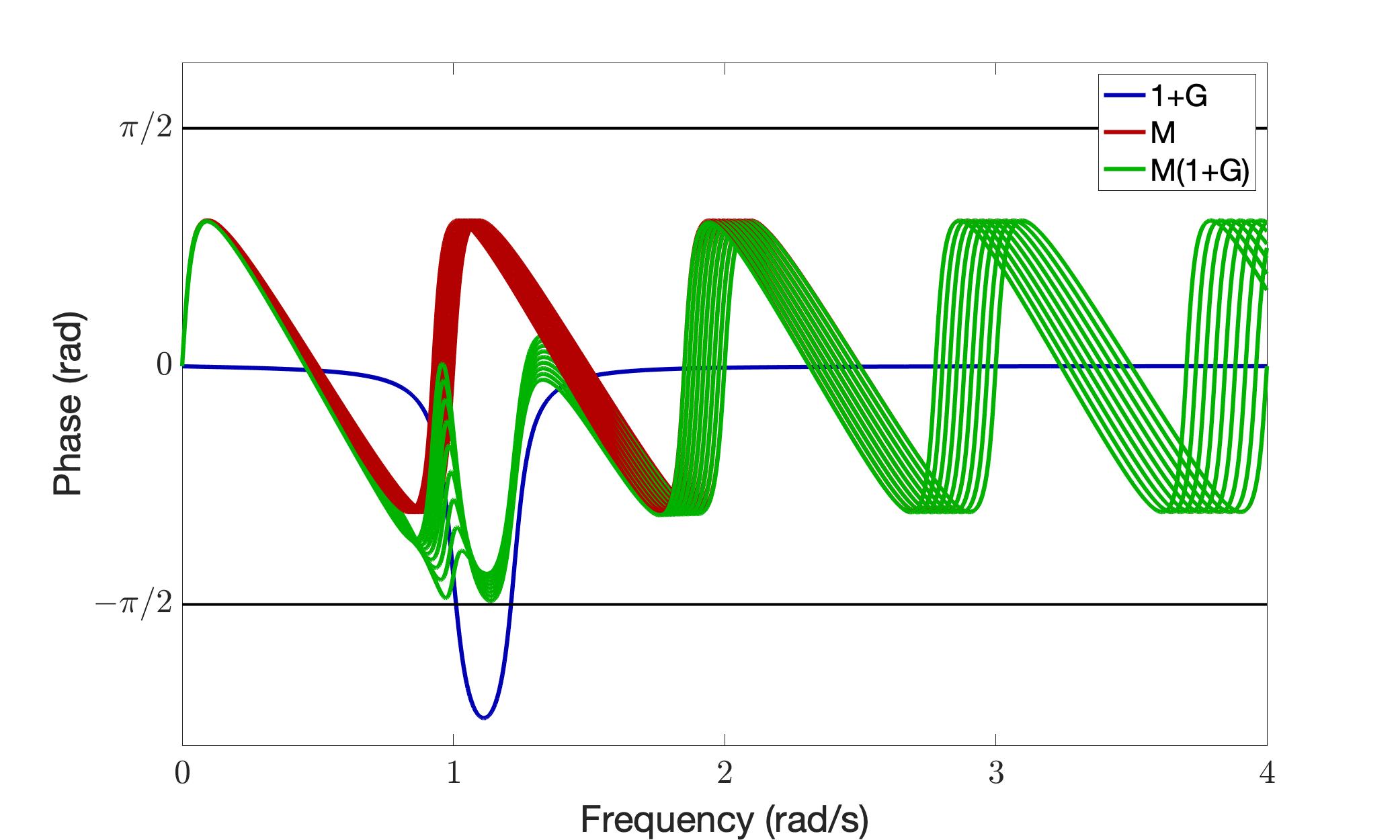}
\caption{Example from \cite{Fromion04} but with gain 50. The Altshuller multipliers $\boldsymbol{M}\in\mathcal{A}_{\theta \pi}$ with frequency responses $M(j\omega)=1-0.82e^{-2\theta j\omega\pi}$ with $1\leq \theta \leq 1.08$ are all suitable for $1+\boldsymbol{G}$. However the phase of $1+\boldsymbol{G}$ drops below $-\pi/2$ at $w\approx 1.01$ rad/s so there can be no suitable $\boldsymbol{M}\in\mathcal{A}_{0.99\pi}$}\label{fig05}
\end{center}
\end{figure}

\section{Discrete-time example}

Here we consider a Lurye system where the LTI plant has dynamics with discrete-time transfer function
\begin{equation}
G(z) = g \frac{2z+0.92}{z(z-0.5)}
\end{equation}
with $g$ taking the values $g=0.6$, $0.7$, $0.8$, $0.9$ or $1.0$. The nonlinearity $\boldsymbol{\phi}\in\Phi^{sr}_1\cap\Phi^{ti}$. 
We have considered this Lurye system previously with $g=1$ \cite{Heath15,Heath22} and in our preliminary results with $g=0.6$, $0.8$ and $1$ \cite{Heath:24}. 

\subsection{Multiplier analysis}

In the following, we will consider the existence of multipliers for each gain in turn. With one exception we will consider only single parameter multipliers, and give values correct to two decimal place. The upper bounds given on gains are not least upper bounds. Results are summarised in Table~\ref{tab:ex2}.

\begin{description}
\item{When $g=0.6$ the} system satisfies the circle criterion; specifically $\text{Re}\left [1+G(e^{j\omega})\right ]>0$ for all $\omega\in[0,2\pi)$. This establishes that the closed-loop system is continuous and both FGS and FGOS with gain bounded above by 
$h\leq 12.76$ (applying (\ref{first_gamma}) with the identity multiplier $\boldsymbol{M}=\boldsymbol{I}$ whose frequency response is $M(e^{j\omega})=1$).  Note that $\boldsymbol{I}\in\mathcal{M}^d$, $\boldsymbol{I}\in\mathcal{M}^d_{\text{odd}}$ and $\boldsymbol{I}\in\mathcal{A}^d_N$ for all $N\in\mathbb{Z}^+$. The upper bound on the gain can be reduced using dynamic multipliers. For example the  multiplier in $\mathcal{M}^d$ with transfer function $M(z)=1-0.68z^{-1}$ allows us to bound the offset gain by $3.76$. The multiplier in $\mathcal{M}^d_{odd}$ with transfer function $M(z)=1+0.57z$ allows us to bound the gain (but not the offset gain) by $3.39$. 
\item{When $g=0.7$} the system no longer satisfies the circle criterion. Nevertheless the multiplier in $\mathcal{M}^d$ with transfer function $M(z)=1-0.91z^{-1}$ establishes that the system is FGOS with gain bounded above by $h\leq 5.73$. 
The multiplier in $\mathcal{M}^d_{odd}$ with transfer function $M(z)=1+0.64z$ allows us to bound the gain (but not the offset gain) by $4.69$. 
We can find suitable Altshuller multipliers in $\mathcal{A}^d_3$ and $\mathcal{A}^d_5$ with transfer functions $M(z)=1-0.2z^3$ and $M(z)=1-0.16z^{-5}-0.04z^{10}$ respectively 
(see Fig~\ref{fig:g7;a5}).
    \begin{figure}[htbp]
        \begin{center}
        \includegraphics[width=\figwidth\columnwidth]{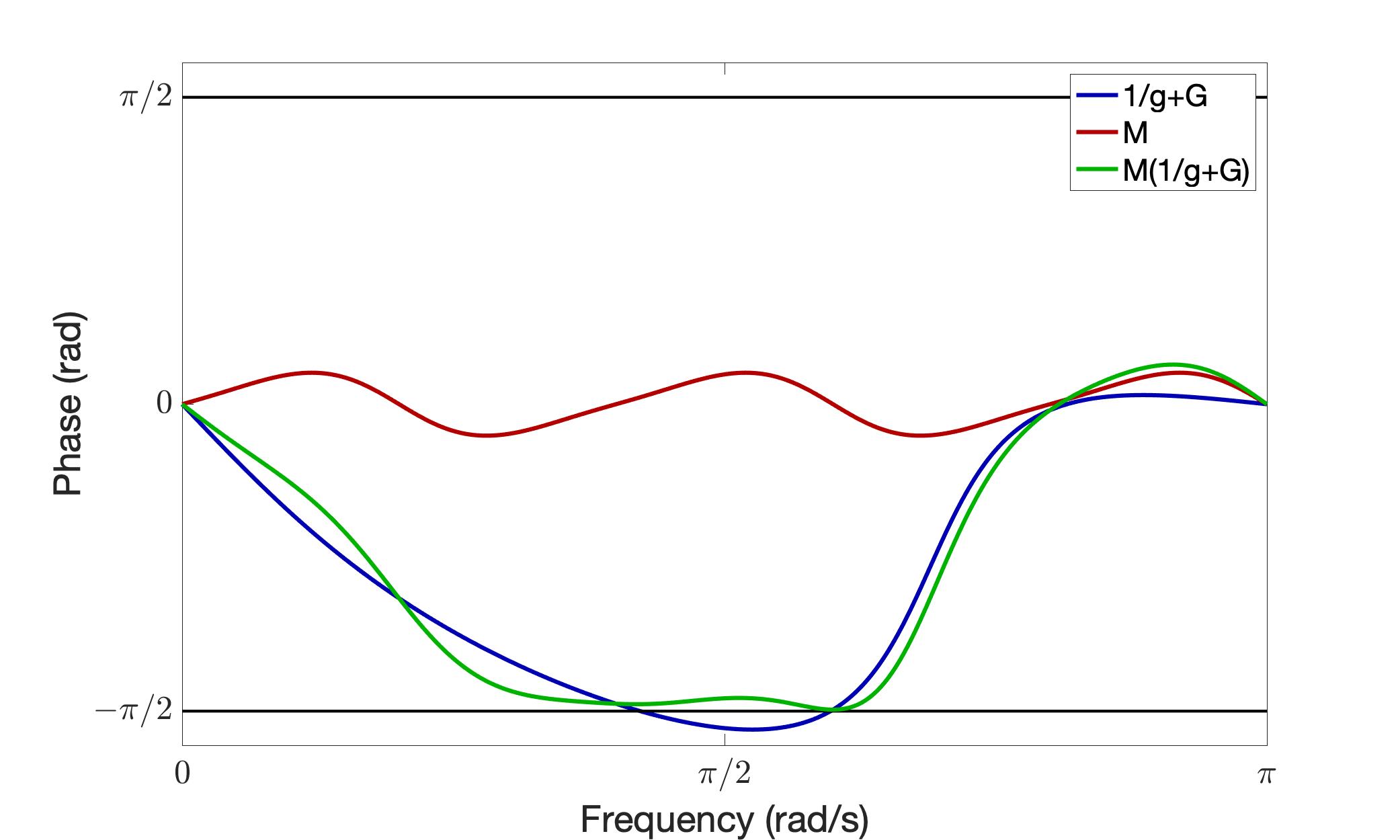}
        \end{center}
        \caption{For the discrete-time example with $g=0.7$ there is a suitable multiplier $\boldsymbol{M}\in\mathcal{A}_5^d$ with transfer function $M(z)=1-0.16z^{-5}-0.04z^{10}$.}\label{fig:g7;a5}
    \end{figure}
    But there is no suitable Altshuller multiplier in $\mathcal{A}^d_N$ when  $N$ is even since $\angle (1+G(e^{j\pi/2}))<-\pi/2$ (see Fig~\ref{fig:A67} for the case $N=6$). Similarly there is no suitable Altshuller multiplier in $\mathcal{A}^d_{N}$ when  $N$ is odd and $N\geq 7$ since $\angle (1+G(e^{j(n+1)\pi/(2n+1)}))<-\pi/2$ when $n\geq 3$. 
    \begin{figure}[t]
        \begin{center}
        \includegraphics[width=\figwidth \columnwidth]{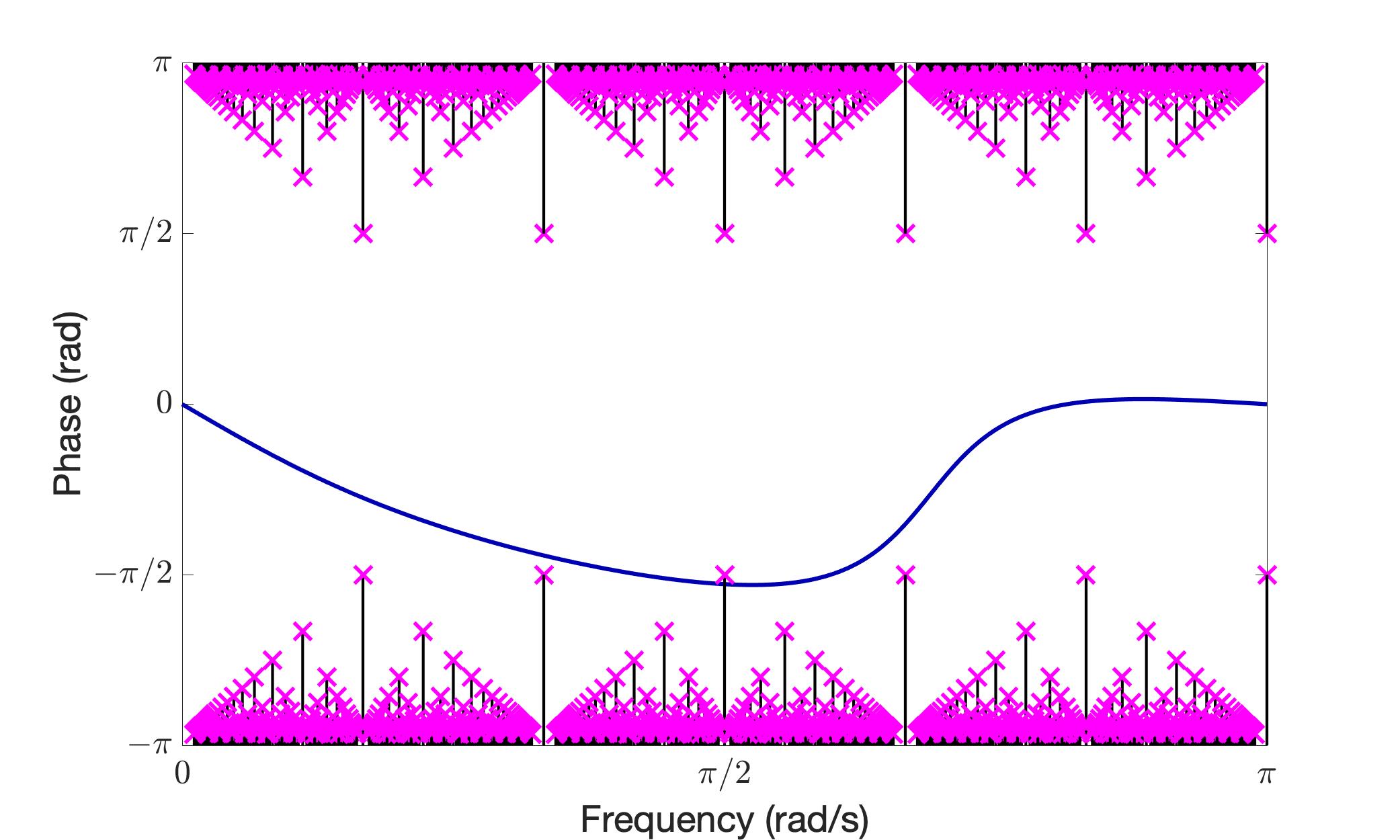}
        \end{center}
        \caption{For the discrete-time example with $g=0.7$ there is no suitable $\boldsymbol{M}\in\mathcal{A}^d_6$. The phase of $1+G(e^{j\omega})$ lies below $-\pi/2$ at $\omega=\pi/2$. The same single frequency phase limitation occurs for any $\mathcal{A}^d_N$ with $N$ even.}\label{fig:A67}
        \begin{center}
        \includegraphics[width=\figwidth \columnwidth]{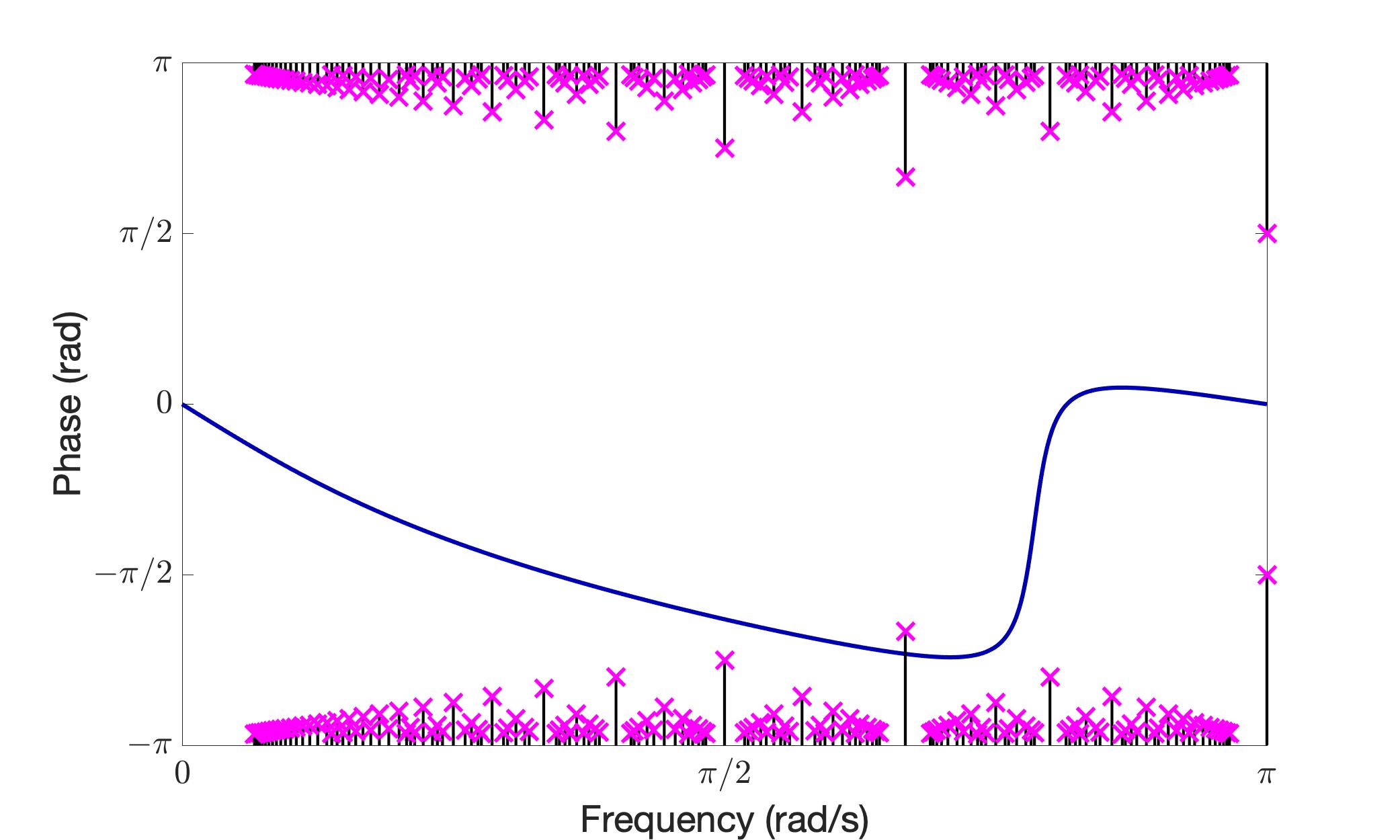}
        \end{center}
        \caption{For the discrete-time example with $g=1$ there is no suitable $\boldsymbol{M}\in\mathcal{M}^d$. The phase of $1+G(e^{j\omega})$ lies below $-2\pi/3$ at $\omega=2\pi/3$.}\label{fig:A110}
    \end{figure}
\item{When $g=0.8$} the multiplier in $\mathcal{M}^d$ with transfer function $M(z)=1-0.99z^{-1}$ establishes that the system is FGOS with gain  bounded above by $h\leq 10.96$. The multiplier in $\mathcal{M}^d_{odd}$ with transfer function $M(z)=1+0.72z$ allows us to bound the gain (but not the offset gain) by $7.07$. 
    But there is no suitable Altshuller multiplier in $\mathcal{A}^d_{N}$ for any  $N\geq 2$. 
\item{When $g=0.9$} the multiplier in $\mathcal{M}^d$ with transfer function $M(z)=1-0.99z^{-1}$ establishes that the system is FGOS with gain bounded above by $h\leq 121.28$. The multiplier in $\mathcal{M}^d_{odd}$ with transfer function $M(z)=1+0.79z$ allows us to bound the gain (but not the offset gain) by $12.42$. 
    But there is no suitable Altshuller multiplier in $\mathcal{A}^d_{N}$ for any  $N\geq 2$.
\item{When $g=1$} we find
     $\angle [1+G(e^{2\pi j/3}) ]  = -\pi+\text{atan} \frac{31\sqrt{3}}{48}$ \\
    $< -\frac{2\pi}{3}$. See Fig~\ref{fig:A110}.
It follows that there is no $\boldsymbol{M}\in\mathcal{M}^d$ suitable for $1+\boldsymbol{G}$. Nevertheless there are multipliers in $\mathcal{M}^d_{odd}$ suitable for $1+\boldsymbol{G}$. It follows that the Lurye system is FGS, but not necessarily FGOS in this case. For example the multiplier $M(z)=1+0.87z$  establishes an upper bound $h \leq 31.74$ on the $\ell_2$ gain.
\end{description}

\begin{table*}
    \centering
    \begin{tabular}{|c|c|c|c|c|c|}
    \hline
    $g$ & $0.6$    & $0.7$  & $0.8$ & $0.9$  & $1.0$ \\ \hline \hline
    Continuous   & Y  &  N & N  & N & N \\ \hline \hline
    FGOS     & Y & Y  & Y & Y & N\\
    \hline
    Bound  & $3.76$ & $5.73$ & $10.96$ & $121.28$ & \\ 
    Multiplier & $1-0.68z^{-1}$ & $1-0.91z^{-1}$ & $1-0.99z^{-1}$  & $1-0.99z^{-1}$ & \\ 
    \hline \hline
     FGS    & Y & Y  & Y & Y & Y\\
     \hline
     Bound   & $3.39$ & $4.69$ & $7.07$ & $12.42$ & $31.74$ \\ 
     Multiplier & $1+0.57z$   & $1+0.64z$   & $1+0.72z$     & $1+0.79z$  & $1+0.87z$  \\ \hline \hline
     Altshuller $\mathcal{A}^d_3$ & Y   & Y  &  N & N & N \\
     Mutliplier & $1$ & $1-0.2z^3$ & & &  \\ \hline
     Altshuller $\mathcal{A}^d_5$ & Y   & Y  &  N & N & N \\
     Mutliplier & $1$ & $1 -0.16z^{-5} - 0.04z^{10}$ & &  & \\ \hline
    \end{tabular}
    \caption{Suitable multipliers and gain bounds for the discrete-time example. All the bounds are on gains from $r_2$ to $y_2$.}
    \label{tab:ex2}
\end{table*}

\subsection{Response to periodic excitation}

In \cite{Heath:24} we considered the step response when the nonlinearity is a saturation and the gain $g$ is either $0.6$, $0.8$ or $1.0$. We found there was a qualitative difference in behaviour between the response when $g=0.8$, i.e. when the closed-loop system is guaranteed FGOS and when $g=1.0$, i.e. when it is only guaranteed FGS. When the closed-loop system is not FGOS, the power must be normalized around $0$ rather than around the steady state values. The response is indeed  ``critically sensitive to small changes in inputs'' \cite{Zames66a} when the gain is $g=1.0$.

Here we focus on the response when the nonlinearity is a deadzone and the excitation is periodic. Let the nonlinearity be a deadzone function (\ref{deadzone}) with $w=0.2$. Let the excitation be periodic with period 5. Specifically:
\begin{equation}
\left [
\begin{array}{c}
r_2(1+5(n-1))\\
r_2(2+5(n-1))\\
r_2(3+5(n-1))\\
r_2(4+5(n-1))\\
r_2(5+5(n-1))
\end{array}
\right ]
= \left [
\begin{array}{r}
1.0\\ 0.6\\-0.6\\ -1.0\\ 0.0
\end{array}
\right ] \text{ with }n\in\mathbb{Z}^+.
\end{equation}
When $g=0.7$ there is a suitable Altshuller multiplier in $\mathcal{A}^d_5$. In simulation the outputs $y_2$ converge to steady state values
\begin{equation}
\left [
\begin{array}{c}
\bar{y}_2(1)\\
\bar{y}_2(2)\\
\bar{y}_2(3)\\
\bar{y}_2(4)\\
\bar{y}_2(5)
\end{array}
\right ]
= \left [
\begin{array}{r}
0.2282\\
   -0.2861\\
   -0.6895\\
         0
         \\
    0.7464
\end{array}
\right ],
\end{equation}
accurate to four decimal places. We find $\|r_2\|_P =0.7376$, $\|\bar{y}_2\|_P = 0.4830$ and hence $\|\bar{y}_2\|_P/\|r_2\|_P < 5.73$
(the bound given in Table~\ref{tab:ex2}).

By contrast, when $g=0.9$ there is no suitable Altshuller multiplier in $\mathcal{A}^d_5$ and the output $y_2$ does not appear to settle into a stable cycle. Fig~\ref{Fig:chaos1} shows the absolute value of the discrete Fourier transform of $10^6$ data points after a simulation where the first $1,000$ data points are discarded. There is a significant subharmonic response with period $40$, shown in Fig~\ref{Fig:chaos2}. But the ``leakage'' terms are nontrivial and apparently persistent (Figs~\ref{Fig:chaos2} and~\ref{Fig:chaos3}). If we write $y_2=y_2^p+y_2^v$ where $y_2^p$ is the periodic component then we measure $\|y_2^p\|_P=0.41$ and $\|y_2^v\|_P=5.1\times 10^{-4}$ (both to two significant figures).

We can write the system in state-space form as
\begin{align}
x(n+1)= & \left [\begin{array}{cc}0.5 & 0\\1 & 0\end{array}\right ]x(n)\nonumber\\
& -g\left [\begin{array}{c}2\\ 0\end{array}\right ]Q\left (
\left [\begin{array}{cc}1 & 0.46\end{array}\right ]x(n)+r_2(n)\right ),\nonumber \\
y_1(n)  =&  \left [\begin{array}{cc}1 & 0.46\end{array}\right ]x(n).
\end{align}
A standard measure for the Liapunov exponent of the state's evolution \cite[pp.116-117]{Sprott03} yields $0.012$ (using natural logarithms and to three decimal places) consistently.  A positive value indicates chaotic dynamics \cite{Sprott03}. 

    \begin{figure}[tbp]
        \begin{center}
        \includegraphics[width=\figwidth\columnwidth]{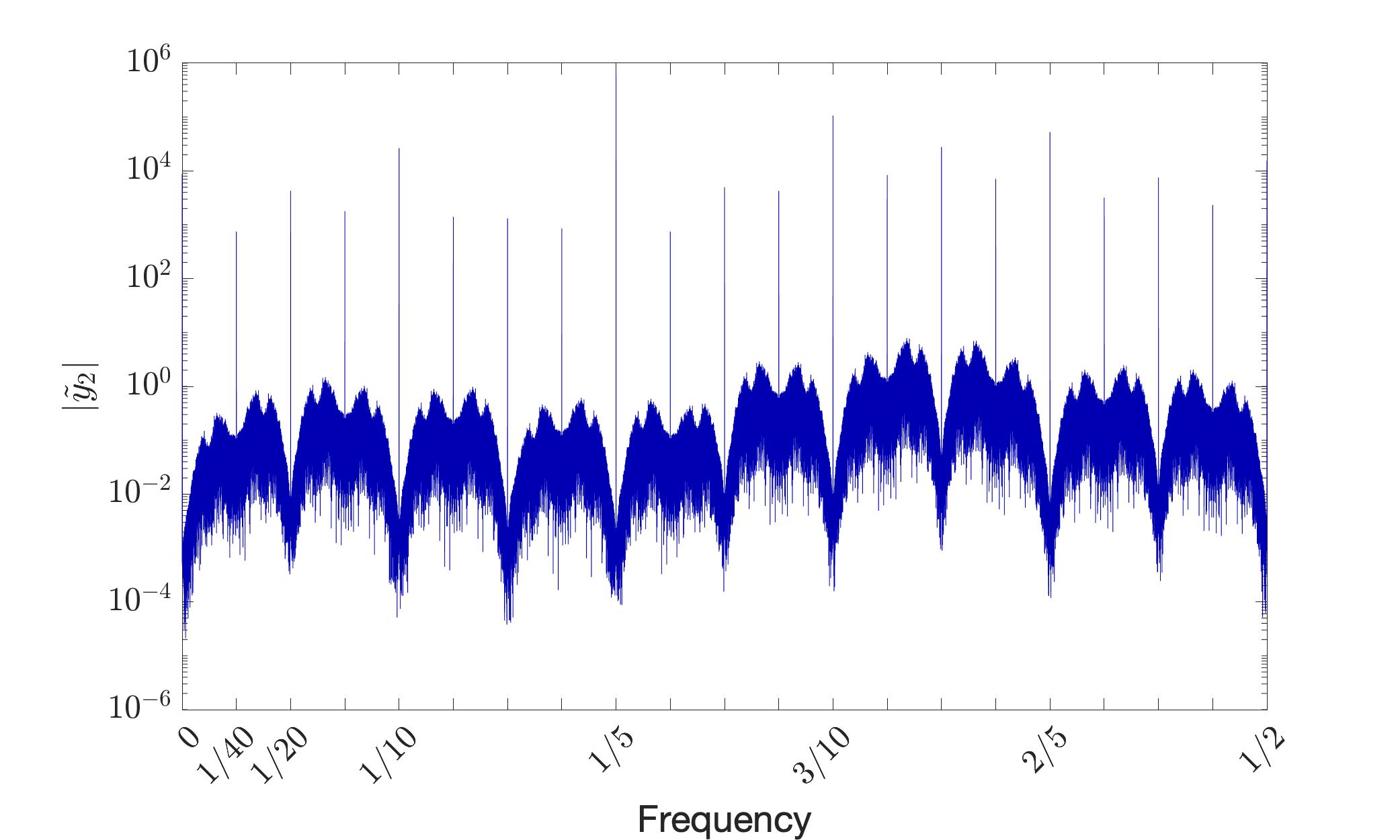}
        \end{center}
        \caption{Discrete time example with periodic excitation, period $5$, and deadzone nonlinearity. The plot shows the absolute value of the fast Fourier transform of $y_2$ (log scale) measured over $10^6$ data points. There is a significant response with period $40$ but also nontrivial variation at all frequencies.}\label{Fig:chaos1}
        \begin{center}
        \includegraphics[width=\figwidth\columnwidth]{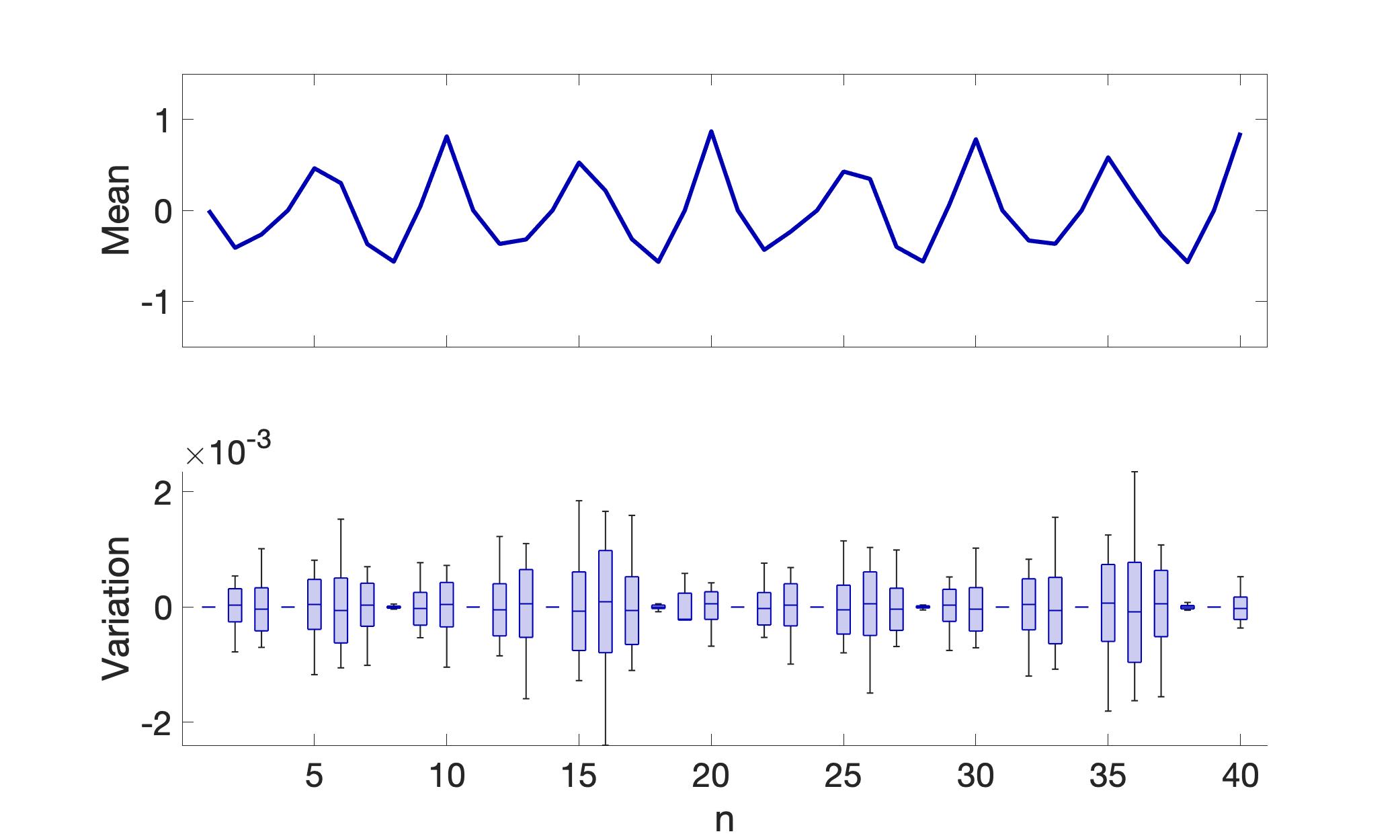}
        \end{center}
        \caption{The top plot shows the mean response over a period of $40$. The bottom box and whisker chart shows variation at each $n$ ranging from $1$ to $40$ - i.e. the variation of $y_2(n+40(k-1)$ with $k$ taking values from $1$ to $25,000$.}\label{Fig:chaos2}
        \begin{center}
        \includegraphics[width=\figwidth\columnwidth]{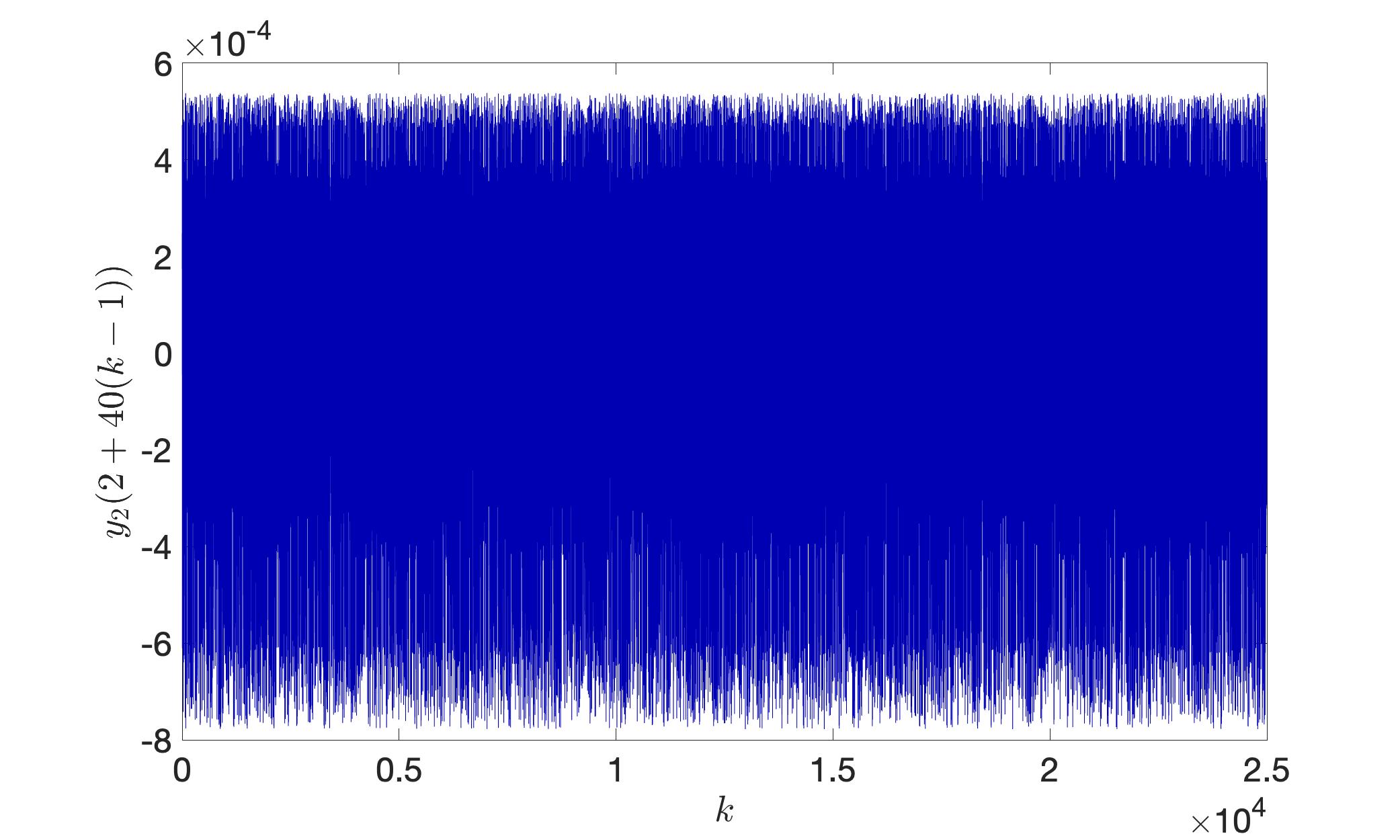}
        \end{center}
        \caption{Value of $y_2(2+40(k-1))$. The variation is apparently persistent.}\label{Fig:chaos3}
    \end{figure}

\section{Conclusion}
We have considered the behaviour of Lurye systems with time-invariant nonlinearities and exogenous signals that have finite power  but not finite energy. It is known \cite{Kulkarni:2002a,Fromion04} that dynamic multipliers do not guarantee continuity in closed loop. This might be considered an Achilles' heel of stability analysis based on dynamic multipliers: our examples suggest that period doubling or chaotic behaviour can occur with periodic excitation.  While  Lurye systems with chaotic dynamics have been widely reported in the literature (see, e.g., \cite{GENESIO1992,MIRANDAVILLATORO2018}), to the best of the authors’ knowledge, this is the first example where chaotic behaviour occurs despite $\ell_2$ (or $\mathcal{L}_2$)  input-output stability being guaranteed.

We have shown that the bounds on the $\mathcal{L}_2$ gains provided by suitable OZF multipliers also bound the power gain. Furthermore, the existence of a suitable OZF multiplier in $\mathcal{M}$ (but not $\mathcal{M}_\text{odd}-\mathcal{M}$) ensures a Lurye system has a unique steady state map and is FGOS. This in turn ensures that if the exogenous signal has small power measured around some bias then the output also has small power measured around a uniquely determined bias.

Nevertheless if the excitation is periodic then the discontinuities may be significant. We have shown examples where, even without noise, the closed-loop system may exhibit subharmonic or chaotic responses. 
For such cases the Altshuller multipliers \cite{Altshuller:11,Altshuller:13} can be used to guarantee better behaviour for excitation with specified periods. We have shown that these 
 multipliers and their properties can be derived using classical methods. This allows us to generalise their application to Lurye systems, both in terms of the LTI element $\boldsymbol{G}$ and the nonlinear element $\boldsymbol{\phi}$. We have established conditions where the assumptions about existence in \cite{Altshuller:11,Altshuller:13} can be justified.

Theorems~\ref{thm:ss} and ~\ref{thm:dd} and Corollary~\ref{cor:lin} lead us to the conjecture that a more general result may be true:
\begin{conjecture}\label{conj}
    Consider a Lurye system where $\boldsymbol{\phi}\in\Phi^{ti}$, where $r_1=0$  and where $r_2$ is periodic (and non-zero) with period $T>0$. If there is 
    an Altshuller multiplier $\boldsymbol{M}\in\mathcal{A}_T$ suitable for~$\boldsymbol{G}$ (or for $1/k+\boldsymbol{G}$ when $\boldsymbol{\phi}\in\Phi^{sr}_k$ for some $k>0$) then there exists a (not-necessarily unique or attracting) non-zero periodic solution with period $T$.  
\end{conjecture}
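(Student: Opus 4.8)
The plan is to follow the common logic of Theorems~\ref{thm:ss} and~\ref{thm:dd}: (i) the suitable Altshuller multiplier, being a suitable OZF multiplier (Theorem~1), renders the unforced ($r_2=0$) system absolutely stable and hence, since it is time invariant, uniformly asymptotically stable in the large; (ii) a converse Liapunov argument then supplies a Liapunov function or functional with the coercivity and dissipation-rate properties used in the proof of Theorem~\ref{thm:ss}; (iii) boundedness of $\boldsymbol{\phi}$ together with the bounded, periodic, Lipschitz forcing $r_2$ upgrades this to uniform-ultimate boundedness of the forced system; and (iv) a fixed-point argument on the period map then delivers a period-$T$ solution. Since $\mathcal{A}_T\subset\mathcal{M}$, the Altshuller multiplier already furnishes the suitability hypothesis for step (i) exactly as in those theorems, so the conjecture is really the claim that steps (ii)--(iv) survive the passage from the finite-dimensional and delay-differential cases to an arbitrary $\boldsymbol{G}\in\mathcal{S}$.

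First I would fix a realisation of $\boldsymbol{G}\in\mathcal{S}$. In contrast with the two theorems, a general element of $\mathcal{S}$ carries infinite memory — a possibly infinite sum of point delays $\sum_i g_i\delta(t-t_i)$ together with an $L_1$ distributed part $g_a$ — so the natural state is a history segment in an infinite-dimensional space and $\boldsymbol{G}$ must be treated as an abstract well-posed linear system on that space. One would then seek infinite-dimensional analogues of the three ingredients invoked above: a converse Liapunov-functional theorem under uniform asymptotic stability in the large (in the spirit of Theorem 3.4 of~\cite{Karafyllis11}), a boundedness theorem (the analogue of Theorem 35.1 of~\cite{Yoshizawa66}), and an existence-of-periodic-solutions theorem driven by ultimate boundedness (the analogue of Theorem 37.1 of~\cite{Yoshizawa66}).

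A second, possibly more tractable, route is by approximation. Because the $\mathcal{S}$-norm dominates the supremum of the frequency response, $\sup_\omega|G_n(j\omega)-G(j\omega)|\le \sum_i|g_i^{(n)}-g_i|+\int_0^\infty|g_a^{(n)}(t)-g_a(t)|\,dt$, any $\boldsymbol{G}_n\to\boldsymbol{G}$ in $\mathcal{S}$-norm preserves the strict inequality~(\ref{suitable}) for the fixed Altshuller multiplier once $n$ is large, since $M$ is bounded. If each $\boldsymbol{G}_n$ can be taken in the delay-differential class of Theorem~\ref{thm:dd}, one obtains period-$T$ solutions $y_2^{(n)}$ with gain bounds uniform in $n$ (Theorem~\ref{thm:perexc} and Table~\ref{tab1}), and the task reduces to extracting a limit and checking it solves the original Lurye system; non-triviality is inherited from $r_2\neq 0$ as in the cited theorems.

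The hard part will be step (iv), and it is the same obstruction in both routes. Every period-map fixed-point argument in the Yoshizawa tradition rests on compactness: in finite dimensions through Brouwer or Schauder on a bounded invariant set, and in the delay-differential case through the smoothing of the solution map after one memory interval. For a general $\boldsymbol{G}\in\mathcal{S}$ the point-delay components neither smooth nor compactify the period map, and they also resist approximation within the delay-differential class, so neither the abstract fixed-point theorem of the first route nor the passage to the limit of the second is guaranteed. Reconciling the non-compact, infinite-memory structure of $\mathcal{S}$ with the compactness that existence-of-periodic-orbit theorems require is, I expect, exactly what keeps the statement a conjecture.
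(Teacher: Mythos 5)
This statement is a conjecture in the paper: the authors give no proof of it, only a reduction (mirroring the structure of the proof of Theorem~\ref{thm:ss}) into Conjecture~\ref{conj2a} (a suitable OZF multiplier forces $y_2\in\mathcal{L}_2\cap\mathcal{L}_{\infty}$ whenever $r_2\in\mathcal{L}_2\cap\mathcal{L}_{\infty}$) and Conjecture~\ref{conj2b} (uniform-boundedness and uniform ultimate-boundedness of all solutions imply a period-$T$ solution), remarking that the latter ``begs the question whether the fixed point theory exploited by \cite{Yoshizawa66} can be used without internal structure.'' Your proposal takes essentially this same route --- your steps (i)--(iii) are the boundedness half and your step (iv) is Conjecture~\ref{conj2b} --- and the obstruction you isolate, namely that for a general $\boldsymbol{G}\in\mathcal{S}$ with infinite memory the period map has no compactness and so the Yoshizawa-style fixed-point argument does not go through, is exactly the gap the authors flag; your conclusion that this is what keeps the statement a conjecture is the correct assessment.
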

\begin{remark}
 Since $\mathcal{A}_T\subset\mathcal{M}$ then if $\boldsymbol{M}\in\mathcal{A}_T$ satisfies the conditions of Theorem~\ref{thm:perexc} then it suffices for Theorems~\ref{thm:ss},~\ref{thm:dd} and Conjecture~\ref{conj}.
 \end{remark}

 Following the structure of the proof of Theorem~\ref{thm:ss}, we make the following two Conjectures which together are sufficient for Conjecture~\ref{conj} to be true. 
 
 \begin{subtheorem}{conjecture}
 \begin{conjecture}\label{conj2a}
    Consider a Lurye system where $\boldsymbol{\phi}\in\Phi^{ti}$, where $r_1=0$  and where $r_2\in\mathcal{L}_2\cap\mathcal{L}_{\infty}$. If there is 
    an OZF multiplier $\boldsymbol{M}\in\mathcal{M}$ suitable for~$\boldsymbol{G}$ (or for $1/k+\boldsymbol{G}$ when $\boldsymbol{\phi}\in\Phi^{sr}_k$ for some $k>0$)  then $y_2\in\mathcal{L}_2\cap\mathcal{L}_{\infty}$.
 \end{conjecture}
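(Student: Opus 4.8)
\emph{Proof proposal.} The $\mathcal{L}_2$ half is immediate: a suitable OZF multiplier $\boldsymbol{M}\in\mathcal{M}$ makes the Lurye system finite-gain stable by standard multiplier theory \cite{Zames68,desoer75}, so $r_2\in\mathcal{L}_2$ already forces $u_2,y_2\in\mathcal{L}_2$ with the explicit bounds of Table~\ref{tab1}; in the slope-restricted case one replaces $\boldsymbol{G}$ by $1/k+\boldsymbol{G}$ via the usual loop transformation and argues identically. The entire content of the statement is therefore the $\mathcal{L}_\infty$ bound. Since $\boldsymbol{\phi}\in\Phi^{ti}$ is characterised by some $Q\in\mathcal{Q}$ obeying the linear bound $|y_2(t)|=|Q(u_2(t))|\le C|u_2(t)|$, it suffices to prove $u_2\in\mathcal{L}_\infty$.

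The plan is to bootstrap from the $\mathcal{L}_2$ solution using the closed-loop identity $u_2=r_2-\boldsymbol{G}y_2$ with $y_2=\boldsymbol{\phi}(u_2)\in\mathcal{L}_2$. As $r_2\in\mathcal{L}_\infty$ by hypothesis, everything reduces to showing $\boldsymbol{G}y_2\in\mathcal{L}_\infty$. Here I would exploit that the impulse response of $\boldsymbol{G}\in\mathcal{S}$ is a finite measure, so $\boldsymbol{G}$ is bounded on every $\mathcal{L}_p$ and in particular bounded-input bounded-output; the catch is that this property consumes an $\mathcal{L}_\infty$ input, whereas I only have $y_2\in\mathcal{L}_2$. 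I would split $\boldsymbol{G}$ into its absolutely continuous part (kernel $g_a\in\mathcal{L}_1$) and its atomic part $\sum_i g_i\delta(\cdot-t_i)$. For the former, Cauchy--Schwarz gives $\|g_a*y_2\|_\infty\le\|g_a\|_2\,\|y_2\|_2$, which is finite and controls the absolutely continuous contribution whenever $g_a\in\mathcal{L}_2$ (automatic when $G$ is strictly proper rational, and more generally when $g_a\in\mathcal{L}_1\cap\mathcal{L}_2$).

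The main obstacle is the atomic part $\sum_i g_i\,y_2(\cdot-t_i)$, in particular any genuine delays $t_i>0$, together with the logical possibility that $g_a\notin\mathcal{L}_2$: from $y_2\in\mathcal{L}_2$ these terms are controlled only in $\mathcal{L}_2$, not in $\mathcal{L}_\infty$, so a purely metric estimate is circular. To break the circularity I would first remove any instantaneous feedthrough $g_0\delta(t)$ using the assumed well-posedness, which resolves the algebraic loop into an equivalent strictly-causal loop $u_2=\tilde r-\boldsymbol{G}_s\tilde{\boldsymbol{\phi}}(u_2)$ with $\tilde{\boldsymbol{\phi}}$ still bounded and Lipschitz. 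I would then attempt a fading-memory/windowing bound: fix a window length $T_w$ so large that the tail mass $\sum_{t_i>T_w}|g_i|+\int_{T_w}^{\infty}|g_a(t)|\,dt$ of the finite measure is small, estimate the tail uniformly using the fact that the far-past energy of $y_2$ vanishes, and estimate the recent window using local boundedness and continuity of the well-posed solution. Propagating a uniform bound across successive windows, and in particular ruling out resonant build-up through the delays using the decaying $\mathcal{L}_2$ energy of $y_2$, is the step I expect to be genuinely hard; this is presumably why the statement is posed as a conjecture, and establishing it (or pinning down the minimal extra hypothesis on the kernel under which it holds) is the crux.
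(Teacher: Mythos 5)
You should first be aware that the paper offers no proof of this statement: it is posed explicitly as an open conjecture (the authors only remark that it ``seems physically intuitive: it would be disconcerting were it not true''), so there is no reference argument to compare yours against. Judged on its own terms, your proposal is not a proof either, and you say so yourself: the final windowing/bootstrapping step is left as ``the step I expect to be genuinely hard,'' which is precisely the content of the conjecture. What you do establish is sound but is the easy part: the $\mathcal{L}_2$ half is indeed immediate from the FGS guarantee of Theorem~1, and the reduction of the $\mathcal{L}_\infty$ half to $u_2\in\mathcal{L}_\infty$ via $|y_2(t)|\le C|u_2(t)|$ is correct. Your diagnosis of where the difficulty sits is also accurate and worth recording: from $y_2\in\mathcal{L}_2$ alone one cannot conclude $\boldsymbol{G}y_2\in\mathcal{L}_\infty$, so any purely metric estimate is circular.

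Two technical points sharpen the gap beyond what you state. First, your Cauchy--Schwarz bound $\|g_a*y_2\|_\infty\le\|g_a\|_2\|y_2\|_2$ needs $g_a\in\mathcal{L}_2$, and membership of $\boldsymbol{G}$ in $\mathcal{S}$ gives only $g_a\in\mathcal{L}_1$; since $\mathcal{L}_1\not\subset\mathcal{L}_2$ (consider $g_a(t)=t^{-1/2}e^{-t}$), this is a genuine additional hypothesis, not a mild regularity remark, and it silently restricts the class of systems covered. Second, for the atomic part $\sum_i g_i y_2(\cdot-t_i)$ the situation is worse than ``controlled only in $\mathcal{L}_2$'': a bound of the form $\bigl(\sum_i |g_i|\bigr)\sup_i|y_2(t-t_i)|$ requires a pointwise bound on $y_2$ (circular), while $\sum_i|y_2(t-t_i)|^2$ is a sum of point evaluations of an $\mathcal{L}_2$ function and is not controlled by $\|y_2\|_2^2$ at all. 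Moreover, even the ``recent window'' step you invoke is not free: well-posedness gives solutions in $\mathcal{L}_{2e}$, and elements of $\mathcal{L}_{2e}$ can be locally unbounded, so local boundedness of $u_2$ is itself unproven when $g_a\notin\mathcal{L}_2$ locally. In short: your partial progress is correct, your identification of the crux agrees with why the authors left this as a conjecture, but the statement remains unproven by your argument, and an honest write-up should present it as a reduction (to a uniform bound on $u_2$ under extra kernel hypotheses) rather than as a proof.
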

 \begin{conjecture}\label{conj2b}
    Consider a Lurye system where $\boldsymbol{\phi}\in\Phi^{ti}$, where $r_1=0$  and where $r_2$ is periodic (and non-zero) with period $T$. If all
     solutions are uniform-bounded and uniform ultimate-bounded then there exists a (not-necessarily unique or attracting) non-zero periodic solution with period $T$.  
 \end{conjecture}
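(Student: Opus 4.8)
The plan is to recast the existence of a $T$-periodic solution as the existence of a fixed point of the period (Poincar\'e) map, and then to obtain that fixed point from the two boundedness hypotheses by means of a fixed-point theorem. First I would observe that, since $\boldsymbol{\phi}\in\Phi^{ti}$, $r_1=0$ and $r_2$ is $T$-periodic, the closed-loop dynamics constitute a $T$-periodic system on an appropriate state space $X$: for a finite-dimensional minimal realisation of $\boldsymbol{G}$ this is $X=\mathbb{R}^n$, and for the delay-differential case $X$ is the usual space of initial history segments. Writing $x(t;0,x_0)$ for the solution from $x_0$, define the period map $P:X\to X$ by $P(x_0)=x(T;0,x_0)$. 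Because the forcing is $T$-periodic and the remaining dynamics are time-invariant, a fixed point $x_0=P(x_0)$ generates a solution with $x(t+T)=x(t)$ for all $t$, hence a $T$-periodic solution of the Lurye system. Non-zeroness follows since $r_2$ enters the dynamics: a zero periodic solution would force $\boldsymbol{\phi}(r_2)\equiv 0$, which we may exclude for non-trivial excitation.

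Next I would extract from the hypotheses the geometric data a fixed-point theorem requires. Uniform-ultimate-boundedness for bound $B$ supplies a bounded absorbing set $\mathcal{B}=\{|x|<B\}$ that every trajectory enters within a time $t_s$ independent of $t_0$, while uniform-boundedness supplies a bounded enlargement $K=\{|x|\le\beta(B)\}$ containing the forward orbit of $\mathcal{B}$. Choosing $m$ with $mT\ge t_s$ then gives $P^m(K)\subseteq\mathcal{B}\subseteq K$. In finite dimensions $K$ is compact and convex, so Brouwer's theorem applied to $P^m$ yields a fixed point, hence an $mT$-periodic solution; to upgrade this to period exactly $T$ I would instead invoke a fixed-point theorem tailored to ultimate boundedness, namely Horn's theorem (or Browder's), whose nested-set hypotheses are satisfied by $\mathcal{B}\subseteq K$ together with $P^m(K)\subseteq\mathcal{B}$, and which produces a fixed point of $P$ itself. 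For the finite-dimensional and delay-differential realisations this is precisely the content of Theorems~29.3 and~37.1 of \cite{Yoshizawa66}, already exploited in the proofs of Theorems~\ref{thm:ss} and~\ref{thm:dd}.

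The hard part is the general element $\boldsymbol{G}\in\mathcal{S}$, whose impulse response may carry infinitely many point delays alongside a distributed kernel $g_a$, so that no finite-dimensional (nor standard delay-differential) state space is available and the natural $X$ is an infinite-dimensional function space. There Brouwer fails, and any Schauder/Horn-type argument demands that $P$ be \emph{compact} on $K$, equivalently that one period of the flow regularise bounded sets of initial data into relatively compact ones. Establishing this compactness is the crux: it would rest on the smoothing afforded by the absolutely summable delay weights and the integrable kernel in the definition of $\mathcal{S}$, used to show that $P(K)$ is equicontinuous and pointwise relatively compact in the chosen state space. I expect this regularity/compactness step, together with the careful choice of an infinite-dimensional $X$ on which $P$ is simultaneously well defined and compact, to be the main obstacle; once compactness is secured, the fixed-point argument of the previous paragraph carries over essentially verbatim.
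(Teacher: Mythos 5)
There is a genuine gap, and it is worth being precise about where it sits relative to the paper. The statement you were asked to prove is stated in the paper as an open \emph{conjecture}; the paper offers no proof, and its accompanying remark — that the conjecture ``begs the question whether the fixed point theory exploited by \cite{Yoshizawa66} can be used without internal structure'' — identifies exactly the obstacle. Your proposal correctly reconstructs the known argument (Poincar\'e map plus Horn/Browder-type asymptotic fixed-point theory, i.e.\ Theorems~29.3 and~37.1 of \cite{Yoshizawa66}) for the finite-dimensional and delay-differential cases, but those cases are precisely the ones the paper already disposes of in Theorems~\ref{thm:ss} and~\ref{thm:dd}. The entire content of the conjecture is the remaining case: a general $\boldsymbol{G}\in\mathcal{S}$, for which no finite-dimensional or standard delay-differential realisation exists. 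Your third paragraph reduces that case to an unproven compactness claim for the period map on an infinite-dimensional state space and then stops, explicitly flagging it as ``the main obstacle.'' A reduction of an open conjecture to another unproven claim is not a proof.

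Moreover, the compactness you hope for is genuinely doubtful, not merely technical. An impulse response in $\mathcal{S}$ may contain infinitely many delta terms $g_i\delta(t-t_i)$ with delays $t_i\to\infty$ (subject only to $\sum_i|g_i|<\infty$), so the closed loop has unbounded memory: any faithful state space must retain the entire past, and the period map then contains shift components that act as isometries on the history space. No finite iterate of such a map sends bounded sets to relatively compact sets, so the Schauder/Horn hypotheses fail structurally rather than for lack of estimates — the argument does not ``carry over essentially verbatim'' once compactness is secured, because there is no mechanism in $\mathcal{S}$ to secure it. There is also a prior definitional issue your proposal glosses over: the hypotheses of the conjecture (``all solutions are uniform-bounded and uniform ultimate-bounded'') are state-space notions, defined in the paper only for ordinary and delay-differential systems; for a general convolution operator in $\mathcal{S}$ one must first choose a state space on which ``solution'' and these boundedness properties even make sense, and your proposal does not pin that choice down. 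Identifying these two difficulties is valuable — indeed it matches the paper's own diagnosis — but it leaves the conjecture exactly as open as the paper leaves it.
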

  \end{subtheorem}
  Both Conjectures~\ref{conj2a} and~\ref{conj2b} are of interest in their own right. Conjecture~\ref{conj2a} seems physically intuitive: it would be disconcerting were it not true. Conjecture~\ref{conj2b} begs the question whether the fixed point theory exploited by \cite{Yoshizawa66}  can be used without internal structure.

The close relation between the Altshuller multipliers and the discrete-time OZF multipliers means the Altshuller multipliers inherit the phase limitations of \cite{Zhang22}. These phase limitations in turn shed light on the relation with known results about dynamic multipliers, continuity of the input-output map and incremental stability \cite{Brockett66a,Kulkarni:2002a,Fromion04}. Specifically, while the Altshuller multipliers can be used to ensure a unique periodic solution for excitation at a specific frequency or range of frequencies, they cannot be used to ensure a unique periodic solution for excitation at all frequencies. 

We have indicated that it is straightforward to extend the results both to discrete-time systems and to multivariable systems. 
It remains open to develop efficient algorithms both to search for Altshuller multipliers and to test for the phase limitation of Theorem~\ref{thm:plm}.

\bibliographystyle{IEEEtranS}
\bibliography{ok}

@ARTICLE{Angeli02,
  author={Angeli, D.},
  journal={IEEE Transactions on Automatic Control}, 
  title={A {L}yapunov approach to incremental stability properties}, 
  year={2002},
  volume={47},
  number={3},
  pages={410-421},
  doi={10.1109/9.989067}}

@article{Carrasco12,
title = {Factorization of multipliers in passivity and {IQC} analysis},
journal = {Automatica},
volume = {48},
number = {5},
pages = {909-916},
year = {2012},
author = {Joaquín Carrasco and William P. Heath and Alexander Lanzon}
}

@article{Kharitenko23,
title = {Time-varying {Z}ames–{F}alb multipliers for {LTI} Systems are superfluous},
journal = {Automatica},
volume = {147},
pages = {110577},
year = {2023},
author = {Andrey Kharitenko and Carsten Scherer}
}

@article{Bertolin22,
author = {Bertolin, Ariádne L. J. and Oliveira, Ricardo C. L. F. and Valmorbida, Giorgio and Peres, Pedro L. D.},
title = {Control design of uncertain discrete-time {L}ur'e systems with sector and slope bounded nonlinearities},
journal = {International Journal of Robust and Nonlinear Control},
volume = {32},
number = {12},
pages = {7001-7015},
year = {2022}
}

@ARTICLE{Chaffey23,
  author={Chaffey, Thomas and Forni, Fulvio and Sepulchre, Rodolphe},
  journal={IEEE Transactions on Automatic Control}, 
  title={Graphical Nonlinear System Analysis}, 
  year={2023},
  volume={68},
  number={10},
  pages={6067-6081},
  doi={10.1109/TAC.2023.3234016}}

@ARTICLE{Brockett66a,
  author={Brockett, R.},
  journal={IEEE Transactions on Automatic Control}, 
  title={The status of stability theory for deterministic systems}, 
  year={1966},
  volume={11},
  number={3},
  pages={596-606},
  doi={10.1109/TAC.1966.1098354}}

@BOOK{desoer75,
  title = {Feedback systems: input-output properties},
  publisher = {Academic Press, reprinted SIAM 2009},
  year = {1975},
  author = {Desoer, Charles A. and Vidyasagar, M.},
  isbn = {0122120507},
  owner = {mchssjca},
  timestamp = {2011.01.21}
}

@inproceedings{Fromion04,
title = {Popov-{Z}ames-{F}alb multipliers and continuity of the input/output map},
year = {2004},
booktitle = {6th IFAC Symposium on Nonlinear Control Systems (NOLCOS), Stuttgart, Germany},
author = {Vincent Fromion and Michael G. Safonov},
}

@ARTICLE{Fromion96,
  author={Fromion, V. and Monaco, S. and Normand-Cyrot, D.},
  journal={IEEE Transactions on Automatic Control}, 
  title={Asymptotic properties of incrementally stable systems}, 
  year={1996},
  volume={41},
  number={5},
  pages={721-723},
  doi={10.1109/9.489210}}

@ARTICLE{Megretski97,
  author={Megretski, A. and Rantzer, A.},
  journal={IEEE Transactions on Automatic Control}, 
  title={System analysis via integral quadratic constraints}, 
  year={1997},
  volume={42},
  number={6},
  pages={819-830},
}

@article{Heath15,
title = {Second-order counterexamples to the discrete-time {K}alman conjecture},
journal = {Automatica},
volume = {60},
pages = {140-144},
year = {2015},
issn = {0005-1098},
author = {William Paul Heath and Joaquin Carrasco and Manuel {de la Sen}},
}

@ARTICLE{Heath22,
  author={Heath, William Paul and Carrasco, Joaquin and Altshuller, Dmitry A.},
  journal={IEEE Transactions on Automatic Control}, 
  title={Multipliers for Nonlinearities With Monotone Bounds}, 
  year={2022},
  volume={67},
  number={2},
  pages={910-917},
}

@INPROCEEDINGS{Heath:24,
  author={Heath, William P. and Carrasco, Joaquin},
  booktitle={2024 IEEE 63rd Conference on Decision and Control (CDC)}, 
  title={Multiplier analysis of {L}urye systems with power signals}, 
  year={2024},
  volume={},
  number={},
  pages={5864-5869},
  keywords={Sensitivity;Power measurement;Noise;Stability analysis;Steady-state;Time-varying systems;Lyapunov methods},
  doi={10.1109/CDC56724.2024.10885824}}

@article{Megretski04,
  title={A guide to {IQC} $\beta$: A {M}atlab toolbox for robust stability and performance analysis},
  author={Megretski, A and Kao, C and Jonsson, U and Rantzer, A},
  journal={Technical Report, MIT},
  year={2004}
}

@inproceedings{Kao04,
  title={A {M}ATLAB toolbox for robustness analysis},
  author={Kao, Chung-Yao and Megretski, Alexandre and Jonsson, UT and Rantzer, Anders},
  booktitle={IEEE International Conference on Robotics and Automation},
  year={2004}
}

@ARTICLE{Mari96,
  author={Mari, J.},
  journal={IEEE Transactions on Automatic Control}, 
  title={A counterexample in power signals space}, 
  year={1996},
  volume={41},
  number={1},
  pages={115-116},
}

@inproceedings{Sepulchre22,
title = {On the incremental form of dissipativity},
booktitle = {25th International Symposium on Mathematical Theory of Networks and Systems MTNS},
author = {Rodolphe Sepulchre and Thomas Chaffey and Fulvio Forni},
year = {2022}
}

@article{Veenman16,
title = {Robust stability and performance analysis based on integral quadratic constraints},
journal = {European Journal of Control},
volume = {31},
pages = {1-32},
year = {2016},
issn = {0947-3580},
doi = {https://doi.org/10.1016/j.ejcon.2016.04.004},
author = {Joost Veenman and Carsten W. Scherer and Hakan Köroğlu},
}

@BOOK{vidyasagar93,
 author = {M. Vidyasagar},
 title = {Nonlinear systems analysis, 2nd edition},
 publisher = {Prentice-Hall International Editions, (reprinted SIAM 2002)},
 year = {1993},
}

@inproceedings{Waitman17,
title = {Incremental stability of {L}ur’e systems through piecewise-affine approximations},
year = {2017},
booktitle = {20th IFAC World Congress},
author = {Sérgio Waitman and Laurent Bako and Paolo Massioni and Gérard Scorletti and Vincent Fromion},
}

@article{Zames68,
  title={Stability conditions for systems with monotone and slope-restricted nonlinearities},
  author={Zames, George and Falb, PL},
  journal={SIAM Journal on Control},
  volume={6},
  number={1},
  pages={89--108},
  year={1968},
  publisher={SIAM}
}

@BOOK{Zhou96,
 author = {K. Zhou and J. C. Doyle and K. Glover},
 title = {Robust and optimal control},
 publisher = {Prentice-Hall},
 year = {1996},
}

@article{Willems68,
  title={Some new rearrangement inequalities having application in stability analysis},
  author={Willems, Jan C and Brockett, Roger W},
  journal={IEEE Transactions on Automatic Control},
  volume={13},
  number={5},
  pages={539--549},
  year={1968},
}

@book{Willems71,
    author = {J. C. Willems},
    title = {The analysis of feedback systems},
    publisher = {MIT Press},
    year = {1971},
}

@ARTICLE{Zhang22,
  author={Zhang, Jingfan and Carrasco, Joaquin and Heath, William Paul},
  journal={IEEE Transactions on Automatic Control}, 
  title={Duality Bounds for Discrete-Time {Z}ames–{F}alb Multipliers}, 
  year={2022},
  volume={67},
  number={7},
  pages={3521-3528},
}

@article{Turner12,
author = {Turner, Matthew C. and Kerr, Murray L.},
title = {Gain bounds for systems with sector bounded and slope-restricted nonlinearities},
journal = {International Journal of Robust and Nonlinear Control},
volume = {22},
number = {13},
pages = {1505-1521},
year = {2012}
}

@ARTICLE{OShea67,
  author={O'Shea, R.},
  journal={IEEE Transactions on Automatic Control}, 
  title={An improved frequency time domain stability criterion for autonomous continuous systems}, 
  year={1967},
  volume={12},
  number={6},
  pages={725-731}}

@article{Veenman14,
author = {Veenman, Joost and Scherer, Carsten W.},
title = {{IQC}-synthesis with general dynamic multipliers},
journal = {International Journal of Robust and Nonlinear Control},
volume = {24},
number = {17},
pages = {3027-3056},
year = {2014}
}

@book{Partington04,
    author = {J. R. Partington},
    title = {Linear operators and linear systems: an analytical approach to control theory},
    publisher = {CUP},
    year = {2004},
}

@article{Wang:TAC,
author = {Joaquin Carrasco and William Paul Heath and Jingfan Zhang and Nur Syazreen Ahmad and 
Shuai Wang},
title = {Convex searches for discrete-time {Z}ames-{F}alb multipliers},
journal = {IEEE Transactions on Automatic Control},
  year={2020},
  volume={65},
  number={11},
  pages={4538-4553},
}

@article{Safonov2000,
author = {Safonov, Michael G. and Kulkarni, Vishwesh V.},
title = {Zames–{F}alb multipliers for {MIMO} nonlinearities},
journal = {International Journal of Robust and Nonlinear Control},
volume = {10},
number = {11-12},
pages = {1025-1038},
keywords = {nonlinear systems, stability, robustness, multipliers, integral quadratic constraints},
year = {2000}
}

@ARTICLE{Kulkarni2002,
  author={Kulkarni, V.V. and Safonov, M.G.},
  journal={IEEE Transactions on Automatic Control}, 
  title={All multipliers for repeated monotone nonlinearities}, 
  year={2002},
  volume={47},
  number={7},
  pages={1209-1212},
}

@BOOK{Altshuller:13,
author = {D. Altshuller},
title = {Frequency Domain Criteria for Absolute Stability: A Delay-integral-quadratic Constraints Approach},
publisher = {Springer},
year = {2013},
}

@article{Altshuller:11,
author = {D. A. Altshuller},
title = {Delay-integral-quadratic constraints and stability multipliers for systems with {MIMO} nonlinearities},
year = {2011},
journal = {IEEE Transactions on Automatic Control},
volume = {56},
number = {4},
pages = {738--747},
}

@ARTICLE{Carrasco:13,
author={J. Carrasco and W. P. Heath and A. Lanzon},
title={Equivalence between classes of multipliers for slope-restricted nonlinearities},
journal={Automatica},
year={2013},
volume={49},
number={6},
pages={1732-1740},
}

@ARTICLE{Brockett:65,
	author = {Brockett, R.W. and Willems, J.L.},
	title = {Frequency Domain Stability Criteria-Part I},
	year = {1965},
	journal = {IEEE Transactions on Automatic Control},
	volume = {10},
	number = {3},
	pages = {255 – 261}}

@ARTICLE{Kulkarni:2002a,
  author={Kulkarni, V.V. and Safonov, M.G.},
  journal={IEEE Transactions on Automatic Control}, 
  title={Incremental positivity nonpreservation by stability multipliers}, 
  year={2002},
  volume={47},
  number={1},
  pages={173-177},
  keywords={Stability;Navigation;Upper bound;Nonlinear systems;Time varying systems;Sufficient conditions;Acceleration;Performance analysis;Attenuation measurement;Output feedback},
  doi={10.1109/9.981740}}

@article{Yakubovich64,
    author = {Yakubovich, V. A.},
    journal = {Automation and Remote Control},
    volume = {25},
    number = {7},
    pages = {905-916},
    title = {The matrix-inequality method in the theory of the stability of nonlinear
control systems: 1. The absolute stability of forced vibrations},
    year = {1964}
}

@INPROCEEDINGS{Rasvan11,
  author={Răsvan, Vladimir},
  booktitle={15th International Conference on System Theory, Control and Computing}, 
  title={Forced periodic oscillations and almost linear behavior}, 
  year={2011},
  volume={},
  number={},
  pages={1-6},
  keywords={Stability criteria;Oscillators;Frequency domain analysis;Asymptotic stability;Equations;Transfer functions},
  doi={}}

@BOOK{Yoshizawa66,
    author = {T. Yoshizawa},
    title = {Stability theory by {L}iapunov’s second method},
    publisher = {Mathematical Society of Japan},
    year = {1966}
    }

@BOOK{Burton85,
    author = {T. A. Burton},
    title = {Stability and periodic solutions of ordinary and functional differential equations},
    publisher = {Academic Press},
    year = {1985}
    }

@article{Haidar25,
    author = {Haidar, I. and  Mason, P.},
    title = {Converse {L}yapunov results for uniform stability properties.},
    journal = {J. Optim. Theory Appl.},
    volume = {206},
    number = {39},
    year = {2025},
    doi = {https://doi.org/10.1007/s10957-025-02698-1}
}

@book{Karafyllis11,
    author = {Karafyllis, I. and Jiang, Z.P.},
    title = {Stability and stabilization of nonlinear systems},
    publisher = {Springer Science \& Business Media},
    year = 2011
    }

@book{Halanay66,
    author = {A. Halanay},
    title = {Differential equations: stability, oscillations, time lags},
    publisher = {Academic Press},
    year = {1966}
}

@article{Su23,
title = {On the necessity and sufficiency of discrete-time {O}’{S}hea–{Z}ames–{F}alb multipliers},
journal = {Automatica},
volume = {150},
pages = {110872},
year = {2023},
issn = {0005-1098},
author = {Lanlan Su and Peter Seiler and Joaquin Carrasco and Sei Zhen Khong},
}

@ARTICLE{Zames66a,
  author={Zames, G.},
  journal={IEEE Transactions on Automatic Control}, 
  title={On the input-output stability of time-varying nonlinear feedback systems. {P}art one: conditions derived using concepts of loop gain, conicity, and positivity}, 
  year={1966},
  volume={11},
  number={2},
  pages={228-238},
  doi={10.1109/TAC.1966.1098316}}

@article{Willems72,
    author = {J. C. Willems},
    title = {Dissipative dynamical systems. {P}art {I}: General theory},
    journal = {Arch. Rational Mech. Anal.},
    volume = {45},
    pages = {321-351},
    year = {1972}
}

@article{Bertolin25,
author = {A. L. J. Bertolin and G. Valmorbida and R. C. L. F. Oliveira and P. L. D. Peres},
title = {Output-feedback controllers with guaranteed $\mathcal{L}$2-gain for continuous-time {L}ur'e systems using noncausal {Z}ames–{F}alb multipliers},
journal = {International Journal of Control},
volume = {98},
number = {5},
pages = {1177--1190},
year = {2025},
publisher = {Taylor \& Francis}
}

@book{Sprott03,
    author = {J. C. Sprott},
    title = {Chaos and Time Series Analysis},
    publisher = {Oxford University Press},
    year = {2003}
}

@article{MIRANDAVILLATORO2018,
title = {Analysis of Lur’e dominant systems in the frequency domain},
journal = {Automatica},
volume = {98},
pages = {76-85},
year = {2018},
issn = {0005-1098},
doi = {https://doi.org/10.1016/j.automatica.2018.09.007},
author = {Félix A. Miranda-Villatoro and Fulvio Forni and Rodolphe J. Sepulchre},
keywords = {Lur’e systems, Dissipativity, Circle criterion, Multistability, Limit cycles},
}

@ARTICLE{Drummond2025,
  author={Drummond, Ross and Guiver, Chris and Turner, Matthew C.},
  journal={IEEE Transactions on Automatic Control}, 
  title={A note on incremental stability of externally positive Lurie systems}, 
  year={2025},
  volume={},
  number={},
  pages={1-8},
  keywords={Kalman filters;Vectors;Thermal stability;Eigenvalues and eigenfunctions;Transfer functions;Stability criteria;Numerical stability;Nonlinear systems;Linear systems;Training;Lurie systems;nonlinear systems;robust control;stability of nonlinear systems},
  doi={10.1109/TAC.2025.3615246}}

@ARTICLE{Su2025,
  author={Su, Lanlan and Khong, Sei Zhen},
  journal={IEEE Transactions on Automatic Control}, 
  title={An Input-Output Approach to Incremental Feedback Stability}, 
  year={2025},
  volume={},
  number={},
  pages={1-8},
  keywords={Stability criteria;Circuit stability;Trajectory;Integrated circuit interconnections;Indexes;Training;Synchronization;Steady-state;Sensitivity;Perturbation methods;Incremental analysis;incremental passivity;input-output approach;well-posedness},
  doi={10.1109/TAC.2025.3608259}}

@article{GENESIO1992,
title = {Harmonic balance methods for the analysis of chaotic dynamics in nonlinear systems},
journal = {Automatica},
volume = {28},
number = {3},
pages = {531-548},
year = {1992},
issn = {0005-1098},
doi = {https://doi.org/10.1016/0005-1098(92)90177-H},
url = {https://www.sciencedirect.com/science/article/pii/000510989290177H},
author = {R. Genesio and A. Tesi},
keywords = {Nonlinear systems, feedback, harmonic analysis, describing function, chaotic dynamics},
abstract = {The paper considers the problem of determining the conditions under which a nonlinear dynamical system can give rise to a chaotic behaviour. On the basis of the harmonic balance principle, which is widely used in the frequency analysis of nonlinear control systems, two practical methods are presented for predicting the existence and the location of chaotic motions. This is formulated as a function of the system parameters, when the system structure is fixed by rather general input-output or state equation models. Several examples of application are presented to show the rather straightforward computations involved in the proposed methods, the kind of results which can be obtained and, due to the heuristic approach to the problem, their corresponding approximation.}
}

@BOOK{Curtain91,
 author = {R. F. Curtain and H. Zwart},
 title = {An introduction to infinite-dimensional linear systems theory},
 publisher = {Springer-Verlag},
 year = {1991},
}

@ARTICLE{Jonsson03,
  author={Jonsson, U.T. and Chung-Yao Kao and Megretski, A.},
  journal={IEEE Transactions on Circuits and Systems I: Fundamental Theory and Applications}, 
  title={Analysis of periodically forced uncertain feedback systems}, 
  year={2003},
  volume={50},
  number={2},
  pages={244-258},
  keywords={Force feedback;Harmonic analysis;Nonlinear systems;Circuit stability;Stability analysis;Optimization methods;Electronic circuits;Amplitude estimation;Nonlinear control systems;Robustness},
  doi={10.1109/TCSI.2002.808218}}



\begin{IEEEbiography}[{\includegraphics[width=1in,height=1.25in,clip,keepaspectratio]{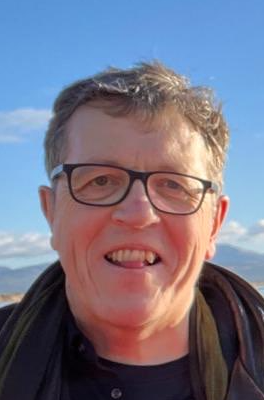}}]{William P. Heath}
 %
 received an M.A. in mathematics from the University of Cambridge, U.K. and both an M.Sc. and Ph.D. in systems and control from U.M.I.S.T., U.K.  He is head of the School of Computer Science and Engineering at Bangor University and chair of the UK Automatic Control Council. He has previously held positions at the University of Manchester, at the University of Newcastle (Australia) and  at Lucas Automotive. 

\end{IEEEbiography}

\begin{IEEEbiography}[{\includegraphics[width=1in,height=1.25in,clip,keepaspectratio]{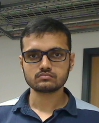}}]{Sayar Das} is a PhD student in Control Systems at The University of Manchester. He received his B.E. in Electrical Engineering from Jadavpur University, in 2019 and his M.S(R) from the Department of Electrical Engineering, Indian Institute of Technology Delhi (IIT Delhi), specializing in Control Systems and Automation, in 2023. His current research interests include absolute stability, passivity and multiplier theory.
\end{IEEEbiography}

\begin{IEEEbiography}[{\includegraphics[width=1in,height=1.25in,clip,keepaspectratio]{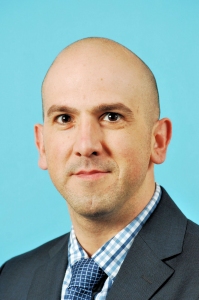}}]{Joaquin Carrasco}
                is a Reader at the Control Systems Centre, Department of Electrical and Electronic Engineering, University of Manchester, UK. He was born in Abarán, Spain, in 1978. He received the B.Sc. degree in physics and the Ph.D. degree in control engineering from the University of Murcia, Murcia, Spain, in 2004 and 2009, respectively. From 2009 to 2010, he was with the Institute of Measurement and Automatic Control, Leibniz Universität Hannover, Hannover, Germany. From 2010 to 2011, he was a research associate at the Control Systems Centre, School of Electrical and Electronic Engineering, University of Manchester, UK. His current research interests include absolute stability, multiplier theory, and robotics applications.
\end{IEEEbiography}

\end{document}